\documentclass{article}

\usepackage[preprint]{neurips_2026}
\setcitestyle{numbers,square,comma}

\usepackage[utf8]{inputenc}
\usepackage[T1]{fontenc}
\usepackage{amsmath,amssymb,amsfonts}
\usepackage{amsthm}
\usepackage{algorithm}
\usepackage{algorithmic}
\usepackage{booktabs}
\usepackage{multirow}
\usepackage{subcaption}
\usepackage{xspace}
\usepackage{listings}
\usepackage{tikz}
\usepackage{pgfplots}
\pgfplotsset{compat=1.18}
\usetikzlibrary{shapes.geometric, arrows.meta, positioning, fit, backgrounds, calc, decorations.pathreplacing}
\usepackage[hidelinks]{hyperref}
\usepackage{enumitem}
\usepackage{float}

\newtheorem{theorem}{Theorem}
\newtheorem{lemma}{Lemma}
\newtheorem{corollary}{Corollary}
\newtheorem{proposition}{Proposition}

\newcommand{\odb}{\textsc{ODB}\xspace}

\title{Online Dynamic Batching with Formal Guarantees for LLM Training}

\author{
  Dian Li\textsuperscript{*,\textdagger} \quad
  Zekun Wang\textsuperscript{*} \quad
  Yaoru Wang \quad
  Jiahong Yan \\
  Tencent \\
  \texttt{\{goodli,zekunwang,yaoruwang,redyan\}@tencent.com} \\
  \textsuperscript{*}Equal contribution. \quad
  \textsuperscript{\textdagger}Corresponding author.
}

\begin{document}

\maketitle

\begin{abstract}
Modern LLM training breaks a core assumption behind offline batch
samplers: the true training cost of a sample is only observable after
preprocessing, augmentation, templating, tokenization, and multimodal
visual-token expansion. Unless one pays for a preprocessing- and
augmentation-dependent length cache,
batch construction is therefore blind to the quantity that determines
padding, memory use, and GPU saturation. We introduce
\textbf{Online Dynamic Batching} (\odb), a DataLoader-side drop-in system
that moves batch formation to this point of accurate observability while
preserving DDP step alignment. We formalize this synchronization
requirement as the \textbf{Distributed Group Alignment Problem} and prove
deadlock-free bounded termination with default join-mode identity coverage
and opt-in non-join sample-quota closure. \odb requires no model, optimizer, or
attention-kernel changes
and is released as \texttt{online-dynamic-batching} with lightweight trainer
adapters. Across public 2B/8B Qwen3-VL runs on UltraChat/LLaVA/ShareGPT4o,
\odb improves literal emitted-sample throughput vs.\ fixed-batch Standard by
\textbf{1.58--2.51$\times$} on single-node Full FT/LoRA and
\textbf{1.71--3.78$\times$} on two-node Full FT, with Standard-comparable
quality; production MM-Mix reaches \textbf{4.43$\times$}. Against GMT/BMT
offline token-budget oracles, \odb is within 15\% on UltraChat/LLaVA and faster
on high-CV ShareGPT4o: \textbf{2.24--2.39$\times$} single-node Full FT/LoRA and
\textbf{3.06--3.69$\times$} two-node Full FT. Together, \odb occupies the online/drop-in regime for
high-heterogeneity LLM fine-tuning: large throughput gains at Standard-comparable
quality, formal DGAP guarantees, and no length-cache precompute or kernel rewrites.
\end{abstract}

\section{Introduction}
\label{sec:intro}

\paragraph{Training-time batching has an observability problem.}
In modern LLM and multimodal fine-tuning stacks, the cost of a sample is not a static dataset attribute: it is realized only after preprocessing, augmentation, chat templating, tokenization, and visual-token expansion.
A sampler that runs before this point either ignores the quantity that determines padding and memory use, or pays for a length cache that is tied to a specific transform/template/cutoff policy.
The natural systems move is therefore to form batches exactly where the true length becomes observable.
The difficulty is that runtime variable-size batching breaks a contract that fixed-batch DDP gets for free: all ranks must execute the same number of gradient-reduction steps.
We measure length heterogeneity by $\mathbf{CV}=\sigma/\mu$, the coefficient of variation of post-pipeline tokenized sample lengths.

\paragraph{Three conditions for efficient training.}
We frame any batching strategy as needing to satisfy \emph{three} conditions simultaneously:
(1)~\emph{spatial efficiency}---padding $\approx 0$;
(2)~\emph{compute saturation}---each step's effective workload (tokens$\,\times\,$FLOPs) saturates the GPU;
(3)~\emph{temporal efficiency}---data prep overlaps GPU compute.
Fixed-batch baselines violate at least one condition on variable-length data: small bs avoids padding but underfills the GPU, while large bs raises work per step by mixing unequal lengths and therefore pays padding or OOM cost.
Sequence packing can remove padding, but for multimodal training it is a model/kernel-level intervention rather than a pure DataLoader-level batching method (\S\ref{sec:related}).
Existing HuggingFace length grouping~\cite{wolf2020transformers}, token-budget~\cite{ott2019fairseq}, and pre-bucketed~\cite{kuchaiev2019nemo,mosaicml2022composer} approaches either require offline preprocessing/augmentation assumptions or use fixed batch sizes; in the high-CV setting, the fixed-batch window that survives the longest examples is often the throughput bottleneck.

\paragraph{The Distributed Group Alignment Problem.}
Moving batch formation to the point of accurate observability creates a distributed synchronization problem.
When each rank independently forms variable-size groups from its local realized lengths, the number of groups naturally differs across ranks, but DDP requires all ranks to call \texttt{AllReduce} the same number of times.
We call the task of aligning these runtime group counts without sample loss or deadlock the \textbf{Distributed Group Alignment Problem} (DGAP).
Prior length-aware batchers and static micro-batch/DDP systems~\cite{wolf2020transformers,ott2019fairseq,kuchaiev2019nemo,mosaicml2022composer,shoeybi2019megatron,paszke2019pytorch,li2020pytorch} avoid this regime by fixing batch composition or sharding statically; ODB's novelty is the runtime DDP-aware DataLoader regime, not another offline ordering heuristic.

\paragraph{Contributions.}
We make four contributions. \textbf{(i)} We introduce \textbf{Online Dynamic Batching (\odb)}, a DataLoader-side drop-in system that observes preprocessing-/augmentation-dependent post-pipeline lengths and requires no model, optimizer, attention-kernel, or length-cache precompute. \textbf{(ii)} We formalize DGAP and give a Max-Based Bidirectional Group Alignment protocol that provides DDP step alignment, strict identity coverage in default join mode, sample-quota closure in opt-in non-join mode, and deadlock-free bounded termination (Theorems~\ref{thm:strict-zero}--\ref{thm:deadlock-free}). \textbf{(iii)} We evaluate against Standard, Sorted, Packing, and offline GMT/BMT/HFG oracle baselines across 2B/8B Full FT and LoRA plus a production MM-Mix case study, showing 1.58--2.51$\times$ single-node and 1.71--3.78$\times$ two-node public throughput, plus 4.43$\times$ in the production case study, while keeping validation/benchmark metrics in the Standard-comparable band. \textbf{(iv)} We identify regimes where online observability matters---augmentation-policy churn, multimodal preprocessing, and high-CV long tails---and release the open-source \texttt{online-dynamic-batching} package with lightweight trainer adapters.

\section{System Design}
\label{sec:design}

\subsection{Architecture Overview}

ODB wraps the PyTorch \texttt{DataLoader} iterator boundary, leaving the Dataset and Model untouched; a lightweight trainer-side integration consumes emitted-sample metadata for accounting, token-level loss scaling, and optional sample-quota stopping (\S\ref{sec:implementation}; Figure~\ref{fig:architecture}).
Workers run with a \emph{null collate} (passing single samples), and a dedicated \emph{collate process} drains the worker queue into a configured grouping buffer (1024 samples in our experiments).
On each round it sorts by length, greedily forms variable-size batches (Section~\ref{sec:batch_sizing}), and synchronizes group counts via a single Gloo \texttt{all\_gather}, then runs split/overflow to align ranks to a common target $T_{\mathrm{grp}}$ (Section~\ref{sec:alignment}).
Aligned groups pass through the configured \texttt{collate\_fn} and are placed in the output queue; under-filled slots are padded with \texttt{IDLE\_DATA} sentinels that the main process transparently skips, leaving real batches step-aligned across active ranks.
The collate process owns its own Gloo group, fully isolated from the main process's NCCL group.

\begin{figure}[t]
    \centering
    \begin{subfigure}[t]{0.48\textwidth}
        \centering
        \scalebox{0.78}{
        \begin{tikzpicture}[
            node distance=0.3cm and 1.4cm,
            wbox/.style={rectangle, draw, rounded corners=2pt, fill=blue!8,
                minimum width=2.4cm, minimum height=0.5cm,
                font=\tiny, align=center, line width=0.5pt},
            cbox/.style={rectangle, draw, rounded corners=2pt, fill=blue!8,
                minimum width=2.4cm, minimum height=0.5cm,
                font=\tiny, align=center, line width=0.5pt},
            tbox/.style={rectangle, draw, rounded corners=2pt, fill=red!8,
                minimum width=2.4cm, minimum height=0.4cm,
                font=\tiny, align=center, line width=0.5pt},
            arr/.style={-{Stealth[length=2.5pt]}, line width=0.4pt},
            comm/.style={<->, dashed, line width=0.5pt},
            lbl/.style={font=\tiny, text=gray!60!black},
        ]
        \node[wbox] (w0) {Worker $\times n_w$\\[-1pt]
            {\tiny augment+template $\to$ tokenize}};
        \node[cbox, below=of w0] (c0) {collate\_fn($B$)\\[-1pt]
            {\tiny pad to $l_{\max}$}};
        \node[tbox, below=0.5cm of c0] (t0) {Trainer};
        \draw[arr] (w0) -- (c0);
        \draw[arr] (c0) -- node[right, lbl] {fixed bs=$B$} (t0);
        \node[font=\tiny\bfseries, above=0.1cm of w0] {Rank 0};

        \node[wbox, right=1.6cm of w0] (w1) {Worker $\times n_w$\\[-1pt]
            {\tiny augment+template $\to$ tokenize}};
        \node[cbox, below=of w1] (c1) {collate\_fn($B$)\\[-1pt]
            {\tiny pad to $l_{\max}$}};
        \node[tbox, below=0.5cm of c1] (t1) {Trainer};
        \draw[arr] (w1) -- (c1);
        \draw[arr] (c1) -- node[right, lbl] {fixed bs=$B$} (t1);
        \node[font=\tiny\bfseries, above=0.1cm of w1] {Rank 1};

        \draw[comm, red!60!black] (t0) -- node[above, font=\tiny\itshape,
            text=red!50!black] {NCCL} (t1);
        \end{tikzpicture}}
        \caption{Standard DataLoader}
        \label{fig:arch-standard}
    \end{subfigure}
    \hfill
    \begin{subfigure}[t]{0.48\textwidth}
        \centering
        \scalebox{0.78}{
        \begin{tikzpicture}[
            node distance=0.3cm and 1.4cm,
            wbox/.style={rectangle, draw, rounded corners=2pt, fill=blue!8,
                minimum width=2.4cm, minimum height=0.5cm,
                font=\tiny, align=center, line width=0.5pt},
            cbox/.style={rectangle, draw, rounded corners=2pt, fill=orange!12,
                minimum width=2.4cm, minimum height=0.5cm,
                font=\tiny, align=center, line width=0.5pt},
            tbox/.style={rectangle, draw, rounded corners=2pt, fill=red!8,
                minimum width=2.4cm, minimum height=0.4cm,
                font=\tiny, align=center, line width=0.5pt},
            arr/.style={-{Stealth[length=2.5pt]}, line width=0.4pt},
            comm/.style={<->, dashed, line width=0.5pt},
            lbl/.style={font=\tiny, text=gray!60!black},
        ]
        \node[wbox] (w0) {Worker $\times n_w$\\[-1pt]
            {\tiny augment+template $\to$ tokenize}};
        \node[cbox, below=of w0] (c0) {Collate Process\\[-1pt]
            {\tiny sort $\to$ group $\to$ align}};
        \node[tbox, below=0.5cm of c0] (t0) {Trainer};
        \draw[arr] (w0) -- node[right, lbl] {single samples} (c0);
        \draw[arr] (c0) -- node[right, lbl, text=orange!70!black,
            font=\tiny\bfseries] {bs=3} (t0);
        \node[font=\tiny\bfseries, above=0.1cm of w0] {Rank 0};

        \node[wbox, right=1.6cm of w0] (w1) {Worker $\times n_w$\\[-1pt]
            {\tiny augment+template $\to$ tokenize}};
        \node[cbox, below=of w1] (c1) {Collate Process\\[-1pt]
            {\tiny sort $\to$ group $\to$ align}};
        \node[tbox, below=0.5cm of c1] (t1) {Trainer};
        \draw[arr] (w1) -- node[right, lbl] {single samples} (c1);
        \draw[arr] (c1) -- node[right, lbl, text=orange!70!black,
            font=\tiny\bfseries] {bs=2} (t1);
        \node[font=\tiny\bfseries, above=0.1cm of w1] {Rank 1};

        \draw[comm, orange!70!black] (c0) -- node[above, font=\tiny\itshape,
            text=orange!50!black] {Gloo} (c1);

        \draw[comm, red!60!black] (t0) -- node[above, font=\tiny\itshape,
            text=red!50!black] {NCCL} (t1);
        \end{tikzpicture}}
        \caption{ODB DataLoader}
        \label{fig:arch-odb}
    \end{subfigure}
    \caption{Architecture. (a)~Standard collates inside workers with a fixed bs on all ranks. (b)~ODB collates in a dedicated process, groups by length, and aligns group counts via Gloo; each rank may have a different local bs while real batches remain step-aligned across active ranks.}
    \label{fig:architecture}
\end{figure}

\subsection{Dynamic Batch Sizing}
\label{sec:batch_sizing}
ODB keeps per-batch token count roughly constant via a user-specified budget $L_{\max}$.
For a realized post-pipeline sample length $l$, $B(l)$ denotes the target local group size:
\begin{equation}
    B(l) = \max\left(\left\lfloor L_{\max}/l \right\rfloor, 1\right) \quad\text{so that}\quad B(l)\,l \approx L_{\max}.
    \label{eq:batch_size}
\end{equation}
Within each rank, buffered samples are sorted ascending and iterated from longest to shortest with a running group-size threshold $t$ (initially 1): each sample is appended to the current group, and when its size reaches $t$ the group is finalized and $t \leftarrow B(l)$ for the last-added (shortest) sample.
Successive groups naturally hold more samples since shorter $l$ yields larger $B(l)$, so per-group token counts converge to $L_{\max}$ (worked example in Appendix~\ref{app:grouping_example}).

\subsection{Cross-Rank Alignment, Termination, and Loss Scaling}
\label{sec:alignment}

\paragraph{Alignment target and bidirectional adjustment.}
Let $W$ be world size and $N$ the dataset identity count/full-epoch quota. Different ranks generally produce different group counts $G_r$. ODB computes a global group-count target $T_{\mathrm{grp}} = \max(\min(\max_{r:\,G_r>0} G_r, C_{\min}^{+}, S_{\min}^{+}), 1)$ over active ranks, where $C_{\min}^{+}$ is the minimum positive output-slot capacity on any active rank and $S_{\min}^{+}$ is the minimum positive buffered-sample count on any active rank. It then \emph{splits} groups by extracting singleton samples when $G_r<T_{\mathrm{grp}}$, or \emph{overflows} groups and recirculates extras when $G_r>T_{\mathrm{grp}}$, until every active rank reports $T_{\mathrm{grp}}$ groups. The full algorithm appears in Appendix~\ref{app:cross_rank_protocol}; \texttt{odb\_join\_mode} only changes termination: default join gives strict per-iteration identity coverage by draining outstanding sampler views before global completion, while opt-in non-join gives no-leak quota closure.

\begin{theorem}[Strict Zero-Discard, default join mode]
\label{thm:strict-zero}
Under \texttt{odb\_join\_mode = true} with \texttt{DistributedSampler(drop\_last=False)}, each rank emits its remaining and outstanding sampler views before advertising local finish, and the collate subprocess keeps participating in the shared Gloo protocol until all ranks advertise local finish. Therefore the emitted sampler-view multiset equals the sampler multiset $M=W\lceil N/W\rceil$, its identity projection covers all $N$ dataset identities, and per-iteration $\eta_{\text{logical}}=0$ by construction.
\end{theorem}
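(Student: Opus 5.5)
The proof is a conservation argument: I track every sampler view from the moment \texttt{DistributedSampler} issues it until it is emitted in an aligned group. I fix one iteration (epoch). First I record what the sampler guarantees when \texttt{drop\_last=False}. It pads the index list cyclically to length $M=W\lceil N/W\rceil$ and deals it round-robin, so each rank $r$ receives a list of $\lceil N/W\rceil$ views. The union of these lists is a multiset of size $M$ whose identity projection contains every identity in $\{0,\dots,N-1\}$, since $M\ge N$ and the padding only repeats identities. Given this, the coverage and $\eta_{\text{logical}}=0$ claims reduce to one statement: the emitted multiset on each rank equals that rank's sampler list. I then sum over ranks.

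The per-rank statement is an invariant. At every point of the collate process, each view issued to rank $r$ is in exactly one of four places:
\begin{itemize}
\item still in a worker, or queued between worker and collate process (\emph{outstanding});
\item in the grouping buffer;
\item in a formed or aligned group waiting to be emitted;
\item already emitted.
\end{itemize}
I check that each operation preserves this partition, going through them in the order of Appendix~\ref{app:cross_rank_protocol}:
\begin{itemize}
\item Greedy grouping (Section~\ref{sec:batch_sizing}) only partitions the sorted buffer.
\item \emph{Split} moves singleton samples out of existing groups into new groups.
\item \emph{Overflow} sends extra groups back to the buffer.
\item \texttt{IDLE\_DATA} sentinels are not sampler views, so skipping them in the main process removes nothing.
\end{itemize}
No operation creates or deletes a view; each one only changes which place it sits in. The main process skips only sentinels, so every real aligned group reaches the trainer.

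Termination is where join mode matters. The hypothesis says a rank advertises local finish only after its sampler is exhausted and its outstanding and buffered views have all been emitted, which is the drain step. After that, the rank keeps running the Gloo \texttt{all\_gather} and alignment rounds, contributing $G_r=0$ and filling its slots with sentinels, until every rank has advertised. Because of this, a finished rank never forces another rank to drop or truncate views it still holds. At global completion the first three places in the partition are empty on every rank, so emitted equals issued. Summing over ranks gives the multiset $M$, and the logical discard per iteration is $0$.

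The main obstacle is showing that a rank actually gets from ``sampler exhausted'' to ``buffer empty''. The alignment target caps the group count at $C_{\min}^{+}$ and $S_{\min}^{+}$, and overflow recirculates samples, so a rank could in principle carry leftovers indefinitely. I would use the deadlock-free bounded-termination result (Theorem~\ref{thm:deadlock-free}) as the progress guarantee. Since $T_{\mathrm{grp}}\ge1$, every active rank emits at least one real group per round, so its buffered count strictly decreases once no new views arrive, and the drain finishes in finitely many rounds. If that theorem cannot be invoked at this point, I would prove the decrease directly from $T_{\mathrm{grp}}\ge1$ and the fact that the collective always completes, because every rank, finished or not, enters every round.
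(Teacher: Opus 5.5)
Your proposal is correct and follows essentially the same route as the paper: the No-Leak partition invariant (Lemma~\ref{lem:no_leak}) together with the join-mode drain-then-signal completion predicate ($\text{remaining}_r=\text{outstanding}_r=0$ on every rank) forces $\text{emitted}_r=\lceil N/W\rceil$ on each rank. Summing gives the full view multiset of size $M$, its identity projection covers all $N$ identities, and $\eta_{\text{logical}}=0$. The paper handles the liveness step you flag as the main obstacle simply by citing Theorem~\ref{thm:deadlock-free}; if you argue it directly instead, your ``every active rank emits in every round'' claim must be weakened to allow finitely many \texttt{skip\_output} rounds (Lemma~\ref{lem:progress}).
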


\begin{theorem}[No-Leak \& Sample-Quota Closure, opt-in non-join]
\label{thm:zero-discard}
Let $S_{\max}$ be the largest realized global emit count of one aligned trainer step, and let $S_{\text{emit}}$ be the global cumulative trainer-side emitted-sample count at termination. Under non-join termination, every sampler view remains in exactly one protocol state (emitted, collate buffer, worker queue, or sampler pending), and the trainer-side control logic chains logical iterations until $N \le S_{\text{emit}} \le N+S_{\max}$. Hence $\eta_{\text{quota}} := \max(0,1-S_{\text{emit}}/N)=0$. This is a cumulative sample-count guarantee; non-join identity audits are reported in App.~\ref{app:eta_identity}.
\end{theorem}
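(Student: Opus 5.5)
The argument splits into two parts: a conservation invariant on sampler views, and a termination-time counting bound. Both treat the protocol of Appendix~\ref{app:cross_rank_protocol} as a state machine. Every sampler view produced by \texttt{DistributedSampler} occupies exactly one of four states: sampler pending, worker queue, collate buffer, or emitted.

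\textbf{Conservation.} I would prove by induction over protocol transitions that each transition moves a view between states and never duplicates or deletes it. The base case is trivial: every view starts in sampler pending. The transitions to check are the following.
\begin{itemize}
\item A worker fetch moves a view from pending to the worker queue.
\item The collate drain moves views from the worker queue to the buffer.
\item Grouping and split act as permutations inside the buffer: split only extracts singletons into new groups.
\item Overflow returns the extra views to the buffer, so recirculation stays within the buffer state.
\item Emission of an aligned group moves all of its views from the buffer to emitted.
\end{itemize}
\texttt{IDLE\_DATA} sentinels carry no sampler view, so skipping them in the main process cannot lose a view. This invariant gives the ``exactly one protocol state'' (no-leak) clause directly. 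It also shows that $S_{\text{emit}}$ counts distinct emitted view occurrences across all logical iterations.

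\textbf{Quota closure.} Next I would formalize the trainer-side control logic. After each aligned step, the ranks all-reduce the step's global emit count $s_k\le S_{\max}$ and maintain the cumulative sum $S_k=\sum_{j\le k}s_j$. Termination happens exactly at the first $k^\ast$ with $S_{k^\ast}\ge N$. When a logical iteration exhausts its sampler before reaching the quota, the control logic starts the next iteration instead of stopping.

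The lower bound $S_{\text{emit}}\ge N$ holds by the stopping rule. The upper bound follows from minimality of $k^\ast$: $S_{k^\ast-1}<N$, so $S_{k^\ast}=S_{k^\ast-1}+s_{k^\ast}<N+S_{\max}$. Since $S_{\text{emit}}\ge N$, we get $1-S_{\text{emit}}/N\le 0$, and hence $\eta_{\text{quota}}=0$.

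\textbf{Main obstacle: liveness.} The hard step is showing that the threshold is actually reached, i.e.\ that the process does not stall with $S_k<N$ forever. I would argue that each aligned step emits at least one real sample globally. Because $T_{\mathrm{grp}}\ge 1$ and $T_{\mathrm{grp}}\le S_{\min}^{+}$, every active rank has enough buffered views to fill its target with real groups, so $s_k\ge 1$ whenever some rank is active. When all ranks' sampler views are exhausted, chaining a new iteration refills the pending state, so ranks become active again. Together these show $S_k$ is strictly increasing, and the threshold is reached after at most $N$ steps.

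I would invoke the deadlock-free bounded termination result (Theorem~\ref{thm:deadlock-free}) to guarantee that each Gloo round completes. I would also note that the quota check must be made collectively, so that every rank agrees on $k^\ast$ and no rank waits on a collective that the others have abandoned.
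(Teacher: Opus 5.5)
Your conservation induction and stopping-rule argument are the paper's proof. The no-leak clause is Lemma~\ref{lem:no_leak}: fetch, drain and emit each move views between two of $R_r,Q_r,B_r,E_r$, split and overflow recirculation stay inside $B_r$, and skip rounds are no-ops. Closure is the trainer-side counter stopping at the first crossing of $N$, with the crossing step bounding the overshoot by $S_{\max}$. Your minimality phrasing makes that explicit and even yields the strict bound $S_{\text{emit}}<N+S_{\max}$.

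Your liveness paragraph goes beyond the paper, which takes the chained loop reaching $N$ for granted, and as written it does not close. Per-step progress ($s_k\ge 1$ on every aligned step) does not show that aligned steps keep occurring. In non-join mode a logical iteration ends as soon as \emph{any} rank reports $-1$, not when all ranks are exhausted as you write. Views then sitting in $Q_r\uplus B_r$ on the other ranks are never delivered (Lemma~\ref{lem:logical_bound}). If an iteration could end before its first emission round, re-chaining would repeat indefinitely with $S$ frozen. A rank with an empty shard would do exactly that, so the equal-shard premise (deliberately violated in the empty-rank audit of App.~\ref{app:multinode}) has to enter the argument.

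The missing step is per-iteration progress. A rank reports $-1$ only after finishing its quota, so by the no-leak invariant it has emitted its entire shard of $q=\lceil N/W\rceil$ views. Moreover $q\ge 1$ because \texttt{DistributedSampler(drop\_last=False)} gives every rank an equal nonempty shard. Each logical iteration therefore adds at least $q$ to $S$. Since $Wq\ge N$, at most $W$ chained iterations are needed, each finite by Theorem~\ref{thm:deadlock-free}.
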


\begin{corollary}[Empirical sample-quota closure]
\label{cor:eta_zero}
$\eta_{\text{quota,emp}} = 0$ across all \textbf{18 evaluation runs} and 6 synthetic distributions (terminal epoch rounded to four decimals $\in \{1.0000,\, 1.0001\}$, with only final-step overshoot; App.~\ref{app:eta_measurement}).
\end{corollary}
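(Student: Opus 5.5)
The corollary is an empirical statement rather than a purely mathematical one. The plan therefore has two parts: reduce it to Theorem~\ref{thm:zero-discard}, then verify the remaining hypotheses by measurement.

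First, I invoke Theorem~\ref{thm:zero-discard} for each of the 18 evaluation runs and each of the 6 synthetic distributions, all of which use non-join termination. I need to check that the empirical counter $S_{\text{emit}}$ logged by the trainer-side adapter is exactly the quantity in the theorem. That means it must count only real emitted samples and skip the \texttt{IDLE\_DATA} sentinels, just as the main process does. It must also be cumulative across the chained logical iterations. Granting this identification, the theorem gives $N \le S_{\text{emit}} \le N+S_{\max}$ in every run. Hence the terminal epoch fraction $S_{\text{emit}}/N$ lies in $[1,\,1+S_{\max}/N]$, and $\eta_{\text{quota,emp}}=\max(0,1-S_{\text{emit}}/N)=0$ follows immediately from the lower bound.

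Second, I account for the reported four-decimal values $\{1.0000,1.0001\}$ by bounding $S_{\max}/N$ per run. One aligned step emits at most $W\cdot\max_r(\text{local group size})$ samples. Since $B(l)\le \lfloor L_{\max}/l_{\min}\rfloor$ by Eq.~\eqref{eq:batch_size}, this gives $S_{\max}\le W\lfloor L_{\max}/l_{\min}\rfloor$. Substituting each run's $W$, $L_{\max}$, minimum realized length, and dataset size $N$ should give $S_{\max}/N<1.5\times10^{-4}$, so the rounded value is $1.0000$ or $1.0001$. The phrase ``only final-step overshoot'' then corresponds to the fact that the upper bound is attained at the last chained step. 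This comparison is done run by run, using the measurements in App.~\ref{app:eta_measurement}.

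The main obstacle is the first step: matching the implementation to the theorem's model. In particular, I must confirm that no sampler view is lost to worker teardown or to a collate-buffer flush at a logical-iteration boundary, so that the protocol-state invariant of Theorem~\ref{thm:zero-discard} actually holds in the code. My first approach would be a conservation audit in each run. The check is that emitted, buffered, queued, and pending counts sum to the sampler total at every iteration boundary. Together with the logged terminal $S_{\text{emit}}$, this turns the corollary into a finite check over the 24 configurations.
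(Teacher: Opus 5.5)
\textbf{The gap is in your second step.} You claim that substituting $W$, $L_{\max}$, $l_{\min}$ and $N$ yields $S_{\max}/N<1.5\times10^{-4}$. The paper's own data contradict this. $S_{\max}$ is at least the mean global emit count per update, which Tables~\ref{tab:throughput_decomp_8b}--\ref{tab:throughput_decomp_2b} report as 20--178 samples for the ODB rows. By comparison, $1.5\times10^{-4}N$ is only about 31 samples for UltraChat, 24 for LLaVA and 8--9 for ShareGPT4o. Your a priori bound $W\lfloor L_{\max}/l_{\min}\rfloor$ is larger still. For the synthetic distributions ($N=1000$) the requirement becomes $S_{\max}<0.15$, which no step that emits a sample can satisfy.

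So the upper bound $N+S_{\max}$ of Theorem~\ref{thm:zero-discard} is far too coarse to produce terminal epochs of $1.0000$ or $1.0001$. Reading ``only final-step overshoot'' as that bound being attained would actually contradict the statement. It would put the terminal epoch at roughly $1+3.5\times10^{-4}$ or more on every public cell, and at $1.001$ or more on the synthetic runs.

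\textbf{What the paper actually does.} The corollary is Theorem~\ref{thm:zero-discard} instantiated on logged terminal trainer state. The lower bound $S_{\text{emit}}\ge N$ gives $\eta_{\text{quota,emp}}=0$. The four-decimal values ($1.0000000$, $1.0000334$, $1.0000735$) are measurements, not consequences of a bound. They correspond to overshoots of only a few samples: about 7 at UltraChat scale and about 4 at ShareGPT4o scale. That is the size of the sampler tail padding $W\lceil N/W\rceil-N<W$. This is consistent with stopping once each rank has emitted exactly $\lceil N/W\rceil$ views, not after an arbitrary final step.

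\textbf{How to repair it.} If you want to explain the rounding rather than read it off the logs, that exact count is the right tool. It comes from the join-mode drain of Theorem~\ref{thm:strict-zero}, or from the per-rank emit counts audited in App.~\ref{app:eta_identity}, not from $S_{\max}$. Relatedly, do not assume all 18 runs use non-join termination. The paper's main-table ODB rows run in default join mode, and there $S_{\text{emit}}=W\lceil N/W\rceil\ge N$ follows from Theorem~\ref{thm:strict-zero}, not from the trainer-side chaining of Theorem~\ref{thm:zero-discard}. Your first step's conclusion $\eta_{\text{quota,emp}}=0$ survives either way.
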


\paragraph{Unified loop and termination.}
Let \texttt{pf} and \texttt{nw} denote the DataLoader prefetch factor and worker count, and let $D=\max(\texttt{pf}{\cdot}\texttt{nw},\texttt{buffer\_size})$ be the per-rank outstanding-depth envelope. A single \texttt{all\_gather} per round exchanges \texttt{[idx\_budget$_r$, n\_groups$_r$, sizes$_r$]} and, when token-level loss scaling is enabled, \texttt{tokens$_r$}, with \texttt{n\_groups$_r$} $\in\{n{>}0, 0, -1\}$ encoding ``produced $n$ groups'', ``insufficient data'', or ``finished''. In default join mode, ranks drain outstanding sampler views before the shared completion signal; in opt-in non-join termination, a logical \texttt{DistributedSampler} iteration ends when any rank emits $-1$ and the trainer-side control logic launches subsequent logical iterations until $S_{\text{emit}}\ge N$ (per early stop, at most $W{\cdot}D$ already-fetched sampler views are not delivered to the trainer; Lemma~\ref{lem:logical_bound}). Exact loss scaling is a separate accounting option: a deterministic all-rank predicate can trigger a second \texttt{all\_gather} to re-broadcast post-alignment token counts.

\begin{theorem}[Bounded Termination and Deadlock-Freedom]
\label{thm:deadlock-free}
Each logical iteration terminates in $O(N/W) + O(D)$ rounds and no rank blocks on \texttt{all\_gather}, given a finite sampler and uniform-call invariant (Appendix~\ref{app:proofs_uniform}).
\end{theorem}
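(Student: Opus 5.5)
The argument separates into two claims. Deadlock-freedom comes from the uniform-call invariant. Bounded termination comes from a potential function that strictly decreases on every round where no rank has finished.

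\textbf{Deadlock-freedom.} I would argue by induction on rounds that every active collate subprocess calls \texttt{all\_gather} the same number of times, in the same order. The round-$k$ payload \texttt{[idx\_budget$_r$, n\_groups$_r$, sizes$_r$]} (plus \texttt{tokens$_r$} when enabled) is gathered by everyone. Every later decision is a deterministic function of this identical gathered vector. These decisions are:
\begin{itemize}[leftmargin=*]
\item the target $T_{\mathrm{grp}}$;
\item the split/overflow actions, whose local effect differs per rank but whose \emph{control flow} (whether another round follows) does not;
\item the predicate for the optional second \texttt{all\_gather} that re-broadcasts token counts;
\item the termination signal (some rank reports $-1$ in non-join mode, all ranks report $-1$ in join mode).
\end{itemize}
So all ranks agree on whether a $(k+1)$-st call occurs, which is the uniform-call invariant of Appendix~\ref{app:proofs_uniform}.

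A rank that has locally finished still participates. In join mode the collate subprocess keeps answering with $-1$ until all ranks advertise finish, as in Theorem~\ref{thm:strict-zero}. A rank with insufficient data answers $0$, not silence. Hence no collective is ever entered by a strict subset of ranks, and no rank blocks. The Gloo group is private to the collate process, so it cannot interleave with the trainer's NCCL collectives.

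\textbf{Termination bound.} I would charge every round to one of two events.

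In a productive round some rank has $G_r>0$. Then $T_{\mathrm{grp}}\ge 1$, and after split/overflow every active rank emits exactly $T_{\mathrm{grp}}\ge 1$ groups. Each group holds at least one sampler view, and overflowed extras are recirculated rather than counted. So the number of sampler views not yet emitted on that rank drops by at least one. The sampler assigns each rank $\lceil N/W\rceil$ views per logical iteration. Recirculated views were already fetched, so they never add new work. Therefore productive rounds number at most $\lceil N/W\rceil = O(N/W)$.

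The remaining rounds are those where some rank reports $0$ (waiting to fill its buffer) or is already finished. These are bounded by the outstanding-depth envelope $D$. Between productive rounds a rank can report $0$ only while prefetched views are in flight, and at most $D$ views are ever outstanding per rank. Once the sampler is exhausted, the drain phase empties at most $D$ buffered or outstanding views. The $S_{\min}^{+}$ clamp on $T_{\mathrm{grp}}$ keeps that rank productive at one view per round. Summing the two contributions gives $O(N/W)+O(D)$ rounds.

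In non-join mode the iteration may instead end early when any rank sends $-1$. This can only shorten it, and Lemma~\ref{lem:logical_bound} accounts for the undelivered $\le W D$ views.

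\textbf{Main obstacle.} The hardest step is ruling out livelock: an unbounded run of rounds in which all ranks report $0$ without finishing. A finite sampler alone does not exclude this, because a worker might be slow without ever delivering. I would handle it by assuming the standard DataLoader liveness property: every fetched index is eventually delivered to the collate queue. Then each $0$-round either follows a delivery or precedes exhaustion. I would also verify that the $C_{\min}^{+}$ clamp cannot force $T_{\mathrm{grp}}$ to $0$, which the outer $\max(\cdot,1)$ guarantees.
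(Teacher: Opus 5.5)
Your proposal follows the paper's proof essentially step for step. Deadlock-freedom comes by induction from the uniform-call invariant (Lemma~\ref{lem:uniform_gather}), since the optional second \texttt{all\_gather} and the mode-specific termination predicate are pure functions of the identical gathered tensor; bounded termination comes from at most $\lceil N/W\rceil$ emitting rounds via the per-rank quota plus an $O(D)$ allowance for non-emitting rounds, as in Theorem~\ref{thm:bounded_term}. Your explicit DataLoader-liveness assumption is a fair sharpening of the paper's terse ``skip rounds are finite'' step. One fix, though: count emitting rounds with $\max_r$ of the per-rank unemitted counts, which drops by one per round only because every unfinished rank emits (the paper's ``each unfinished active rank emits'' assertion); one rank's decrement alone, with ranks active in staggered rounds, only gives $W\lceil N/W\rceil$.
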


Appendix~\ref{app:proofs} gives the per-rank transition rules and proofs for sample-quota closure and bounded termination; Appendix~\ref{app:state_machines} visualizes the non-join state machine; Appendix~\ref{app:join_tradeoff} proves the default join-mode identity contract and reports its throughput cost. The coordination channel uses a dedicated Gloo group inside the collate subprocess (isolated from NCCL), $\sim$128\,KB per round at $W{=}8$---orders of magnitude below gradient \texttt{AllReduce}---and overlaps GPU compute ($<2\%$ overhead). Appendix~\ref{app:multinode} additionally validates empty-rank deadlock-freedom and reports two-node 2B/8B Full FT results.

\paragraph{Loss scaling.}
ODB's per-rank batches differ in token counts $t_r$, so naive DDP averaging $\frac{1}{W}\sum_r \bar{\mathcal{L}}_r$ is a biased estimate of the per-token reference loss $\mathcal{L}^{\star}=\frac{1}{T_{\mathrm{tok}}}\sum_{r,i,k}\ell_{r,i,k}$ ($T_{\mathrm{tok}}{=}\sum_r t_r$).
Pre-scaling each rank's loss by $W \cdot w_r$ makes DDP's post-averaging output equal $\sum_r w_r \bar{\mathcal{L}}_r$; the unique weight that recovers $\mathcal{L}^{\star}$ bit-precisely is the \emph{token-level} weight
\begin{equation}
w_r = \frac{t_r}{T_{\mathrm{tok}}},\qquad \text{so}\quad \sum_r w_r \bar{\mathcal{L}}_r = \mathcal{L}^{\star}.
\label{eq:exactness}
\end{equation}
Sample-level weighting instead weights ranks by sample count; it matches the per-token reference $\mathcal{L}^{\star}$ only when the average tokens per sample $t_r/n_r$ is identical across ranks, a condition ODB's group-of-groups alignment generally does not impose. All main experiments therefore use token-level scaling; App.~\ref{app:loss_scaling} gives the full derivation and approximate-vs-exact comparison.

\subsection{API and Packaging}
\label{sec:implementation}
ODB is implemented as an in-place PyTorch \texttt{DataLoader} wrapper with no changes to the Model or Dataset. The reference LLaMA-Factory integration consumes ODB step metadata for emitted-sample accounting and token-level loss scaling; the same metadata interface can be ported to other Trainer frameworks. ODB is released as a standalone open-source Python package (\texttt{online-dynamic-batching}).

\section{Evaluation}
\label{sec:eval}

\subsection{Experimental Setup}

\paragraph{Hardware.}
Unless labeled otherwise, the main experiments use one 8$\times$H20 node (96\,GB/GPU), DeepSpeed ZeRO-2, and bf16. Labeled multi-node experiments use two 8$\times$H20 nodes (16 GPUs total) with the same ZeRO-2/bf16 training stack; these results and the empty-rank audit appear in Appendix~\ref{app:multinode}.

\paragraph{Datasets and models.}
We evaluate on three public datasets spanning text and multimodal modalities, reporting the CV defined in Section~\ref{sec:intro}:
\textbf{UltraChat-200K}~\cite{ding2023ultrachat} (text-only SFT, 208K, CV=0.48),
\textbf{LLaVA-150K}~\cite{liu2024visual} (multimodal, 158K, CV=0.29), and
\textbf{ShareGPT4o}~\cite{li2024llavaonevision} (multimodal, 57K, CV=1.00; the GPT-4o-curated subset distributed with LLaVA-OneVision), plus six synthetic distributions for correctness audits.
We use \textbf{Qwen3-VL-2B/8B-Instruct} checkpoints~\cite{bai2025qwen3vl} under both Full FT and LoRA (rank=8, target=all); earlier Qwen-VL reports document the lineage of the released stack~\cite{bai2025qwen25vl,bai2023qwen}.
Per-dataset \texttt{cutoff\_len} is above the observed maximum (UltraChat 8192/Max 4471; LLaVA 2048/Max 1260; ShareGPT4o 16384/Max 12110), ensuring \emph{zero truncation} and shared across methods (Table~\ref{tab:length_stats}).
Standard, Sorted, and ODB are mathematically insensitive to \texttt{cutoff\_len} above the longest realized sample, whereas Packing's per-step memory cost and oracle max-token feasibility depend on it; keeping a uniform \texttt{cutoff\_len} is therefore essential for a fair Packing/oracle comparison.

\paragraph{Baselines and training hyperparameters.}
Baselines are \textbf{Standard} (fixed batch size, random sampling), \textbf{Sorted} (online length-grouped fixed batch), \textbf{Packing} (HuggingFace sequence packing; text-only in our stack, \S\ref{sec:related})~\cite{krell2021efficient}, \textbf{GMT-/BMT-oracle} fairseq-style global/bucketed max-token batchers~\cite{ott2019fairseq} with a one-time scalar cache of post-pipeline \texttt{len(input\_ids)}, \textbf{HFG-oracle} randomized fixed-batch \texttt{group\_by\_length}, and \textbf{ODB} (ours).
Oracle caches are used only for batch construction: training still executes the same online preprocessing, augmentation, templating, tokenization, and visual-token expansion path. The cache is per-(dataset, transform policy, template, \texttt{cutoff\_len}), so it must be rebuilt when those policies change; precompute cost is reported in Appendix~\ref{app:setup_details}.
All methods use AdamW+cosine, lr $10^{-5}$, \texttt{warmup\_ratio}=0.03, grad-clip 4.0, one epoch, bf16+ZeRO-2, and no gradient accumulation.

\paragraph{Method-specific parameters and configuration search.}
For Standard/Sorted, the throughput knob is \texttt{batch\_size} (bs); we sweep $\{1,2,4,8,16\}$ where memory permits, and additionally profile larger values to establish OOM/full-epoch survival when noted.
For ODB, per-worker bs is always 1; the throughput knobs are $L_{\max}$ (per-step token budget, $B(l)=\max(\lfloor L_{\max}/l\rfloor,1)$) and \texttt{prefetch\_factor} (\texttt{pf}, with $D=\max(\texttt{pf}{\times}\texttt{nw},\texttt{buffer\_size})$); we sweep $L_{\max}$ over a feasibility-tested token-budget grid up to 32768, including powers-of-two anchors and intermediate tuned values, then \texttt{pf} at the selected budget.
For GMT-/BMT-oracle, we sweep \texttt{max\_tokens\_budget} over the corresponding feasible token-budget grid; HFG-oracle sweeps fixed bs under the same survival protocol used for Sorted.
Each search runs for 20 minutes to obtain steady-state throughput, and a candidate is eligible only if it produces stable numeric throughput without OOM in the profiling window.
HP selection uses training-only profiling signals: completed non-OOM candidates are restricted to a near-fastest throughput band, then deterministic stability/resource tie-breakers are applied; validation loss and benchmark scores are never used.
The provisional selection must also complete the final full-epoch training run; if it OOMs or otherwise fails to finish the full epoch, we fall back to the next-ranked eligible configuration under the same profiling rule.
\textbf{Fairness principle.} Quality is measured at each method's selected full-epoch-surviving configuration, not at an arbitrary shared setting; for Sorted, the selected bs must also complete a full epoch (Table~\ref{tab:public_fullft}, footnote~\textsuperscript{c}).

\paragraph{Quality benchmarks.}
UltraChat uses \textbf{MMLU}~\cite{hendrycks2021mmlu}; multimodal tasks use \textbf{MMMU-MC}~\cite{yue2024mmmu}, a parser-free choice-likelihood score over letter-labeled multiple-choice validation items.
MMMU-MC excludes non-letter ground-truth rows and avoids generation/parsing by scoring the assistant first-token likelihood of option letters A--H; generated-answer or parser-based analyses are never mixed into the Score column.

\subsection{Main Results}
\label{sec:e2e}

\begin{table}[!t]
\centering
\caption{\textbf{Full Fine-Tuning Results.} 8B/2B Full FT on 8$\times$H20, selected config per method, 3-seed mean$\pm$std. \texttt{sam/s} is emitted samples divided by wall-clock. Packing is text-only here; GMT/BMT/HFG use scalar length caches for batch construction and exclude cache construction (App.~\ref{app:setup_details}). Bold marks the highest emitted-sample throughput among online/no-cache rows; offline oracles are unbolded comparators. Score is MMLU for UltraChat and MMMU-MC choice likelihood for multimodal tasks.}
\label{tab:public_fullft}
\tiny
\setlength{\tabcolsep}{1pt}
\resizebox{\linewidth}{!}{%
\begin{tabular}{@{}llcccc@{\hspace{4pt}}cccc@{}}
\toprule
 & & \multicolumn{4}{c}{\textbf{8B (Full FT)}} & \multicolumn{4}{c}{\textbf{2B (Full FT)}} \\
\cmidrule(lr){3-6} \cmidrule(lr){7-10}
Dataset & Method & sam/s & Speedup & Score & Val Loss & sam/s & Speedup & Score & Val Loss \\
\midrule

\multirow{7}{*}{\shortstack[l]{UltraChat\\CV=0.48\\MMLU}}
& Standard     & $5.77 \pm 0.01$ & 1.00$\times$ & $72.85 \pm 0.40$\% & $0.8487 \pm 0.0013$ & $20.98 \pm 0.04$ & 1.00$\times$ & $58.17 \pm 0.06$\% & $0.9980 \pm 0.0017$ \\
& Sorted       & $8.09 \pm 0.02$\textsuperscript{c} & 1.40$\times$ & $74.72 \pm 0.33$\% & $0.8839 \pm 0.0020$ & $28.31 \pm 0.08$ & 1.35$\times$ & $59.02 \pm 0.10$\% & $1.0287 \pm 0.0012$ \\
& Packing      & $10.46 \pm 0.00$ & 1.81$\times$ & $75.18 \pm 0.06$\% & $1.1819 \pm 0.0088$\textsuperscript{d} & $36.61 \pm 0.07$ & 1.75$\times$ & $59.68 \pm 0.07$\% & $1.1947 \pm 0.0008$\textsuperscript{d} \\
& GMT-oracle\textsuperscript{f}   & $10.94 \pm 0.02$ & 1.90$\times$ & $75.14 \pm 0.09$\% & $0.8350 \pm 0.0013$ & $39.44 \pm 0.03$ & 1.88$\times$ & $59.16 \pm 0.09$\% & $0.9987 \pm 0.0020$ \\
& BMT-oracle\textsuperscript{f}   & $10.31 \pm 0.02$ & 1.79$\times$ & $75.26 \pm 0.08$\% & $0.8351 \pm 0.0013$ & $35.84 \pm 0.04$ & 1.71$\times$ & $59.15 \pm 0.16$\% & $0.9772 \pm 0.0013$ \\
& HFG-oracle\textsuperscript{f}   & $7.33 \pm 0.01$ & 1.27$\times$ & $73.70 \pm 0.14$\% & $0.8404 \pm 0.0013$ & $27.28 \pm 0.07$ & 1.30$\times$ & $58.30 \pm 0.13$\% & $0.9974 \pm 0.0016$ \\
& \textbf{ODB} & $\mathbf{10.23 \pm 0.03}$\textsuperscript{b} & \textbf{1.77$\times$} & $74.75 \pm 0.11$\% & $0.8558 \pm 0.0014$ & $\mathbf{36.91 \pm 0.19}$ & \textbf{1.76$\times$} & $58.98 \pm 0.18$\% & $1.0030 \pm 0.0014$ \\
\midrule
\multirow{6}{*}{\shortstack[l]{LLaVA\\CV=0.29\\MMMU-MC}}
& Standard     & $14.38 \pm 0.03$ & 1.00$\times$ & $55.88 \pm 0.72$\% & $1.0552 \pm 0.0013$ & $47.92 \pm 0.05$ & 1.00$\times$ & $43.06 \pm 0.77$\% & $1.1814 \pm 0.0225$ \\
& Sorted       & $20.46 \pm 0.03$\textsuperscript{c} & 1.42$\times$ & $55.84 \pm 0.38$\% & $1.1781 \pm 0.0028$ & $66.37 \pm 0.07$ & 1.39$\times$ & $39.53 \pm 0.54$\% & $1.3318 \pm 0.0017$ \\
& GMT-oracle\textsuperscript{f}   & $26.65 \pm 0.05$ & 1.85$\times$ & $53.53 \pm 0.66$\% & $1.0630 \pm 0.0011$ & $79.44 \pm 0.39$ & 1.66$\times$ & $43.06 \pm 0.40$\% & $1.2102 \pm 0.0018$ \\
& BMT-oracle\textsuperscript{f}   & $25.70 \pm 0.05$ & 1.79$\times$ & $54.24 \pm 0.65$\% & $1.0630 \pm 0.0012$ & $75.64 \pm 0.13$ & 1.58$\times$ & $43.49 \pm 0.18$\% & $1.2102 \pm 0.0013$ \\
& HFG-oracle\textsuperscript{f}   & $21.84 \pm 0.04$ & 1.52$\times$ & $54.71 \pm 0.65$\% & $1.0590 \pm 0.0012$ & $69.52 \pm 0.08$ & 1.45$\times$ & $42.59 \pm 0.96$\% & $1.2007 \pm 0.0205$ \\
& \textbf{ODB} & $\mathbf{24.87 \pm 0.09}$\textsuperscript{b} & \textbf{1.73$\times$} & $54.08 \pm 0.53$\% & $1.0944 \pm 0.0013$ & $\mathbf{82.42 \pm 0.53}$ & \textbf{1.72$\times$} & $43.18 \pm 0.31$\% & $1.2189 \pm 0.0021$ \\
\midrule
\multirow{6}{*}{\shortstack[l]{ShareGPT4o\\CV=1.00\\MMMU-MC}}
& Standard     & $2.37 \pm 0.00$\textsuperscript{a} & 1.00$\times$ & $52.43 \pm 0.44$\% & $1.1913 \pm 0.0056$ & $6.51 \pm 0.01$ & 1.00$\times$ & $41.29 \pm 0.71$\% & $1.2910 \pm 0.0058$ \\
& Sorted       & $2.44 \pm 0.00$\textsuperscript{a} & 1.03$\times$ & $52.82 \pm 0.31$\% & $1.2175 \pm 0.0059$ & $6.71 \pm 0.01$ & 1.03$\times$ & $39.33 \pm 0.37$\% & $1.3191 \pm 0.0062$ \\
& GMT-oracle\textsuperscript{f}   & $2.57 \pm 0.01$ & 1.09$\times$ & $53.57 \pm 0.71$\% & $1.1904 \pm 0.0057$ & $7.03 \pm 0.02$ & 1.08$\times$ & $41.18 \pm 0.12$\% & $1.2930 \pm 0.0059$ \\
& BMT-oracle\textsuperscript{f}   & $2.50 \pm 0.01$ & 1.06$\times$ & $52.51 \pm 0.88$\% & $1.1908 \pm 0.0055$ & $7.03 \pm 0.00$ & 1.08$\times$ & $40.79 \pm 0.24$\% & $1.2931 \pm 0.0059$ \\
& HFG-oracle\textsuperscript{f}   & $2.82 \pm 0.01$ & 1.19$\times$ & $52.31 \pm 0.76$\% & $1.1913 \pm 0.0058$ & $7.71 \pm 0.04$ & 1.18$\times$ & $40.83 \pm 1.31$\% & $1.2909 \pm 0.0061$ \\
& \textbf{ODB} & $\mathbf{5.83 \pm 0.04}$\textsuperscript{b} & \textbf{2.46$\times$} & $53.88 \pm 0.20$\% & $1.2269 \pm 0.0059$ & $\mathbf{16.09 \pm 0.21}$ & \textbf{2.47$\times$} & $40.12 \pm 0.36$\% & $1.3341 \pm 0.0062$ \\
\bottomrule
\end{tabular}}

\vspace{1mm}
\begin{minipage}{\linewidth}
\raggedright\scriptsize
\textsuperscript{a}For ShareGPT4o 8B, Standard and Sorted use bs=1; wider fixed batches are infeasible or slower on the long-tail length distribution.
\textsuperscript{b}ODB's $(L_{\max}, \texttt{pf}, \texttt{buffer})$ is selected per dataset; tuples appear in App.~\ref{app:setup_details}.
\textsuperscript{c}Sorted bs is the largest value that completes a full epoch; larger profiled settings OOM on longest-tail batches.
\textsuperscript{d}Packing val\_loss uses a packed-sequence denominator and is not comparable to per-sample val\_loss; MMLU is comparable.
\textsuperscript{e}MMMU-MC reports the 850/900 rows with A--H letter ground truth; generated-answer analyses are not mixed into Score.
\textsuperscript{f}Oracle baselines use scalar length caches for batch construction; reported throughput excludes cache construction.
\end{minipage}
\end{table}

\paragraph{Deployment-class reading.}
The main comparison is whether an online DataLoader-level method can approach stronger offline/model-side throughput while preserving multimodal deployment properties. Packing is text-only in our stack; GMT/BMT/HFG are favorable oracle comparators with exact post-pipeline scalar lengths, no cache cost charged in the throughput column, and the same runtime preprocessing path during training. ODB instead observes post-pipeline lengths online, requires no cache, and achieves the highest emitted-sample throughput among online/no-cache rows on every 8B Full FT dataset.

\paragraph{8B Full FT.}
ShareGPT4o isolates the long-tail case: ODB reaches $2.46\times$, while Sorted remains at $1.03\times$ because the longest examples force bs=1. Offline oracles remain sample-count limited by construction rather than by missing length visibility: HFG keeps a fixed batch size, and GMT/BMT constrain padded token area, yielding only 11.4/9.0 samples per update versus ODB's 52.8 (Table~\ref{tab:throughput_decomp_8b}). On LLaVA, ODB reaches $1.73\times$ with low padding and higher throughput than Sorted, while avoiding the Sorted validation-loss and answer-format sensitivity discussed in \S\ref{sec:quality}; on UltraChat, Packing and GMT-oracle are strong non-drop-in/offline comparators, while ODB is the highest-throughput online/no-cache row.

\paragraph{2B Full FT and LoRA.}
At 2B, ODB again leads the online/no-cache rows: it reaches $2.47\times$ on ShareGPT4o, $1.72\times$ on LLaVA, and $1.76\times$ on UltraChat. The narrow exceptions relative to all comparators clarify the boundary of the claim: GMT-oracle is faster on 2B UltraChat because global offline construction can use full-dataset length information, whereas ODB greedily groups within an online buffer; on 2B LLaVA, ODB is both the fastest and the lowest-padding online/drop-in row. Under LoRA, ODB reaches $1.58$--$2.51\times$ (App.~\ref{app:lora_results}). Appendix~\ref{app:multinode} adds two-node Full FT validation, where ODB remains the highest-throughput online/no-cache row and reaches $1.71$--$3.78\times$.

\paragraph{Takeaway.}
ODB brings online/drop-in batching into the oracle token-budget regime. The remaining raw-throughput wins require assumptions outside this deployment class: text-only packing support, offline length-cache precompute, full-dataset visibility, or cache rebuilds under augmentation/template/cutoff changes.

\subsection{Training Quality}
\label{sec:quality}

\noindent\textbf{Quality is evaluated at the speed-selected configuration.}
ODB intentionally changes batch shape: its efficiency mechanism is to increase useful tokens and samples per optimizer update while controlling padding.
For every method, the reported score is therefore measured at the configuration selected by the same training-only profiling protocol, which restricts completed non-OOM candidates to a near-fastest throughput band and then applies deterministic stability/resource tie-breakers.
This makes the quality comparison a direct test of whether the faster update geometry harms the trained model.

\paragraph{Discriminative case: 8B Full FT LLaVA.}
ODB stays in the Standard-comparable operating band, though not tied with Standard on every metric: val\_loss $1.0944\pm0.0013$ vs. Standard $1.0552\pm0.0013$ and MMMU-MC $54.08\pm0.53\%$ vs. $55.88\pm0.72\%$, while delivering $1.73\times$ throughput. Sorted reaches $1.42\times$ but regresses by $+11.6\%$ on val\_loss. Appendix~\ref{app:sorted_format} explains why generated-answer MMMU is not mixed into the main benchmark: length-sorted training can bias answer format, whereas ODB groups only within each online buffer.

\paragraph{Other quality cells.}
On 8B UltraChat, ODB MMLU $74.75\pm0.11\%$ remains in the same band as Packing ($75.18\pm0.06\%$), GMT ($75.14\pm0.09\%$), and BMT ($75.26\pm0.08\%$); val\_loss remains within 0.008 of Standard and below Sorted. On 2B Full FT, ODB reports UltraChat MMLU $+0.81$ pp over Standard, LLaVA MMMU-MC $+0.12$ pp, and ShareGPT4o MMMU-MC $-1.17$ pp. Under LoRA, ODB remains within the Standard band on LLaVA at both scales and is nominally above Standard on ShareGPT4o at both scales (App.~\ref{app:lora_results}).

Oracle rows are offline, cache-based comparators, not direct online baselines. Their raw-throughput differences reflect update geometry rather than a different data path: on 8B LLaVA, GMT/BMT use about 251--254 samples/update and 591--597 updates/epoch, while ODB uses 177.6 samples/update and 844 updates; on 2B LLaVA, ODB is both faster and uses 119.0 samples/update versus about 60 for GMT/BMT (Tables~\ref{tab:throughput_decomp_8b}--\ref{tab:throughput_decomp_2b}). We therefore read all rows as throughput--quality operating points, with ODB adding online/drop-in batching, DDP step alignment, sample-quota closure, and default join-mode identity coverage.

\subsection{Speedup Analysis}
\label{sec:cv_analysis}

\noindent\textbf{ODB speedup comes from jointly improving the three efficiency
conditions in Section~\ref{sec:intro}:} spatial efficiency (little padding),
compute saturation (enough useful token/FLOPs per step), and temporal efficiency
(overlapping input preparation with GPU compute).
Fixed-batch training on
variable-length data cannot optimize all three simultaneously: bs=1 avoids
padding but underfills the GPU, while larger bs increases per-step work only
by mixing unequal lengths and therefore pays padding or OOM cost. ODB improves
throughput because it attacks the three terms together: online length grouping
reduces spatial waste, token-budget updates increase useful work per step, and
bounded outstanding depth hides the remaining input latency.

\begin{figure}[!t]
\centering
\begin{subfigure}[t]{0.48\textwidth}
    \centering
    \begin{tikzpicture}
    \begin{axis}[
        width=\linewidth,
        height=0.78\linewidth,
        xlabel={Padding Rate (\%)},
        ylabel={Throughput vs.\ bs=1 ($\times$)},
        xmin=-1, xmax=30,
        ymin=0.4, ymax=3.0,
        grid=major,
        grid style={gray!20},
        legend pos=north east,
        legend style={font=\tiny, draw=none, fill=none, cells={anchor=west}},
        mark size=2.8pt,
        every axis label/.style={font=\footnotesize},
        tick label style={font=\scriptsize},
    ]
    \addplot[mark=triangle*, blue!70!black, thick, mark options={fill=blue!35}] coordinates {
        (0, 1.00)
        (25.3, 0.60)
    };
    \addlegendentry{8B Standard}
    \addplot[mark=triangle*, orange!85!black, thick, mark options={fill=orange!35}] coordinates {
        (0, 1.00)
        (27.91, 0.61)
    };
    \addlegendentry{2B Standard}
    \addplot[mark=star, blue!70!black, mark size=3.6pt, thick, mark options={fill=blue!25}] coordinates {
        (1.27, 2.58)
    };
    \addlegendentry{8B ODB}
    \addplot[mark=square*, orange!85!black, mark size=2.7pt, thick, mark options={fill=orange!45}] coordinates {
        (0.36, 2.49)
    };
    \addlegendentry{2B ODB}
    \end{axis}
    \end{tikzpicture}
    \caption{ShareGPT4o 20-min profiling comparison (CV=1.00), normalized to fixed-bs=1; Table~\ref{tab:public_fullft} reports the selected 3-seed full-epoch rows. Fixed-bs=2 reaches 25--28\% padding and $\approx$0.60$\times$ throughput, while the selected ODB HP points keep padding at 1.27\% (8B) and 0.36\% (2B) with 2.58$\times$/2.49$\times$ profiling throughput.}
    \label{fig:padding_frontier}
\end{subfigure}
\hfill
\begin{subfigure}[t]{0.48\textwidth}
    \centering
    \begin{tikzpicture}
    \begin{axis}[
        width=\linewidth,
        height=0.78\linewidth,
        xlabel={Coefficient of Variation (CV)},
        ylabel={ODB Speedup ($\times$)},
        xmin=0, xmax=1.15,
        ymin=1.0, ymax=5.0,
        grid=major,
        grid style={gray!20},
        legend pos=north west,
        legend style={font=\tiny, draw=none, fill=none, cells={anchor=west}},
        mark size=2.5pt,
        every axis label/.style={font=\footnotesize},
        tick label style={font=\scriptsize},
    ]
    \addplot[only marks, mark=diamond*, blue!70!black, thick, mark options={fill=blue!35}] coordinates {
        (0.29, 1.73)
        (0.48, 1.77)
        (1.00, 2.46)
    };
    \addlegendentry{8B Full FT}
    \addplot[only marks, mark=diamond*, blue!70!black, thick, mark options={fill=white}] coordinates {
        (0.29, 1.63)
        (0.48, 1.58)
        (1.00, 2.41)
    };
    \addlegendentry{8B LoRA (App.)}
    \addplot[only marks, mark=square*, orange!85!black, thick, mark options={fill=orange!35}] coordinates {
        (0.29, 1.72)
        (0.48, 1.76)
        (1.00, 2.47)
    };
    \addlegendentry{2B Full FT}
    \addplot[only marks, mark=square*, orange!85!black, thick, mark options={fill=white}] coordinates {
        (0.29, 1.75)
        (0.48, 1.65)
        (1.00, 2.51)
    };
    \addlegendentry{2B LoRA (App.)}
    \addplot[only marks, mark=triangle*, orange!85!black, thick, mark size=3.5pt, mark options={fill=orange!45}] coordinates {
        (0.80, 4.43)
    };
    \addlegendentry{2B MM-Mix}
    \addplot[only marks, mark=triangle*, blue!70!black, thick, mark size=3.2pt, mark options={fill=white}] coordinates {
        (0.80, 2.92)
    };
    \addlegendentry{8B MM-Mix}
    \end{axis}
    \end{tikzpicture}
    \caption{ODB speedup vs. CV for main-table ODB rows plus MM-Mix markers. CV alone is insufficient: MM-Mix (CV=0.80, $f_s\approx0.37$) exceeds ShareGPT4o (CV=1.00, $f_s\approx0.01$).}
    \label{fig:cv_speedup}
\end{subfigure}
\caption{ODB speedup mechanisms. (a) On high-CV ShareGPT4o, fixed
batching moves rightward/downward as bs grows, while ODB occupies
selected low-padding/high-throughput points. (b) Across workloads, speedup tracks length
heterogeneity but is amplified by short-sample mass; LoRA points are appendix
context.}
\label{fig:speedup_views}
\end{figure}

\noindent\textbf{Spatial efficiency.}
Figure~\ref{fig:padding_frontier} shows the spatial effect on ShareGPT4o: fixed bs=2 introduces 25--28\% padding and drops to $\approx$0.60$\times$ of bs=1, while the selected ODB HP markers stay near the low-padding/high-throughput corner. This profiling view is consistent with the selected full-epoch rows in Table~\ref{tab:public_fullft}, where ODB reaches 2.46$\times$/2.47$\times$ on 8B/2B. The gain is therefore not merely larger batches, but larger \emph{useful} batches.

\noindent\textbf{Compute saturation.}
Fixed bs=1 avoids padding on long-tail workloads but gives the GPU too little useful work per update; larger fixed batches add work by adding padding. ODB instead increases real samples/tokens per update under a token budget, so the GPU sees denser useful work without the fixed-batch padding penalty. The effect is nearly scale-invariant within each dataset: UltraChat reaches 1.77$\times$/1.76$\times$ on 8B/2B, LLaVA reaches 1.73$\times$/1.72$\times$, and ShareGPT4o reaches 2.46$\times$/2.47$\times$. This pattern suggests that model scale is not the primary driver in these cells; Tables~\ref{tab:throughput_decomp_8b}--\ref{tab:throughput_decomp_2b} relate the throughput differences to batch-shape statistics.

\noindent\textbf{Short-sample leverage.}
CV flags padding pressure, but short-sample mass $f_s$ flags recoverable compute density. MM-Mix has lower CV than ShareGPT4o (0.80 vs. 1.00) but much larger $f_s$ ($\approx0.37$ vs. $\approx0.01$), allowing ODB to aggregate short OCR/VQA and captioning examples into compute-dense updates and reach 4.43$\times$ at 2B and 2.92$\times$ at 8B. $L_{\max}$ raises useful work per step until memory/step time saturate, while $D$ hides input latency until pipeline overlap saturates; Section~\ref{sec:ablation} locates these operating ranges.

\subsection{Ablation Study}
\label{sec:ablation}

Table~\ref{tab:ablation_ml} varies $L_{\max}$ at fixed $D{=}1024$ to expose batch-shape saturation; Table~\ref{tab:ablation_D} varies $D=\max(\texttt{pf} \times \texttt{nw}, \texttt{buffer\_size})$ to expose temporal overlap. Both are single-seed 20-minute profiling windows with default join-mode ODB; full-epoch throughput and quality claims remain in Table~\ref{tab:public_fullft}.

\begin{table}[!t]
\centering
\caption{Ablation: per-batch token budget $L_{\max}$ at fixed $D{=}1024$ (default join-mode ODB bs=1, \texttt{nw}=4, \texttt{pf}=256, \texttt{buffer}=1024, 8$\times$H20, Qwen3-VL-8B Full FT, single-seed 20-minute windows). Speedups use the Table~\ref{tab:public_fullft} 8B Standard baselines. Bold marks the fastest stable row; failed denotes no stable numeric throughput in the profiling window.}
\label{tab:ablation_ml}
\footnotesize
\setlength{\tabcolsep}{4pt}
\begin{tabular}{@{}lcccccc@{}}
\toprule
& \multicolumn{2}{c}{UltraChat (CV=0.48)} & \multicolumn{2}{c}{LLaVA (CV=0.29)} & \multicolumn{2}{c}{ShareGPT4o (CV=1.00)} \\
\cmidrule(lr){2-3} \cmidrule(lr){4-5} \cmidrule(lr){6-7}
$L_{\max}$ & sam/s & spd & sam/s & spd & sam/s & spd \\
\midrule
2048  & 8.48 & 1.47$\times$ & 20.25 & 1.41$\times$ & 5.48 & 2.31$\times$ \\
4096  & 9.23 & 1.60$\times$ & 22.66 & 1.58$\times$ & 5.74 & 2.42$\times$ \\
8192  & 9.44 & 1.64$\times$ & 24.17 & 1.68$\times$ & 5.96 & 2.52$\times$ \\
12288 & \textbf{10.21} & \textbf{1.77$\times$} & 24.53 & 1.71$\times$ & 6.11 & 2.58$\times$ \\
14336 & 10.08 & 1.75$\times$ & 24.44 & 1.70$\times$ & \textbf{6.17} & \textbf{2.60$\times$} \\
16384 & 9.90 & 1.72$\times$ & \textbf{24.88} & \textbf{1.73$\times$} & 5.86 & 2.47$\times$ \\
32768 & failed & --- & failed & --- & failed & --- \\
\bottomrule
\end{tabular}
\end{table}

\noindent\textbf{Reading Table~\ref{tab:ablation_ml}.}
Throughput rises as $L_{\max}$ fills each update, then regresses once memory pressure and longer steps dominate. The best stable $L_{\max}$ in the 8B sweep differs by dataset: $12288$ for UltraChat, $16384$ for LLaVA, and $14336$ for ShareGPT4o. The $32768$ setting is outside the stable profiling envelope. Appendix~\ref{app:quality_hp} gives the low-CV quality sensitivity; main-table configurations additionally tune $D$ and validate full-epoch quality.

\paragraph{Outstanding depth.}
After $L_{\max}$ fixes batch shape, $D$ controls how much prepared work can overlap GPU compute; at \texttt{nw}=4, \texttt{pf}<256 is clamped to $D{=}1024$ by buffer filling (App.~\ref{app:clamp}).

\begin{table}[!t]
\centering
\caption{Ablation: outstanding depth $D$ (default join-mode ODB, 8$\times$H20, Qwen3-VL Full FT, single-seed 20-minute windows; \texttt{nw}=4, \texttt{buffer}=1024). For each scale/dataset, $L_{\max}$ is fixed to the selected ODB budget used for the corresponding main-table cell. \texttt{ovrlap} denotes \texttt{pipeline\_overlap} in percent; bold uses unrounded \texttt{sam/s}.}
\label{tab:ablation_D}
\footnotesize
\setlength{\tabcolsep}{2pt}
\begin{tabular}{@{}lcccccccc@{}}
\toprule
& \multicolumn{2}{c}{$D{=}1024$} & \multicolumn{2}{c}{$D{=}2048$} & \multicolumn{2}{c}{$D{=}4096$} & \multicolumn{2}{c}{$D{=}8192$} \\
\cmidrule(lr){2-3} \cmidrule(lr){4-5} \cmidrule(lr){6-7} \cmidrule(lr){8-9}
Scale/Dataset & sam/s & ovrlap & sam/s & ovrlap & sam/s & ovrlap & sam/s & ovrlap \\
\midrule
2B UltraChat & 36.12 & 99.6 & 36.21 & 100.0 & \textbf{36.24} & 100.0 & 36.18 & 100.0 \\
2B LLaVA & 78.65 & 94.7 & \textbf{82.64} & 100.0 & 81.28 & 100.0 & 80.13 & 100.0 \\
2B ShareGPT4o & 15.77 & 97.0 & \textbf{15.80} & 100.0 & 15.46 & 100.0 & 14.17 & 100.0 \\
8B UltraChat & 9.93 & 99.9 & 9.93 & 100.0 & \textbf{9.93} & 100.0 & 9.92 & 100.0 \\
8B LLaVA & 24.65 & 98.5 & \textbf{24.85} & 100.0 & 24.85 & 100.0 & 24.68 & 100.0 \\
8B ShareGPT4o & \textbf{6.06} & 100.0 & 5.90 & 100.0 & 5.86 & 100.0 & 5.73 & 100.0 \\
\bottomrule
\end{tabular}
\end{table}

\noindent\textbf{Reading Table~\ref{tab:ablation_D}.}
At $D{=}1024$, overlap is already 94.7--100.0\%; $D{=}2048$ helps mainly on LLaVA, and larger values are flat or regressive once overlap saturates. Once overlap saturates or throughput stops improving, increasing $D$ adds little benefit and can be left at the smaller setting.

\paragraph{Buffer and loss defaults.}
On ShareGPT4o, most buffer gains appear by \texttt{buffer}=500; 1024--2000 is the high-throughput range for 2B (16.77--17.10\,sam/s) and 1024 for 8B (6.38\,sam/s), with padding at or below 0.6\% at \texttt{buffer}$\geq$1024 (App.~\ref{app:more_ablations}). The three loss-scaling modes have similar throughput in short profiling windows, within about 0.2\% on 2B and 1.0\% on 8B relative to sample-level scaling; tune $L_{\max}$ and $D$, not loss scaling.

\subsection{DataLoader I/O Sensitivity}
\label{sec:io_sensitivity}

Because ODB raises per-step sample count, input demand differs from fixed-bs Standard. The full \texttt{num\_workers} sweep (App.~\ref{app:io_sensitivity}, Table~\ref{tab:io}) shows the reported cells are not primarily worker-starved: ODB remains above Standard at every worker count, and by \texttt{nw}=4 rows are within about 5\% of their best values. We use \texttt{nw}=4 as a portable default and tune \texttt{pf} from 256.

\subsection{Case Study: Production Deployment}
\label{sec:case_study}

On production MM-Mix (273K samples from 7 OCR/VQA/captioning corpora, $\approx$545K sample-views over 2 epochs; CV${\approx}0.8$, bimodal; App.~\ref{app:setup_details}), two-node Qwen3-VL-2B full-epoch runs reach Standard at $17.85\pm0.15$\,sam/s, Sorted at $20.62\pm2.08$\,sam/s ($1.15\times$), and \textbf{ODB $L_{\max}{=}12288$ at $79.15\pm4.16$\,sam/s ($4.43\times$)}---our largest speedup (Table~\ref{tab:mmmix_quality}). The high short-sample fraction ($f_s{\approx}0.37$ vs.\ ShareGPT4o's 0.01) activates the compute-density mechanism: ODB aggregates OCR/VQA short examples into dense batches, exceeding what CV alone would suggest; the 8B profiling sweep shows the same qualitative pattern (\textbf{22.07\,sam/s, $2.92\times$}).

The corresponding full-epoch rows show Standard-comparable parser-free MMMU-MC behavior: ODB reaches $46.31\pm0.44$, while Standard is $43.33\pm2.24$ and Sorted is $43.65\pm2.32$. Its validation loss is higher than Standard but far below Sorted's degradation, and the benchmark score remains in the Standard-comparable range. GMT/BMT-oracle obtain higher MMMU-MC scores than ODB, but their validation loss is not lower than Standard's; we treat this as benchmark-specific variation and retain them as offline cache-based comparators. The measured MM-Mix oracle-cache construction alone takes 299.9\,s for 272{,}589 samples on one H20, a favorable churn-inclusive lower bound that excludes sample-store scans, order/materialization, cache validation, distributed staging, and rebuilds after recipe changes (App.~\ref{app:setup_details}).

\section{Practical Guidance}
\label{sec:guidance}

\textbf{ROI.} Estimate CV and $f_s=\Pr[l<L_{\max}/4]$ on a 1k--5k sketch: CV$\gtrsim0.8$ or $f_s>0.2$ flags high-ROI regimes; CV$\approx0.3$--0.5 gives smaller, model-dependent gains (App.~\ref{app:cv_fs_decomp}). \textbf{Tuning.} In short profiling runs, sweep $L_{\max}$ upward from 2--4$\times$ mean length and choose the smallest stable setting in the near-fastest emitted-sample throughput band; keep the default input depth unless a small \texttt{pf}/$D$ sweep gives a clear throughput gain (Table~\ref{tab:ablation_D}; App.~\ref{app:clamp}). After speed-first profiling, run full-epoch quality/stability validation; validation and benchmark scores are not HP-selection inputs. Use default join mode for final training runs; App.~\ref{app:join_tradeoff} shows that its strict identity-coverage contract has negligible throughput cost. The relaxed non-join termination is retained only for constrained runtime integrations that cannot support the join-style drain-before-finish protocol. \textbf{Method choice.} Use packing when text-only varlen attention/boundaries are wired; use ODB for multimodal or augmentation-heavy stacks with online lengths and costly cache/rewrite upkeep.

\section{Related Work}
\label{sec:related}
\vspace{-4pt}
\paragraph{Packing and length-aware batching.} Sequence packing removes padding by concatenation/masking~\cite{krell2021efficient}. In transformer stacks, efficient contamination-free packing needs boundary-aware masks/loss handling and often varlen attention/kernel support~\cite{dao2022flashattention,dao2023flashattention2} plus framework boundary plumbing; in our Qwen-VL/DeepSpeed/LLaMA-Factory stack it is a model-side intervention, not a pure DataLoader swap. Token-budget batchers in fairseq/NeMo/OpenNMT/Composer~\cite{ott2019fairseq,kuchaiev2019nemo,klein2017opennmt,mosaicml2022composer} and GMT-/BMT-oracles build offline length caches that rebuild under augmentation/template/cutoff changes. PyTorch samplers~\cite{paszke2019pytorch} and HuggingFace length grouping~\cite{wolf2020transformers} do not solve runtime variable group-count alignment across DDP ranks. ODB occupies the runtime DDP-aware DataLoader slot: no offline cache or model rewrite.
\paragraph{Complementary training-system techniques.} Sequence/context parallelism~\cite{korthikanti2022sequenceparallel,li2023sequenceparallelism,liu2023ringattention} shards one long sample across ranks; ODB batches variable-length samples. Inference continuous batching~\cite{yu2022orca,kwon2023efficient} lacks DDP step-count constraints. DeepSpeed Data Efficiency~\cite{li2022deepspeed} handles sampling/curriculum routing. ODB operates at the DataLoader boundary and composes with training-stack mechanisms such as DeepSpeed~\cite{rasley2020deepspeed}, DDP~\cite{li2020pytorch}, gradient accumulation, LoRA-style adapters~\cite{hu2022lora}, and LLaMA-Factory~\cite{zheng2024llamafactory}.

\section{Limitations}
\label{sec:limitations}
ODB's clamped memory rule $B(l){=}\max(\lfloor L_{\max}/l\rfloor,1)$ is a first-order activation-memory proxy, so $L_{\max}$ must be swept per model, optimizer, precision, and attention stack. Gains are data-dependent: near-uniform/low-CV workloads leave less long-tail slack, and we do not isolate an iso-token or matched-update causal ablation; reported speedups are batch-system operating points combining padding reduction, useful-batch growth, and input overlap. Empirical claims cover multimodal fine-tuning and MM-Mix, mainly single-node H20 runs, with two-node validation in App.~\ref{app:multinode}. The metadata alignment exchange should be revalidated at larger world sizes or heterogeneous nodes. ZeRO-3 and FSDP remain outside the evaluation scope; gradient accumulation follows the same aligned micro-step schedule and is supported by isolated validation runs. Default join mode gives strict identity coverage with negligible measured cost; opt-in non-join relaxes termination to cumulative sample-quota closure (App.~\ref{app:join_tradeoff}). ODB complements, rather than replaces, model-side packing or varlen-attention systems when those are available.

\section{Conclusion}
\label{sec:conclusion}
We introduced \odb, an online dynamic batcher that observes realized lengths after runtime preprocessing, tokenization, and multimodal expansion, then forms DDP-safe variable-size batches at the DataLoader boundary without changing the model, optimizer, attention kernel, or dataset. Max-Based Bidirectional Group Alignment provides strict identity coverage under default join mode, sample-quota closure under opt-in non-join mode, and deadlock-free bounded termination. Empirically, ODB delivers 1.58--2.51$\times$ throughput across public 2B/8B Full FT and LoRA rows and 4.43$\times$ on production MM-Mix with Standard-comparable validation and benchmark metrics. On 8B Full FT it reaches $1.77\times$ on UltraChat, $1.73\times$ on LLaVA, and $2.46\times$ on ShareGPT4o; under LoRA it reaches up to $2.51\times$. These results narrow the gap between fixed-batch training and stronger offline or model-side batching methods while avoiding scalar length caches and model-side packing rewrites.

\bibliographystyle{plain}
\bibliography{odb_paper}

\appendix

\section{Cross-Rank Group Alignment Protocol}
\label{app:cross_rank_protocol}

This appendix gives the full algorithm and supporting details summarized in Section~\ref{sec:alignment}.
Let $\mathcal{G}_r$ be rank $r$'s current candidate-group list, $G_r=|\mathcal{G}_r|$, and $\mathcal{A}=\{r:G_r>0\}$ the active-rank set in a protocol round.
When $\mathcal{A}$ is nonempty, ODB's alignment target is computed only over active ranks:
\begin{equation}
    T_{\mathrm{grp}} = \max\!\left(\min\!\left(\max_{r\in\mathcal{A}} G_r, \; C_{\min}^{+}, \; S_{\min}^{+}\right), \; 1\right)
    \label{eq:target}
\end{equation}
where $C_r$ is rank $r$'s output-slot capacity, $S_r$ is its buffered-sample count, and $C_{\min}^{+}=\min_{r\in\mathcal{A},\,C_r>0}C_r$ and $S_{\min}^{+}=\min_{r\in\mathcal{A},\,S_r>0}S_r$ are the positive minima over active ranks.
Excluding zero-capacity/zero-sample ranks prevents an empty rank from collapsing the target to zero.

After computing $T_{\mathrm{grp}}$, each active rank adjusts:
\emph{Split} (upward, $G_r < T_{\mathrm{grp}}$): scanning groups in reverse order, the first group with $\geq 2$ samples is found and its last sample is extracted to form a new singleton; repeat until $G_r{=}T_{\mathrm{grp}}$.
\emph{Overflow} (downward, $G_r > T_{\mathrm{grp}}$): the $T_{\mathrm{grp}}$ largest groups are retained and samples from removed groups are returned to the buffer for reuse.
This \emph{overflow recirculation} ensures no samples are permanently discarded.

\begin{algorithm}[ht]
\caption{Max-Based Bidirectional Group Alignment}
\label{alg:alignment}
\begin{algorithmic}[1]
\REQUIRE Candidate group lists $\{\mathcal{G}_0, \ldots, \mathcal{G}_{W-1}\}$, active set $\mathcal{A}=\{r:|\mathcal{G}_r|>0\}$, $C_{\min}^{+}$, $S_{\min}^{+}$ (positive values only)
\ENSURE All active ranks have exactly $T_{\mathrm{grp}}$ groups; inactive ranks remain idle; overflow samples returned to buffer
\STATE $T_{\mathrm{grp}} \gets \max(\min(\max_{r\in\mathcal{A}} |\mathcal{G}_r|, C_{\min}^{+}, S_{\min}^{+}), 1)$
\FOR{each active rank $r\in\mathcal{A}$}
    \IF{$|\mathcal{G}_r| < T_{\mathrm{grp}}$}
        \WHILE{$|\mathcal{G}_r| < T_{\mathrm{grp}}$ \textbf{and} $\exists$ group with $|g| \geq 2$}
            \STATE Scan from last group backward; find first $g^*$ with $|g^*| \geq 2$; extract its last sample as a new singleton group
        \ENDWHILE
    \ELSIF{$|\mathcal{G}_r| > T_{\mathrm{grp}}$}
        \STATE Sort groups by size (descending); keep top-$T_{\mathrm{grp}}$; return remaining samples to data buffer
    \ENDIF
\ENDFOR
\end{algorithmic}
\end{algorithm}

\paragraph{Communication overhead.} The primary metadata round performs one \texttt{all\_gather} of $(2 + 2\,\text{buffer\_size}) \cdot W \cdot$ \texttt{sizeof(int64)} bytes ($\approx$128\,KB at $W{=}8$, \texttt{buffer}=1024). Exact token-level loss scaling can trigger a second token-count gather under the deterministic all-rank predicate described in Section~\ref{sec:alignment}. The Gloo channel runs on CPU and overlaps with GPU compute.

\section{Loss Scaling Derivation}
\label{app:loss_scaling}

Let rank $r \in \{0,\ldots,W{-}1\}$ hold $n_r$ samples with $t_r$ valid tokens and per-token losses $\ell_{r,i,k}$.
Write $N{=}\sum_r n_r$, $T_{\mathrm{tok}}{=}\sum_r t_r$.
The reference loss (single-rank pass under per-token mean reduction) is
\begin{equation}
\mathcal{L}^{\star} = \frac{1}{T_{\mathrm{tok}}} \sum_{r,i,k} \ell_{r,i,k}.
\label{eq:target_loss}
\end{equation}
Each rank locally computes $\bar{\mathcal{L}}_r = (1/t_r) \sum_{i,k} \ell_{r,i,k}$; naive DDP yields $\tfrac{1}{W} \sum_r \bar{\mathcal{L}}_r$, which equals $\mathcal{L}^{\star}$ only in the degenerate case $t_r \equiv T_{\mathrm{tok}}/W$.
For weights $\{w_r\}$ summing to one, replacing $\bar{\mathcal{L}}_r$ with $\bar{\mathcal{L}}_r \cdot w_r \cdot W$ makes DDP's post-averaging output equal $\sum_r w_r \bar{\mathcal{L}}_r$; expanding shows the unique exact choice is $w_r = t_r/T_{\mathrm{tok}}$.
Sample-level weighting ($w_r = n_r/N$) is exact only when the average tokens per sample $t_r/n_r$ is identical across ranks.

\paragraph{Approximate vs.\ exact mode.}
Eq.~\ref{eq:exactness} requires per-group token counts consistent with the \emph{post}-alignment grouping.
The primary \texttt{all\_gather} piggybacks pre-alignment counts without an extra communication round; in \emph{approximate} mode adjusted counts are estimated from $\bar\tau_r{=}t_r^{\text{orig}}/n_r^{\text{orig}}$.
Exact mode uses the primary counts when alignment is a no-op and otherwise triggers a deterministic second \texttt{all\_gather} to re-broadcast post-alignment counts, preserving deadlock-freedom.
Empirically, in the ShareGPT4o profiling sweep, approximate and exact token-level modes have unscaled losses within 0.004, but only exact mode satisfies Eq.~\ref{eq:exactness} bit-precisely; all main experiments use exact token-level scaling.

\section{Formal Proofs}
\label{app:proofs}

This appendix gives explicit, full proofs of Theorems~\ref{thm:zero-discard} and~\ref{thm:deadlock-free} using (i) a per-rank state machine that tracks every place a sampler view may reside, and (ii) an explicit Lyapunov potential function that strictly decreases on emission rounds while skip rounds are finite. Sampler views are conserved as multiset membership at all times, so ``no leak'' is by inspection of the transition rules; identity coverage is handled by the view-to-identity projection.

\subsection{Per-Rank State and Transition Rules}
\label{app:proofs_state}

Fix world size $W$, rank index $r \in \{0, \ldots, W-1\}$, and dataset identity set $\mathcal{I} = \{0, \ldots, N-1\}$. \texttt{DistributedSampler(drop\_last=False)} produces a per-rank sampler-view sequence $\mathcal{D}_r$ of size $|\mathcal{D}_r| = \lceil N/W \rceil$ after padding the global shuffled index list to $M := W \cdot \lceil N/W \rceil$ views and stride-sharding it across ranks. View positions are disjoint across ranks; their identity projection covers $\mathcal{I}$, with $P := M - N$ deterministic tail-padding views that cyclically re-use boundary identities to make per-rank counts equal. We track the view multiset $\mathcal{D}_r$ (so an identity that appears as a padding view is counted as a distinct sampler element); identity-level coverage is the cardinality of the corresponding identity set. At protocol round $k$, rank $r$'s state is
\[
s_r^{(k)} \;=\; \bigl(R_r^{(k)},\; Q_r^{(k)},\; B_r^{(k)},\; E_r^{(k)}\bigr),
\]
where the four \emph{pairwise disjoint} components partition $\mathcal{D}_r$:
\begin{itemize}[nosep,leftmargin=1.2em]
\item $R_r^{(k)}$ --- \textbf{sampler-pending}: views $\mathcal{D}_r$ has not yet yielded.
\item $Q_r^{(k)}$ --- \textbf{worker queue}: views in flight from worker subprocesses to the collate process.
\item $B_r^{(k)}$ --- \textbf{collate buffer}: views received by collate but not yet emitted to the trainer.
\item $E_r^{(k)}$ --- \textbf{emitted}: views already delivered to the trainer in some prior batch.
\end{itemize}
The protocol exposes three transition primitives, each of which moves sampler views between two components without creation or destruction:
\begin{align*}
\textsc{Fetch}_r(F):\;\; & F \subseteq R_r^{(k)},\; \delta=|F|;\quad
R_r^{(k+1)} = R_r^{(k)} \setminus F,\; Q_r^{(k+1)} = Q_r^{(k)} \uplus F\\
\textsc{Drain}_r:\;\; & Q_r^{(k)} \to B_r^{(k)} \\
\textsc{Emit}_r(g):\;\; & B_r^{(k)} \to E_r^{(k)},\; \text{moves group } g \subseteq B_r^{(k)}
\end{align*}
Let $D := \max(\texttt{prefetch\_factor} \cdot \texttt{num\_workers}, \texttt{buffer\_size})$ denote the configured per-rank outstanding-depth envelope. The iterator schedules fetch/drain so that the fetched-but-not-emitted view set $Q_r \uplus B_r$ stays within this envelope; $B_r$ contains the current collate buffer plus any overflow groups recirculated after alignment. No transition deletes elements: every transition is one of the three above or the local-rank no-op of a \texttt{skip\_output} round.

\begin{lemma}[No-Leak Invariant]
\label{lem:no_leak}
At every round $k$ and on every rank $r$,
\[
R_r^{(k)} \uplus Q_r^{(k)} \uplus B_r^{(k)} \uplus E_r^{(k)} \;=\; \mathcal{D}_r.
\]
\end{lemma}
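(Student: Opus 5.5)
The plan is to prove Lemma~\ref{lem:no_leak} by induction on the round index $k$, holding the rank $r$ fixed. The invariant is a multiset identity that is conserved by every allowed transition. So the proof has two parts: a base case at protocol start, and a case split over the transition primitives of \S\ref{app:proofs_state}, together with the \texttt{skip\_output} no-op and the alignment step of Algorithm~\ref{alg:alignment}.

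\emph{Base case.} At $k=0$ the sampler has yielded nothing, so $R_r^{(0)}=\mathcal{D}_r$ and $Q_r^{(0)}=B_r^{(0)}=E_r^{(0)}=\emptyset$. The identity therefore holds trivially.

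\emph{Inductive step.} Assume the identity at round $k$. I will check that each transition subtracts a sub-multiset from one component and adds exactly the same sub-multiset to another, so the $\uplus$ of all four components is unchanged.
\begin{itemize}[nosep,leftmargin=1.2em]
\item For $\textsc{Fetch}_r(F)$ with $F\subseteq R_r^{(k)}$, we have $(R\setminus F)\uplus(Q\uplus F)=R\uplus Q$, because $F$ is contained in $R$.
\item For $\textsc{Drain}_r$ and $\textsc{Emit}_r(g)$ with $g\subseteq B_r^{(k)}$, the argument is the same: the moved set is removed from its source and added to its target.
\item For a \texttt{skip\_output} round, nothing changes on rank $r$.
\end{itemize}
Composing several primitives within one round preserves the identity, since each primitive does.

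The step that needs real care is checking that the alignment machinery introduces no hidden transition. Split only repartitions $B_r$ into different groups: extracting a singleton changes how the views are grouped but not the multiset $B_r$. Overflow keeps the top $T_{\mathrm{grp}}$ groups for emission and returns the remaining views to the buffer, so those views stay in $B_r$. The paper explicitly defines $B_r$ to include recirculated overflow. Sentinel \texttt{IDLE\_DATA} slots are not sampler views and belong to no component, so they do not affect the identity.

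I will also note that pairwise disjointness is preserved, so $\uplus$ agrees with set-of-positions union. This holds because each move takes its moved set from exactly one source component.

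The main obstacle is this modelling claim: that the implementation realizes only these moves, with no drop at collate, no duplication on retry, and no loss at DataLoader shutdown. I would address it by stating it as the transition assumption already fixed in \S\ref{app:proofs_state}, where no transition deletes elements. The lemma then follows by induction, rather than trying to reason about code paths.
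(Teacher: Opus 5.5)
Your proposal is correct and matches the paper's proof. Both argue by induction on $k$ from the base case $R_r^{(0)}=\mathcal{D}_r$, $Q_r^{(0)}=B_r^{(0)}=E_r^{(0)}=\emptyset$, and observe that each primitive (\textsc{Fetch}, \textsc{Drain}, \textsc{Emit}, or the \texttt{skip\_output} no-op) moves a sub-multiset between two components, so the disjoint union is unchanged. Your extra checks (split and overflow stay inside $B_r$, \texttt{IDLE\_DATA} sentinels are not sampler views) agree with the paper's modelling, which defines $B_r$ to include recirculated overflow, and spell out what the paper leaves implicit.
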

\begin{proof}
Base case $k=0$: $R_r^{(0)}=\mathcal{D}_r$, $Q_r^{(0)}=B_r^{(0)}=E_r^{(0)}=\emptyset$. Inductive step: each transition primitive moves a (possibly empty) subset between two components on the left-hand side, leaving the disjoint union invariant.
\end{proof}

\subsection{Lyapunov Potential and Bounded Termination}
\label{app:proofs_lyapunov}

Define the global potential
\[
\Phi^{(k)} \;:=\; \sum_{r=0}^{W-1} \bigl(|R_r^{(k)}| + |Q_r^{(k)}| + |B_r^{(k)}|\bigr) \;=\; M - \sum_{r} |E_r^{(k)}|,
\]
where $M = W \cdot \lceil N/W \rceil$ is the total sampler view count (App.~\ref{app:proofs_state}). $\Phi^{(k)} \in \{0, 1, \ldots, M\}$ is a non-negative integer. We track its evolution.

\begin{lemma}[Bounded-round progress]
\label{lem:progress}
On any active outer round $k$ (some rank has $|B_r^{(k)}| > 0$), let $G_{\mathrm{cur}}^{(k)}=\max_{r:\,G_r>0}G_r$, and let $C_{\min}^{(k),+}$ and $S_{\min}^{(k),+}$ be the round-$k$ positive output-slot and buffered-sample minima over active ranks. Define $t_k = \max(1, \min(G_{\mathrm{cur}}^{(k)}, C_{\min}^{(k),+}, S_{\min}^{(k),+}))$. Local alignment may split groups before emission, but it does not move views into $E_r$ and performs finitely many splits, bounded by the number of buffered views/groups in $B_r^{(k)}$ because each split only isolates existing views. After local alignment, the outer round guarantees one of the following:
\begin{enumerate}[label=(\alph*),nosep,leftmargin=1.6em,topsep=1pt]
\item (\textup{\textsc{Emit}} round.) $\Phi^{(k+1)} \le \Phi^{(k)} - 1$.
\item (\textup{\textsc{Skip}} round.) $\Phi^{(k+1)} = \Phi^{(k)}$ and no view is emitted; the round only fetches/drains bounded in-flight views or observes sampler exhaustion.
\end{enumerate}
Skip rounds are finite because the fetched-but-not-emitted set is bounded by the outstanding-depth envelope $D$ and the sampler is finite; local split operations are internal to the outer round and do not affect the communication-round count.
\end{lemma}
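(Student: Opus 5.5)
The plan is a per-round case analysis on rank $r$'s activity in outer round $k$, using only the three transition primitives of App.~\ref{app:proofs_state} and the invariant of Lemma~\ref{lem:no_leak}. First I will dispose of the local alignment. In the split branch of Algorithm~\ref{alg:alignment}, each iteration takes a group with $|g|\ge 2$, removes one view, and makes that view a new singleton group. This raises $G_r$ by one, and $G_r$ can never exceed $|B_r^{(k)}|$. So starting from $G_r\ge 1$ there are at most $|B_r^{(k)}|-1$ splits, and the loop also stops once $G_r=T_{\mathrm{grp}}$ or no group of size $\ge 2$ remains. In the overflow branch, the views in discarded groups go back into the collate buffer, so they stay in $B_r$. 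Neither branch touches $E_r$, $R_r$ or $Q_r$, so $\Phi$ is unchanged by alignment and the local work is finite.

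Next I will show that whenever some rank is active, $t_k\ge 1$ groups are emitted. Here $t_k$ is defined from the round-$k$ maxima and minima, exactly as $T_{\mathrm{grp}}$ in \eqref{eq:target}. Every active rank ends alignment with exactly $T_{\mathrm{grp}}=t_k$ groups. The cap by $S_{\min}^{+}$ guarantees that a rank with fewer groups than the target has enough buffered views to split up to $t_k$. The cap by $C_{\min}^{+}$ guarantees that each active rank has output slots for all $t_k$ groups. Under the uniform-call invariant, every active rank then performs $\textsc{Emit}_r(g)$ for each retained nonempty group, which moves at least one view from $B_r$ to $E_r$. Inactive ranks emit only \texttt{IDLE\_DATA} sentinels, which are not sampler views. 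Hence $\sum_r|E_r|$ grows by at least $t_k\ge 1$, and $\Phi^{(k+1)}\le\Phi^{(k)}-1$; this is case (a).

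If instead the all-gathered \texttt{n\_groups} shows no active rank, meaning all ranks report $0$ (insufficient data) or some report $-1$, then the only transitions are \textsc{Fetch} and \textsc{Drain}, or observing sampler exhaustion. These transitions move views $R\to Q\to B$, so $\Phi$ is unchanged and no view is emitted; this is case (b).

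Finiteness of skip rounds is the step I expect to be the main obstacle, because it must rule out livelock in which ranks repeatedly report $0$. My first attempt is a secondary potential
\[
\Psi^{(k)}=\sum_r \bigl(2|R_r^{(k)}|+|Q_r^{(k)}|\bigr),
\]
which is bounded by $2M$ and, on a skip round, either strictly decreases or the round observes exhaustion. The required claim is that a rank reporting $0$ must have either pending views it can fetch (since $|Q_r\uplus B_r|<D$) or in-flight views it can drain. If it has neither, then $R_r=Q_r=\emptyset$, so it reports $-1$ or flushes a partial buffer as a group, which makes the round active. The difficulty is justifying this ``flush when no further input can arrive'' rule from the protocol description. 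I would also use the envelope $D$ to bound consecutive skips by $O(D)$ per refill. Combining the two potentials gives the desired conclusion: there are at most $\Phi^{(0)}=M$ emit rounds and finitely many skips between them.
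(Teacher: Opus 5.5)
Your proposal follows the paper's proof: alignment only reshuffles views inside $B_r$, so $\Phi$ is unchanged and the number of splits is finite; an aligned round emits at least one nonempty group, so $\Phi$ drops by at least one; and skip rounds leave $\Phi$ fixed and are finite because of the envelope $D$ and the finite sampler. Your auxiliary potential $\Psi=\sum_r(2|R_r|+|Q_r|)$ is a more explicit version of the paper's one-line finiteness claim, and the paper does not prove the ``flush when no further input can arrive'' behaviour you flag either; its argument assumes it tacitly (see the proof of Theorem~\ref{thm:bounded_term}, which asserts that after the fetch/drain prefix every unfinished active rank emits and that a rank with no remaining material is in the finished state).
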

\begin{proof}
Local split adjustment extracts views from existing groups but leaves them in $B_r$, so it preserves $\Phi$ and is bounded by the finite number of buffered views/groups. In a normal aligned case ($G_r=t_k$ for all active $r$), each active rank emits exactly $t_k$ groups containing $\sum_g |g| \ge t_k \ge 1$ views; hence $\Phi$ strictly decreases by at least one. In an overflow case ($G_r>t_k$), the top-$t_k$ groups are emitted and the remainder is recirculated into $B_r$; the emitted groups again contain at least one view, so $\Phi$ decreases. If no active rank has enough material to emit, the round is a \texttt{skip\_output} fetch/drain round; it cannot repeat indefinitely because outstanding in-flight views are bounded by $D$ and the sampler is finite.
\end{proof}

\begin{theorem}[Round-Count Bound]
\label{thm:bounded_term}
The protocol terminates in at most $\lceil N/W \rceil + O(D)$ rounds.
\end{theorem}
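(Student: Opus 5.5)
The plan is to combine the No-Leak Invariant (Lemma~\ref{lem:no_leak}) with the potential-function dichotomy of Lemma~\ref{lem:progress}. Every outer communication round is either an \textsc{Emit} round or a \textsc{Skip} round, so it suffices to bound the two counts separately and add them. For emit rounds I would not use the global potential $\Phi$ directly, since $\Phi \le M = W\lceil N/W\rceil$ would only give an $O(N)$ bound. Instead I would work with the per-rank quantity $\Phi_r^{(k)} := |R_r^{(k)}|+|Q_r^{(k)}|+|B_r^{(k)}| = |\mathcal{D}_r| - |E_r^{(k)}|$. By Lemma~\ref{lem:no_leak} this is at most $\lceil N/W\rceil$.

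\textbf{Emit rounds.} The key claim is that in every \textsc{Emit} round, each active rank $r\in\mathcal{A}$ emits $t_k \ge 1$ nonempty groups, so $\Phi_r$ drops by at least one for every active rank simultaneously. This follows from the case analysis in the proof of Lemma~\ref{lem:progress}: in both the aligned case and the overflow case, each active rank emits exactly $t_k$ groups. I would then charge each emit round to some fixed rank that is active in it. The cleanest route is to track $\max_r \Phi_r$ restricted to ranks that still hold buffered material, and to note that a rank with $\Phi_r=0$ is finished and never again active. A simpler alternative: any rank that is active in an emit round loses at least one view, and a rank can be active in at most $\Phi_r^{(0)}=\lceil N/W\rceil$ emit rounds. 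Since every emit round has at least one active rank, this alone only gives $W\lceil N/W\rceil$. So the sharper bound needs the following observation. A rank that is not active in round $k$ has $B_r=\emptyset$, which means it is either fetching (a skip-type step for that rank, absorbed into the $O(D)$ term) or finished. In the latter case it has $\Phi_r=0$ and advertises $-1$, which ends the logical iteration in non-join mode, or which in join mode only happens once all views are drained.

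\textbf{Skip rounds.} These are the $O(D)$ term. A skip round moves views only from $R_r$ to $Q_r$ or from $Q_r$ to $B_r$, or it observes exhaustion. The fetched-but-not-emitted set $Q_r\uplus B_r$ is capped by $D$. So between two consecutive emit rounds, the number of skip rounds is bounded by the refill needed to bring a rank from an empty buffer to enough material to form a group. That refill is at most $D$ fetch/drain steps, and in the steady state the prefetch pipeline overlaps it. I would bound the total by the initial pipeline fill (at most $D$ rounds) plus a final drain or exhaustion phase (at most $D$ rounds plus one round to broadcast $-1$). This gives $O(D)$ in total, provided prefetch keeps the buffer nonempty whenever $R_r\neq\emptyset$.

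\textbf{Main obstacle.} The hardest step is the one just flagged: showing that skip rounds do not interleave with emit rounds once per emit, which would produce a $\lceil N/W\rceil\cdot D$ term. To handle it I would first try to state a pipeline invariant: while $R_r\ne\emptyset$, the iterator's scheduling keeps $|Q_r\uplus B_r|$ at the envelope $D$ after each fetch/drain. Under that invariant, an active rank enters a round with a nonempty buffer except during startup and exhaustion, so mid-epoch skip rounds cannot occur. If that invariant is too strong for the actual iterator, I would fall back to a weaker statement in which skip rounds are charged to fetched views (each skip round fetches at least one new view or signals exhaustion). That yields $\lceil N/W\rceil + O(D)$ only up to constants. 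This would still be consistent with the $O(N/W)+O(D)$ form of Theorem~\ref{thm:deadlock-free}.
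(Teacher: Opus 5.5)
Your plan is essentially the paper's proof. The paper also bounds the emitting rounds by the per-rank quota $q=\lceil N/W\rceil$, since every unfinished rank emits at least one view per aligned step and a rank with no material is treated as finished, and it bounds the non-emitting \texttt{skip\_output} rounds by $O(D)$ through the envelope $|Q_r|+|B_r|\le D$. The step you flag as the main obstacle is not proved there either: the paper simply asserts a ``bounded fetch/drain prefix'' and an ``$O(D)$ epilogue'', which amounts to taking your pipeline invariant (no unfinished rank is ever inactive mid-iteration) as part of the protocol model, so your primary route with that invariant stated explicitly coincides with it, whereas your fallback, as you say, only reaches the weaker $O(N/W)+O(D)$ form, and only if the fetch charging is done per rank rather than over all $W\lceil N/W\rceil$ views.
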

\begin{proof}
Let $q=\lceil N/W\rceil$ be the per-rank sampler-view quota under \texttt{DistributedSampler} with \texttt{drop\_last=False}. In every \textsc{Emit} outer round, aligned emission is not serialized by rank: after the bounded fetch/drain prefix, each unfinished active rank emits at least one sampler view in the shared step, while a rank with no remaining material is represented by the mode-specific finished state. Local split/overflow adjustment is internal to that round and does not add communication rounds. Therefore the shared protocol has at most $q$ emitting outer rounds before local sampler-view quotas are drained and the mode-specific termination predicate can be raised. Non-emitting \texttt{skip\_output} rounds only fetch/drain the bounded outstanding set ($|Q_r|+|B_r|\le D$ by the configured envelope) or observe sampler exhaustion, so they contribute an $O(D)$ epilogue. Thus each logical iteration terminates in at most $q+O(D)=\lceil N/W\rceil+O(D)$ outer rounds. This is a termination/quota bound only; identity-level coverage for non-join is not claimed by the proof and is handled empirically in App.~\ref{app:eta_identity}, while join mode gives the construction-level identity guarantee (Theorem~\ref{thm:strict-zero}).
\end{proof}

\subsection{Uniform all\_gather and Deadlock-Freedom}
\label{app:proofs_uniform}

\begin{lemma}[Uniform all\_gather invariant]
\label{lem:uniform_gather}
Every rank executes one unconditional primary \texttt{all\_gather} per outer iteration. If exact token-level loss scaling needs post-alignment token counts, the optional second \texttt{all\_gather} (Appendix~\ref{app:loss_scaling}) is gated by a deterministic predicate $\phi(\texttt{all\_n\_groups}, \texttt{all\_idx\_budgets})$ computed identically on every rank, so it is either executed by all or by none.
\end{lemma}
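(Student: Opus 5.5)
The proof works by reading the collate subprocess's outer loop as straight-line code and checking that every call to a collective is issued either unconditionally or under a guard that all ranks evaluate from the same data. Fix an outer iteration $k$. At its top, the primary \texttt{all\_gather} of \texttt{[idx\_budget$_r$, n\_groups$_r$, sizes$_r$]} (plus \texttt{tokens$_r$} when loss scaling is on) is issued before any rank-local branching. The only things that precede it are local fetch and drain operations from App.~\ref{app:proofs_state}, and these involve no communication. A rank with an empty buffer, or one that has already reached its local finish, does not skip the call. It still participates, advertising \texttt{n\_groups}$=0$ or $-1$. This is exactly why the protocol encodes ``insufficient data'' and ``finished'' as values inside the gathered payload rather than as early exits. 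So each rank issues exactly one primary gather per outer iteration, and it is the first communication of that iteration.

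The second step is to show that after the primary gather every rank holds an identical view of the round. Gloo's \texttt{all\_gather} returns the same concatenated tensor on every rank, so the vectors \texttt{all\_n\_groups}, \texttt{all\_idx\_budgets}, the sizes and the token counts are bitwise identical everywhere. Every quantity the protocol derives from them is then rank-independent, because each is a deterministic integer function of this common tuple. This covers the active set $\mathcal{A}$, the minima $C_{\min}^{+}$ and $S_{\min}^{+}$, the target $T_{\mathrm{grp}}$ from Eq.~\eqref{eq:target}, whether alignment is a no-op on each rank, and the global termination predicate. The phrase ``deterministic integer function'' carries real weight here: it rules out any dependence on local clocks, queue timing, or floating-point reductions.

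The third step handles the optional gather. By hypothesis its predicate $\phi$ depends only on \texttt{all\_n\_groups} and \texttt{all\_idx\_budgets}, which the previous step shows are identical on all ranks. Hence $\phi$ evaluates to the same Boolean on every rank, and the second \texttt{all\_gather} is executed by all ranks or by none. It is also issued at the same program point in iteration $k$ on each rank, namely after local alignment and before emission. Matched collectives on the dedicated Gloo group therefore pair up in order.

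The step I expect to be the main obstacle is showing that local split/overflow work and termination cannot desynchronize the iteration count. Split and overflow (Alg.~\ref{alg:alignment}) are purely local and contain no collectives. The paper bounds them in Lemma~\ref{lem:progress}, so every rank finishes iteration $k$ and reaches the top of iteration $k{+}1$. The loop-exit decision must also be a function of the gathered data alone. In join mode, exit occurs when every entry of \texttt{all\_n\_groups} is $-1$. In non-join mode, exit occurs when any entry is $-1$. Both conditions are evaluated from identical gathered vectors, so all ranks leave the loop after the same iteration, and no rank is left waiting alone in an \texttt{all\_gather}. Finally, the Gloo group is owned solely by the collate subprocess, so no interleaving with the trainer's NCCL collectives can occur. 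An induction on $k$ then gives the invariant.
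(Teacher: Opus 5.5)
Your proposal is correct and takes essentially the paper's approach. The paper's proof is the same two-step argument: the primary \texttt{all\_gather} is executed before any branch can return, and the optional second gather's predicate is a pure function of the identically broadcast tensor, so all ranks agree on it. Your extra step, where the loop exit is decided from the gathered vectors and you induct on the iteration index, is placed by the paper in the proof of Theorem~\ref{thm:deadlock-free} rather than in this lemma; it is consistent with the paper.
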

\begin{proof}
The protocol executes the primary \texttt{all\_gather} before any branch can return. The optional second \texttt{all\_gather}'s entry predicate is a pure function of the same broadcast tensor, so all ranks evaluate it identically.
\end{proof}

\begin{proof}[Proof of Theorem~\ref{thm:deadlock-free} (Bounded Termination $+$ Deadlock-Free)]
By induction on iteration $i$: if all ranks enter iteration $i$ together (base $i{=}0$ trivial), Lemma~\ref{lem:uniform_gather} ensures all execute the same \texttt{all\_gather} call(s) and observe identical broadcast tensors. The mode-specific termination predicate---all ranks advertise local finish in default join mode, or any rank advertises \texttt{n\_groups$_r$}$=-1$ in opt-in non-join---is computed from those shared tensors and therefore evaluated identically; hence all enter iteration $i{+}1$ together or all exit together. No rank can block on \texttt{all\_gather} because all ranks reach it. Bounded termination follows from Theorem~\ref{thm:bounded_term}.
\end{proof}

\subsection{Sample-Quota Closure}
\label{app:proofs_quota}

\begin{proof}[Proof of Theorem~\ref{thm:zero-discard} (No-Leak $+$ Sample-Quota Closure)]
The no-leak claim is exactly Lemma~\ref{lem:no_leak}.

\emph{Sample-quota closure.} The trainer side maintains an emitted-sample counter, accumulates the realized global per-step emitted-sample count, and terminates once the cumulative emit count reaches $N$. When a logical \texttt{DistributedSampler} iteration ends early (Theorem~\ref{thm:deadlock-free}), the outer training loop starts the next logical iteration with a re-shuffled sampler; the stopping condition is unaffected by these chained iterations. Let $S_{\max}$ be the largest realized global emit count of one aligned trainer step under the configured outstanding-depth envelope; the final quota crossing occurs in one such step, so $S_{\text{emit}} - N \le S_{\max}$.
\end{proof}

\begin{lemma}[Logical-discard bound, non-join]
\label{lem:logical_bound}
Within one logical sampler iteration under non-join termination, let $U_r:=Q_r \uplus B_r$ be the fetched-but-not-emitted outstanding set on rank $r$; sampler-pending views $R_r$ are not fetched and are not counted in $U_r$. Then the configured outstanding-depth envelope gives $|U_r| \le D$, hence $\eta_{\text{logical}} := \tfrac{1}{N} \sum_r |U_r| \le W \cdot D / N$. This is a per-iteration envelope, not a terminal identity-coverage statement. Non-join provides cumulative sample-quota closure (Theorem~\ref{thm:zero-discard}), and Appendix~\ref{app:eta_identity} reports terminal identity coverage for the evaluated Full FT cells. Strict per-iteration logical zero-discard and identity coverage are obtained by join mode (Theorem~\ref{thm:strict-zero}), where $|U_r|=0$ at termination by drain-then-signal.
\end{lemma}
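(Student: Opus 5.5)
The plan is to prove the envelope inequality first and then the two qualifying remarks. The envelope argument has three steps.

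\textbf{Step 1: set up the invariant.} Within one logical sampler iteration under non-join termination, I take the per-rank state $(R_r,Q_r,B_r,E_r)$ from App.~\ref{app:proofs_state}. By Lemma~\ref{lem:no_leak}, these four components partition $\mathcal{D}_r$ at every round. So at any round, in particular at the round where some rank advertises $-1$ and the iteration ends, every fetched-but-not-emitted view lies in exactly one of $Q_r$ or $B_r$. This makes $U_r=Q_r\uplus B_r$ well defined, with $|U_r|=|Q_r|+|B_r|$. Views still in $R_r$ have never left the sampler, so they are excluded by definition.

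\textbf{Step 2: show the envelope is maintained.} I argue by induction over the three primitives that $|Q_r|+|B_r|\le D$ at every round.
\begin{itemize}
\item \textsc{Drain} moves views from $Q_r$ to $B_r$, so the sum is unchanged.
\item \textsc{Emit} moves views out of $B_r$, so the sum decreases.
\item Overflow recirculation keeps views inside $B_r$, so the sum is unchanged.
\item A skip round is a no-op.
\item \textsc{Fetch} is the only primitive that increases the sum. The iterator issues it only while the outstanding count is below $\max(\texttt{pf}\cdot\texttt{nw},\texttt{buffer\_size})=D$, as in App.~\ref{app:proofs_state}.
\end{itemize}
The base case is $Q_r^{(0)}=B_r^{(0)}=\emptyset$. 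Together these give $|U_r|\le D$ at every round, including termination.

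\textbf{Step 3: sum over ranks.} Summing gives $\sum_r|U_r|\le W D$, and dividing by $N$ gives $\eta_{\text{logical}}\le WD/N$.

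\textbf{Remaining remarks.} For the per-iteration versus identity remark, I note that $U_r$ counts views and not identities. So the bound constrains the undelivered multiset only, and it says nothing about which identities are left out. Cumulative closure is deferred to Theorem~\ref{thm:zero-discard}. For the join-mode clause, I invoke Theorem~\ref{thm:strict-zero}: a rank advertises local finish only after draining its remaining and outstanding views. At global completion this means $R_r=Q_r=B_r=\emptyset$, hence $|U_r|=0$ and $E_r=\mathcal{D}_r$ by Lemma~\ref{lem:no_leak}.

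\textbf{Main obstacle.} The hard step is Step 2, specifically showing that \textsc{Fetch} is really gated so that prefetched worker items plus buffered items never exceed $D$. The worry is that the $\max$ in $D$ conflates two separate limits: the worker prefetch depth, which bounds $Q_r$, and the buffer capacity, which bounds $B_r$. If $Q_r$ and $B_r$ can fill simultaneously, the honest bound may be closer to $\texttt{pf}\cdot\texttt{nw}+\texttt{buffer\_size}\le 2D$. My first approach is to use the scheduling assumption stated in App.~\ref{app:proofs_state} that the joint set $Q_r\uplus B_r$ stays within $D$. If that assumption cannot be justified from the implementation, I would fall back to the $2D$ bound, which still gives $\eta_{\text{logical}}=O(WD/N)$.
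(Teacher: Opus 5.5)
Your proposal is correct and follows essentially the same route as the paper: place every fetched-but-not-emitted view in $Q_r \uplus B_r$ at the non-join stop point, bound $|Q_r|+|B_r|\le D$ using the configured outstanding-depth envelope (which also covers overflow groups recirculated into $B_r$), and sum over the $W$ ranks. Your primitive-by-primitive induction in Step~2 only unpacks the scheduling assumption of App.~\ref{app:proofs_state}, which the paper invokes directly without deriving the \textsc{Fetch} gating from the implementation, so your $2D$ caveat is a fair remark about that assumption rather than a gap in your argument.
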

\begin{proof}
At a non-join stop point, fetched-but-not-emitted views can reside only in the worker queue $Q_r$ or collate buffer $B_r$. ODB's outstanding-depth envelope bounds this set by $D$ on every rank, including overflow groups recirculated into $B_r$ after alignment. Therefore $|U_r|=|Q_r|+|B_r|\le D$ for every rank at the non-join stop point. Summing over $W$ ranks gives the stated bound.
\end{proof}

\subsection{Empirical \texorpdfstring{$\eta_{\text{quota}}$}{eta-quota}}
\label{app:eta_measurement}

We instantiate Theorem~\ref{thm:zero-discard} on the audited public Full-FT ODB runs by computing $\eta_{\text{quota,emp}} := \max(0, 1 - S_{\text{emit}}/N)$ from terminal trainer state. Across the 18 audited runs (3 datasets $\times$ 2 ODB configurations $\times$ 3 seeds), $\eta_{\text{quota,emp}} = 0$ uniformly with terminal epoch in $\{1.0000000, 1.0000334, 1.0000735\}$; the bounded overshoot is the final batch crossing the quota threshold (Theorem~\ref{thm:zero-discard}). The same quota check on the six 1000-sample synthetic distributions listed in App.~\ref{app:setup_details} gives $\eta_{\text{quota,emp}}=0$ with terminal epoch in $\{1.0000,1.0001\}$. Per-iteration logical bounds $\eta_{\text{logical}} \le W \cdot D / N$ for representative configurations are reported in Table~\ref{tab:eta_logical}; we use them only as worst-case protocol envelopes, while terminal measured quota and identity metrics are reported separately.

\begin{table}[ht]
\centering
\small
\begin{tabular}{lrrrr}
\toprule
Configuration & $N$ & $W$ & $D$ & $\eta_{\text{logical}}$ bound \\
\midrule
LLaVA 8B ($D{=}4096$)                       & 157,712 & 8 & 4,096 & 20.8\% \\
UltraChat 8B (ml8k pf256 buf256)            & 207,865 & 8 & 1,024 & 3.9\% \\
UltraChat 8B (ml8k pf1024 buf1024)          & 207,865 & 8 & 4,096 & 15.8\% \\
UltraChat 8B (ml16k pf512 buf1024)          & 207,865 & 8 & 2,048 & 7.9\% \\
ShareGPT4o 8B (ml4k pf1024)                 &  54,424 & 8 & 4,096 & 60.2\% \\
MM-Mix 8B (ml8k pf256)                      & 545,178 & 8 & 1,024 & 1.5\% \\
MM-Mix 8B (extreme, ml4k pf2048)            & 545,178 & 8 & 8,192 & 12.0\% \\
\bottomrule
\end{tabular}
\caption{Per-iteration logical-discard upper bound $\eta_{\text{logical}} \le W \cdot D / N$ (Lemma~\ref{lem:logical_bound}). The bound is a worst-case envelope on per-iteration un-emitted sampler views; cumulative-count $\eta_{\text{quota}}$ is driven to 0 by the trainer-side emitted-sample counter (Theorem~\ref{thm:zero-discard}) in the audited runs, and per-sample $\eta_{\text{identity}}$ is empirically $0$ on four terminal-state Full FT cells (Ultra and SGPT, 2B and 8B; App.~\ref{app:eta_identity}, Table~\ref{tab:eta_identity_audit}; surplus emits matching \texttt{DistributedSampler}'s deterministic tail-padding count, with the same 2B/8B surplus). Strict per-iteration $\eta_{\text{logical}}=0$ with identity coverage \emph{by construction} (independent of straggler asymmetry) is reserved for join mode (Theorem~\ref{thm:strict-zero}).}
\label{tab:eta_logical}
\end{table}

\subsection{Terminal Identity Coverage}
\label{app:eta_identity}

The $\eta_{\text{logical}}$ bound of Table~\ref{tab:eta_logical} is a worst-case per-iteration envelope. We also report the terminal identity metric $\eta_{\text{identity}} := 1 - |\bigcup_r \mathrm{IDs}_r| / N$, the fraction of dataset identities not emitted by the union of ranks at training termination. We measure it on four full-epoch Full FT cells: Ultra and SGPT at both 2B and 8B. These cells use the same protocol-relevant settings as their \S\ref{sec:eval} counterparts---world size $W{=}8$, ODB knobs, seed, batch policy, and \texttt{DistributedSampler(drop\_last=False)}---while using the full unsplit datasets for unambiguous $N$ accounting.

\paragraph{Terminal-state result.}
\textbf{All four measured cells have $\eta_{\text{identity}} = 0$}: $|\bigcup_r \mathrm{IDs}_r| = N$. The surplus emit count $\sum_r |\text{emits}_r| - N \in \{4, 7\}$ matches the deterministic tail padding of \texttt{DistributedSampler} with \texttt{drop\_last=False}: $W - (57{,}284 \bmod 8) = 4$ for SGPT and $W - (207{,}865 \bmod 8) = 7$ for Ultra. Matched 2B and 8B runs have the same surplus count on each dataset (Table~\ref{tab:eta_identity_audit}).

\paragraph{Measured invariants.}
Two empirical invariants, both observed in all four measured cells, are sufficient to conclude $\eta_{\text{identity}} = 0$:

\noindent\textbf{Shard-bounded emit, no extra duplicates.} Every emitted ID is a valid dataset identity (a member of $r$'s own \texttt{DistributedSampler} shard $S_r$), and the union of per-rank emitted-ID records contains no cross-rank duplicates beyond the $P = W \lceil N/W \rceil - N$ deterministic sampler-padding reuses (4 for SGPT, 7 for Ultra).

\noindent\textbf{Per-rank emit count.} By termination each rank has emitted exactly $q = \lceil N/W \rceil$ sampler views (7,161 each for SGPT; 25,984 each for Ultra; same in matched 2B/8B runs).

These two invariants then yield identity coverage as a deterministic implication:

\begin{proposition}[Identity closure from measured invariants]
\label{prop:identity_closure_audit}
Consider a non-join ODB run with \texttt{DistributedSampler(drop\_last=False)} over $N$ dataset identities and world size $W$. Let $q = \lceil N/W \rceil$, $M = Wq$, and $P = M - N$ be the deterministic sampler-padding surplus. If the terminal emitted-ID logs satisfy:
\emph{(a)} all emitted IDs are valid dataset identities and contain no duplicates beyond the $P$ deterministic sampler-padding reuses; and
\emph{(b)} each rank emits exactly $q$ sampler views by termination,
then the terminal identity coverage is exact: $\eta_{\text{identity}} = 0$.
\end{proposition}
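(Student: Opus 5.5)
The argument is a counting (pigeonhole) argument over the terminal emitted-ID logs. Let $\mathcal{E}$ be the multiset union of all per-rank emitted-ID records at termination, and let $U=\bigcup_r \mathrm{IDs}_r$ be its underlying set. By definition $\eta_{\text{identity}} = 1 - |U|/N$, so it suffices to show $|U| \ge N$. Hypothesis (a) forces $U \subseteq \mathcal{I}$, which gives $|U| \le N$; the work is the reverse inequality.

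First, I size $\mathcal{E}$ using hypothesis (b). Each of the $W$ ranks emits exactly $q$ sampler views, so $|\mathcal{E}| = Wq = M = N + P$. This is consistent with the view-multiset accounting of Lemma~\ref{lem:no_leak}.

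Second, I convert multiset size into set size via the duplicate excess. The identity
\[
|\mathcal{E}| - |U| = \sum_{i \in U} \bigl(m(i) - 1\bigr)
\]
holds, where $m(i)$ is the multiplicity of $i$ in $\mathcal{E}$. Hypothesis (a) bounds the right-hand side by $P$. This step is where I expect the only real care: I must read ``no duplicates beyond the $P$ deterministic sampler-padding reuses'' as a bound on this excess, counting both within-rank and cross-rank repeats. Under that reading the sum is at most $P$, so $|U| \ge |\mathcal{E}| - P = N$.

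Combining the two inequalities gives $|U| = N$, i.e.\ $U = \mathcal{I}$ and $\eta_{\text{identity}} = 0$.

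I would add a remark that the proposition is purely combinatorial. It relies on no protocol property beyond the two audited invariants, which is why it is phrased as an implication from the measured logs. One might worry that the $P$ padding reuses instead duplicate identities that would otherwise be missing. This cannot happen, because the pigeonhole count already accounts for every repeat regardless of which identities are repeated.
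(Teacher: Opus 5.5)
Your proposal is correct and matches the paper's proof: (b) gives $Wq = M = N+P$ total emitted views, (a) caps the duplicate excess at $P$, so at least $N$ distinct valid identities are emitted, which forces the union to be the whole $N$-element identity set. Your multiplicity identity $|\mathcal{E}|-|U|=\sum_{i\in U}(m(i)-1)$ simply makes precise the paper's phrase ``number of duplicate identity emissions,'' and your reading of (a) as a bound on that excess is the one the paper's proof uses.
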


\begin{proof}
By condition (b), the run emits $M = Wq = N + P$ sampler views in total. By condition (a), the number of duplicate identity emissions is at most the deterministic padding surplus $P$. Therefore the number of unique emitted dataset identities is at least $M - P = N$. Since all emitted IDs are valid dataset identities (and the dataset has only $N$ identities), the unique-emit set is a subset of size at least $N$ of an $N$-element set, hence equals it: $\bigcup_r \mathrm{IDs}_r = \{0, \ldots, N-1\}$, and $\eta_{\text{identity}} = 1 - |\bigcup_r \mathrm{IDs}_r| / N = 0$.
\end{proof}

\noindent The argument is conditional: it converts the two measured invariants into a closed-form identity-closure conclusion for these cells, but does not promote the empirical premises into a non-join formal guarantee in general (which is reserved for join mode, Theorem~\ref{thm:strict-zero}).

\begin{table}[ht]
\centering
\small
\setlength{\tabcolsep}{3.5pt}
\begin{tabular}{llrrrrrr}
\toprule
\multicolumn{1}{l}{Measured cell} & ODB setting & $N$ & $D$ & emits / rank & total emits & dup vs $W{-}N\!\bmod\!W$ & $\eta_{\text{identity}}$ \\
\midrule
\multicolumn{8}{l}{\emph{Terminal-state Full FT cells (full epoch)}} \\
SGPT 2B  & ml4k pf1024 & 57{,}284 & 4{,}096 & 7{,}161 ($\times 8$) & 57{,}288 & 4 vs $\mathbf{4}$ \checkmark & $\mathbf{0\%}$ \\
SGPT 8B  & ml4k pf1024 & 57{,}284 & 4{,}096 & 7{,}161 ($\times 8$) & 57{,}288 & 4 vs $\mathbf{4}$ \checkmark & $\mathbf{0\%}$ \\
Ultra 2B & ml8k buf256 & 207{,}865 & 1{,}024 & 25{,}984 ($\times 8$) & 207{,}872 & 7 vs $\mathbf{7}$ \checkmark & $\mathbf{0\%}$ \\
Ultra 8B & ml8k buf256 & 207{,}865 & 1{,}024 & 25{,}984 ($\times 8$) & 207{,}872 & 7 vs $\mathbf{7}$ \checkmark & $\mathbf{0\%}$ \\
\bottomrule
\end{tabular}
\caption{Terminal identity coverage on four Full FT cells (Ultra and SGPT, 2B and 8B, full epoch). The ``ODB setting'' column names the two audited configurations; ``dup vs $W{-}N\!\bmod\!W$'' compares observed surplus emits with the deterministic \texttt{DistributedSampler(drop\_last=False)} tail-padding count. All rows cover $N$ unique dataset identities; matched 2B/8B runs have the same surplus count on each dataset.}
\label{tab:eta_identity_audit}
\end{table}

\paragraph{Relation to the $\eta_{\text{logical}}$ envelope.}
The distinction is that $\eta_{\text{logical}}$ bounds sampler views that may remain un-emitted within one logical iteration, whereas $\eta_{\text{identity}}$ measures the union of emitted dataset identities at termination. In the four measured Full FT cells, the union covers all $N$ identities. The construction-level identity guarantee for arbitrary configurations is provided by join mode (Theorem~\ref{thm:strict-zero}).

\section{Grouping Algorithm Example}
\label{app:grouping_example}

We illustrate the grouping algorithm (Section~\ref{sec:batch_sizing}) with a concrete example.
Suppose $L_{\max} = 1000$ and a rank's buffer contains four samples with lengths $\{100, 200, 500, 800\}$.

\medskip
\noindent Sort ascending, $[100, 200, 500, 800]$, initialize threshold $t = 1$, and iterate from longest to shortest:

\begin{enumerate}[nosep]
    \item \textbf{Sample 800}: group $= [800]$. Size $= 1$, $t = 1$; size $\geq t$, so finalize $G_1 = [800]$.\\
    Update $t \leftarrow B(800) = \lfloor 1000/800 \rfloor = 1$.
    \item \textbf{Sample 500}: group $= [500]$. Size $= 1$, $t = 1$; size $\geq t$, so finalize $G_2 = [500]$.\\
    Update $t \leftarrow B(500) = \lfloor 1000/500 \rfloor = 2$.
    \item \textbf{Sample 200}: group $= [200]$. Size $= 1$, $t = 2$; size $< t$, continue.
    \item \textbf{Sample 100}: group $= [100, 200]$. Size $= 2$, $t = 2$; size $\geq t$, so finalize $G_3 = [100, 200]$.\\
    Update $t \leftarrow B(100) = \lfloor 1000/100 \rfloor = 10$.
\end{enumerate}

\noindent\textbf{Result}: three groups, ordered from short to long:

\begin{center}
\small
\begin{tabular}{lccc}
\toprule
Group & Samples (lengths) & Padded to & Padded tokens \\
\midrule
$G_3$ & 100, 200 & 200 & 400 \\
$G_2$ & 500 & 500 & 500 \\
$G_1$ & 800 & 800 & 800 \\
\bottomrule
\end{tabular}
\end{center}

\noindent The threshold carry-over is the key mechanism: $G_2$ contains only one sample because the threshold inherited from $G_1$ is $t = B(800) = 1$, which is already met.
Meanwhile, the updated $t = B(500) = 2$ requires $G_3$ to accumulate two samples before finalizing, grouping the two shortest sequences together.
With more samples of similar lengths (the typical case), each group's padded-token cost approaches~$L_{\max}$.

\section{Protocol State Machine}
\label{app:state_machines}

Figure~\ref{fig:sm_nojoin} renders the non-join per-iteration state machine. It is the executable transition system abstracted in Appendix~\ref{app:proofs}: the Lyapunov potential of Appendix~\ref{app:proofs_lyapunov} contracts on active-emission rounds after bounded local split/overflow adjustment, and is unchanged on finite \texttt{skip} rounds (Lemma~\ref{lem:progress}). The transition to \texttt{Logical Stop} occurs when any rank signals $g_r=-1$; the trainer-side quota counter then chains logical iterations until the sample quota is met (Theorem~\ref{thm:zero-discard}). In the figure, $g_r$ denotes rank~$r$'s gathered group-status signal and $b_r$ denotes rank~$r$'s \texttt{idx\_budget}.

\begin{figure}[ht]
    \centering
    \begin{tikzpicture}[
        state/.style={circle, draw, minimum size=1.1cm, font=\scriptsize, align=center, line width=0.6pt},
        init/.style={circle, fill=black, minimum size=4pt, inner sep=0pt},
        arr/.style={-{Stealth[length=4pt]}, line width=0.6pt},
        trans/.style={font=\tiny, align=center, text=gray!30!black},
    ]
    \node[init] (start) {};
    \node[state, fill=green!10, right=1.4cm of start] (run) {Running};
    \node[state, fill=red!10, right=3.0cm of run] (done) {Logical\\Stop};
    \node[init, right=1.4cm of done] (end) {};
    \node[font=\tiny, right=0.1cm of end] {exit};

    \draw[arr] (start) -- (run);
    \draw[arr] (run) to[loop above, looseness=5, out=110, in=70] node[trans, above=2pt] {active emit:\\$\forall r: g_r \geq 0$\\$\min_r b_r > 0$\\$\Phi^{(k+1)} \le \Phi^{(k)} - 1$} (run);
    \draw[arr] (run) to[loop below, looseness=5, out=-110, in=-70] node[trans, below=2pt] {skip\_output:\\$\forall r: g_r \geq 0$\\$\min_r b_r = 0$\\$\Phi$ unchanged, $|R|$ drains} (run);
    \draw[arr] (run) -- node[trans, above] {$\exists r: g_r = -1$} (done);
    \draw[arr] (done) -- node[trans, above] {SENTINEL} (end);
    \end{tikzpicture}
    \caption{Per-iteration state machine of the Unified Loop Protocol in non-join mode. Transition guards reference the per-rank state of Appendix~\ref{app:proofs_state}; $g_r$ is rank~$r$'s gathered group-status signal, $b_r$ is rank~$r$'s \texttt{idx\_budget}, and $\Phi$ is the Lyapunov potential of Appendix~\ref{app:proofs_lyapunov}. The active self-loop contracts $\Phi$ on emission after bounded local split/overflow adjustment (Lemma~\ref{lem:progress}); the \texttt{skip\_output} self-loop is finite by sampler exhaustion.}
    \label{fig:sm_nojoin}
\end{figure}

\section{Two-Node Validation}
\label{app:multinode}

\paragraph{Empty-rank join-mode audit.}
We evaluate a two-node H20 empty-rank multi-node case with 8 GPUs per node (16 ranks) outside the equal-samples premise of Theorem~\ref{thm:zero-discard}. Because this setting intentionally violates that premise, we use it only as a liveness audit for deadlock-freedom and bounded termination, not as a quota- or identity-coverage audit.

We run the DataLoader-level ODB path in join mode. The one-node 8-rank companion run terminates with \texttt{PASS world=8}; the 2-node run uses 8 H20 GPUs per node (16 ranks total), sets global rank 15 as the exhausted empty rank, and terminates with \texttt{PASS world=16 empty\_rank=15}. Table~\ref{tab:multinode_audit} reports the per-rank sampler assignment, trainer-side outputs, and post-join liveness flags. In both audits all active ranks emit batches, the exhausted empty rank emits zero batches, and every collate subprocess exits.

\begin{table}[ht]
\centering
\caption{Two-node empty-rank join-mode audit: per-rank sampler views and trainer-side outputs. Steps are optimizer updates per rank; Emitted is the number of sampler views emitted to the trainer, so Emitted can exceed Steps under dynamic batching. Done=1 indicates clean collate-subprocess exit after join.}
\label{tab:multinode_audit}
\scriptsize
\setlength{\tabcolsep}{2.2pt}
\begin{tabular}{@{}l*{16}{r}@{}}
\toprule
Rank & 0 & 1 & 2 & 3 & 4 & 5 & 6 & 7 & 8 & 9 & 10 & 11 & 12 & 13 & 14 & 15 \\
\midrule
Assigned & 47 & 46 & 45 & 44 & 43 & 42 & 41 & 40 & 39 & 38 & 37 & 36 & 35 & 34 & 33 & 0 \\
Steps & 35 & 35 & 33 & 33 & 33 & 33 & 31 & 31 & 31 & 31 & 29 & 29 & 29 & 29 & 27 & 0 \\
Emitted & 46 & 46 & 42 & 42 & 42 & 42 & 38 & 38 & 38 & 38 & 34 & 34 & 34 & 34 & 31 & 0 \\
Done & 1 & 1 & 1 & 1 & 1 & 1 & 1 & 1 & 1 & 1 & 1 & 1 & 1 & 1 & 1 & 1 \\
\bottomrule
\end{tabular}
\end{table}

The 21-view assigned--emitted difference is expected in this unequal-partition audit outside the theorem premise and is not interpreted as a coverage metric: the empty-rank construction breaks the equal per-rank sampler-quota premise of Theorem~\ref{thm:zero-discard}, and cross-rank alignment gates emission through the global target $T_{\mathrm{grp}}$ in Eq.~\ref{eq:target}. The measurement therefore targets liveness---no deadlock, bounded termination, and clean subprocess exit---rather than quota or identity closure.

These audits exercise the join-mode drain-then-signal path for exhausted empty ranks. They are liveness audits rather than the source of the main-table throughput rows, and they do not claim support for an active rank that advertises zero samples before it has reached the finished state.

\paragraph{Two-node Full FT training with MMMU-MC likelihood scores.}
We also report a two-node RDMA-enabled 16-rank Qwen3-VL Full FT validation. Multimodal Score cells use \texttt{MMMU-MC-choice-}\allowbreak\texttt{likelihood-v1}; training and validation-loss metrics use the same aggregation protocol as Table~\ref{tab:public_fullft}. All 72 two-node MMMU-MC seed cells pass the protocol validation checks (evaluated=850, excluded=50, total=900; no non-finite score audit failures). This table is multi-node validation evidence rather than the single-node headline setting of Table~\ref{tab:public_fullft}.
On 2B, ODB reaches $1.71\times$ on LLaVA, $3.76\times$ on ShareGPT4o, and $2.86\times$ on UltraChat; under MMMU-MC likelihood, ODB remains in the Standard/oracle quality band on both two-node 2B multimodal rows. On 8B, ODB reaches $1.80\times$ on LLaVA and $3.78\times$ on ShareGPT4o; it is highest on ShareGPT4o and within 0.08 pp of the best LLaVA score.

\begin{table}[ht]
\centering
\caption{\textbf{Two-node Full FT Results with MMMU-MC likelihood scores.} 2 nodes $\times$ 8 H20 GPUs (16 ranks), RDMA-enabled Qwen3-VL Full FT, DeepSpeed ZeRO-2/bf16. Rows are 3-seed mean$\pm$std. LLaVA and ShareGPT4o Score cells use \texttt{MMMU-MC-choice-likelihood-v1}; UltraChat uses MMLU. Oracle rows use scalar length caches for batch construction and exclude cache construction from reported training throughput.}
\label{tab:multinode_fft}
\tiny
\setlength{\tabcolsep}{1pt}
\resizebox{\linewidth}{!}{%
\begin{tabular}{@{}llcccc@{\hspace{4pt}}cccc@{}}
\toprule
 & & \multicolumn{4}{c}{\textbf{8B (2-node Full FT)}} & \multicolumn{4}{c}{\textbf{2B (2-node Full FT)}} \\
\cmidrule(lr){3-6} \cmidrule(lr){7-10}
Dataset & Method & sam/s & Speedup & Score & Val Loss & sam/s & Speedup & Score & Val Loss \\
\midrule
\multirow{7}{*}{\shortstack[l]{UltraChat\\CV=0.48\\MMLU}}
& Standard & $6.82 \pm 0.40$ & 1.00$\times$ & $74.73 \pm 0.17$\% & $0.856 \pm 0.001$ & $24.56 \pm 0.12$ & 1.00$\times$ & $58.82 \pm 0.30$\% & $0.971 \pm 0.001$ \\
& Sorted & $11.36 \pm 1.32$ & 1.67$\times$ & $75.14 \pm 0.17$\% & $0.883 \pm 0.002$ & $47.66 \pm 2.63$ & 1.94$\times$ & $59.04 \pm 0.07$\% & $1.018 \pm 0.001$ \\
& Packing\textsuperscript{a} & $20.19 \pm 0.66$ & 2.96$\times$ & $75.91 \pm 0.01$\% & $1.172 \pm 0.006$ & $72.23 \pm 1.44$ & 2.94$\times$ & $60.00 \pm 0.09$\% & $1.174 \pm 0.001$ \\
& GMT-oracle\textsuperscript{b} & $17.38 \pm 1.54$ & 2.55$\times$ & $75.10 \pm 0.19$\% & $0.856 \pm 0.001$ & $68.44 \pm 1.44$ & 2.79$\times$ & $59.28 \pm 0.10$\% & $0.992 \pm 0.001$ \\
& BMT-oracle\textsuperscript{b} & $17.92 \pm 0.99$ & 2.63$\times$ & $75.56 \pm 0.10$\% & $0.859 \pm 0.001$ & $64.05 \pm 3.02$ & 2.61$\times$ & $59.16 \pm 0.20$\% & $0.984 \pm 0.001$ \\
& HFG-oracle\textsuperscript{b} & $10.75 \pm 0.12$ & 1.58$\times$ & $75.19 \pm 0.17$\% & $0.857 \pm 0.001$ & $35.93 \pm 1.54$ & 1.46$\times$ & $59.05 \pm 0.34$\% & $0.980 \pm 0.001$ \\
& \textbf{ODB} & $\mathbf{18.80 \pm 0.03}$ & \textbf{2.76$\times$} & $75.47 \pm 0.10$\% & $0.859 \pm 0.001$ & $\mathbf{70.15 \pm 1.86}$ & \textbf{2.86$\times$} & $59.09 \pm 0.11$\% & $1.016 \pm 0.002$ \\
\midrule
\multirow{6}{*}{\shortstack[l]{LLaVA\\CV=0.29\\MMMU-MC}}
& Standard & $25.42 \pm 0.13$ & 1.00$\times$ & $54.63 \pm 0.07$\% & $1.032 \pm 0.001$ & $87.10 \pm 0.64$ & 1.00$\times$ & $40.08 \pm 0.56$\% & $1.172 \pm 0.002$ \\
& Sorted & $33.53 \pm 2.89$ & 1.32$\times$ & $54.82 \pm 0.61$\% & $1.131 \pm 0.002$ & $119.81 \pm 0.42$ & 1.38$\times$ & $38.63 \pm 0.14$\% & $1.294 \pm 0.001$ \\
& GMT-oracle\textsuperscript{b} & $47.98 \pm 0.56$ & 1.89$\times$ & $54.63 \pm 0.98$\% & $1.074 \pm 0.002$ & $155.91 \pm 2.91$ & 1.79$\times$ & $38.90 \pm 0.07$\% & $1.203 \pm 0.002$ \\
& BMT-oracle\textsuperscript{b} & $47.16 \pm 0.29$ & 1.86$\times$ & $54.51 \pm 0.18$\% & $1.074 \pm 0.002$ & $152.81 \pm 2.42$ & 1.75$\times$ & $39.21 \pm 0.34$\% & $1.203 \pm 0.002$ \\
& HFG-oracle\textsuperscript{b} & $33.75 \pm 0.33$ & 1.33$\times$ & $55.10 \pm 0.07$\% & $1.069 \pm 0.002$ & $113.67 \pm 0.51$ & 1.31$\times$ & $40.78 \pm 0.89$\% & $1.174 \pm 0.002$ \\
& \textbf{ODB} & $\mathbf{45.72 \pm 0.35}$ & \textbf{1.80$\times$} & $55.02 \pm 0.47$\% & $1.105 \pm 0.001$ & $\mathbf{148.80 \pm 0.78}$ & \textbf{1.71$\times$} & $40.24 \pm 0.83$\% & $1.239 \pm 0.002$ \\
\midrule
\multirow{6}{*}{\shortstack[l]{ShareGPT4o\\CV=1.00\\MMMU-MC}}
& Standard & $2.73 \pm 0.41$ & 1.00$\times$ & $53.06 \pm 0.85$\% & $1.218 \pm 0.006$ & $8.33 \pm 0.05$ & 1.00$\times$ & $41.53 \pm 0.20$\% & $1.293 \pm 0.006$ \\
& Sorted & $3.09 \pm 0.02$ & 1.13$\times$ & $53.29 \pm 0.12$\% & $1.240 \pm 0.006$ & $8.69 \pm 0.11$ & 1.04$\times$ & $39.84 \pm 0.25$\% & $1.320 \pm 0.006$ \\
& GMT-oracle\textsuperscript{b} & $3.37 \pm 0.02$ & 1.23$\times$ & $54.35 \pm 1.16$\% & $1.217 \pm 0.006$ & $8.50 \pm 0.01$ & 1.02$\times$ & $41.10 \pm 0.65$\% & $1.292 \pm 0.006$ \\
& BMT-oracle\textsuperscript{b} & $3.13 \pm 0.32$ & 1.15$\times$ & $53.61 \pm 0.74$\% & $1.218 \pm 0.006$ & $9.01 \pm 0.07$ & 1.08$\times$ & $41.18 \pm 1.14$\% & $1.293 \pm 0.006$ \\
& HFG-oracle\textsuperscript{b} & $2.91 \pm 0.35$ & 1.07$\times$ & $53.33 \pm 0.95$\% & $1.217 \pm 0.005$ & $9.59 \pm 0.08$ & 1.15$\times$ & $40.90 \pm 0.80$\% & $1.293 \pm 0.006$ \\
& \textbf{ODB} & $\mathbf{10.32 \pm 0.46}$ & \textbf{3.78$\times$} & $55.25 \pm 0.47$\% & $1.230 \pm 0.006$ & $\mathbf{31.35 \pm 0.37}$ & \textbf{3.76$\times$} & $41.45 \pm 0.67$\% & $1.356 \pm 0.005$ \\
\bottomrule
\end{tabular}}
\vspace{0.5mm}
\begin{minipage}{\linewidth}
\raggedright\scriptsize
\textsuperscript{a}Packing is UltraChat-only and not drop-in for multimodal training in our stack; bold throughput/speedup marks ODB among online DDP batchers. Packing val\_loss uses the packed-sequence denominator and is not directly comparable to per-original-sample rows.\quad
\textsuperscript{b}GMT-/BMT-/HFG-oracle rows use scalar oracle length caches for batch construction; reported throughput excludes cache construction.\quad
MMMU-MC scores are choice-likelihood scores over the 850/900 letter-labeled MMMU validation rows; the 50 non-letter rows are excluded by protocol.
\end{minipage}
\end{table}

\section{Auxiliary Instruction-Following Evaluation}
\label{app:ifeval}

As an additional supervised fine-tuning (SFT) quality check, we evaluate the same Full FT UltraChat and ShareGPT4o checkpoints on IFEval~\cite{zhou2023ifeval}, a rule-based instruction-following benchmark with verifiable text constraints.
This check is intended to complement the MMLU/MMMU-MC and validation-loss evidence in the main paper: MMLU-style multiple-choice accuracy is not the only proxy for SFT behavior, while IFEval measures whether a generated response follows explicit formatting and content constraints.
We use greedy decoding with max 1024 new tokens and the standard strict/loose IFEval rule checks.
The ShareGPT4o rows are checkpoints trained on multimodal data, but IFEval itself is text-only; we therefore interpret it as an instruction-following auxiliary evaluation rather than a visual/multimodal benchmark.
Table~\ref{tab:ifeval_appendix} reports this auxiliary check alongside the corresponding main-task score for the same checkpoint group.

\begin{table}[ht]
\centering
\caption{\textbf{Auxiliary IFEval instruction-following check.} Values are mean$\pm$std over three matched seeds, reported as percentages. Main is MMLU for UltraChat and MMMU-MC choice-likelihood accuracy for ShareGPT4o; IFEval uses 541 prompts and 834 instructions.}
\label{tab:ifeval_appendix}
\scriptsize
\setlength{\tabcolsep}{2.4pt}
\begin{tabular}{@{}lllrrrrr@{}}
\toprule
Scale & Data & Method & Main & P-strict & P-loose & I-strict & I-loose \\
\midrule
2B & UltraChat   & Standard & $58.17{\pm}0.06$ & $25.02{\pm}2.67$ & $29.82{\pm}3.15$ & $37.81{\pm}2.37$ & $42.97{\pm}2.43$ \\
2B & UltraChat   & ODB      & $58.98{\pm}0.18$ & $27.91{\pm}1.51$ & $30.75{\pm}1.30$ & $41.45{\pm}1.34$ & $44.16{\pm}1.20$ \\
2B & ShareGPT4o & Standard & $41.29{\pm}0.71$ & $45.78{\pm}1.41$ & $50.34{\pm}0.77$ & $56.83{\pm}1.18$ & $61.43{\pm}0.78$ \\
2B & ShareGPT4o & ODB      & $40.12{\pm}0.36$ & $52.19{\pm}1.02$ & $57.18{\pm}0.21$ & $63.39{\pm}0.39$ & $68.15{\pm}0.39$ \\
8B & UltraChat   & Standard & $72.85{\pm}0.40$ & $10.54{\pm}0.67$ & $13.43{\pm}0.11$ & $23.18{\pm}0.69$ & $27.10{\pm}0.42$ \\
8B & UltraChat   & ODB      & $74.75{\pm}0.11$ & $13.68{\pm}0.67$ & $16.51{\pm}1.05$ & $26.50{\pm}0.86$ & $30.82{\pm}1.05$ \\
8B & ShareGPT4o & Standard & $52.43{\pm}0.44$ & $71.53{\pm}0.67$ & $77.14{\pm}0.59$ & $79.62{\pm}0.48$ & $83.77{\pm}0.60$ \\
8B & ShareGPT4o & ODB      & $53.88{\pm}0.20$ & $77.82{\pm}0.67$ & $81.95{\pm}0.47$ & $84.33{\pm}0.30$ & $87.33{\pm}0.28$ \\
\bottomrule
\end{tabular}
\end{table}

On this auxiliary instruction-following check, ODB remains in the same quality band as Standard and has positive mean differences on the four IFEval metrics in each of the four matched settings.
At the seed level, prompt-level strict accuracy is positive for ODB in all 12 matched pairs, while prompt-level loose accuracy is positive in 11/12 pairs (the remaining 2B UltraChat seed has a small negative loose-score difference).
This auxiliary check does not indicate degraded SFT instruction-following behavior in these matched checkpoints, while the primary quality evidence remains the validation-loss and task-benchmark results reported in Tables~\ref{tab:public_fullft}, \ref{tab:multinode_fft}, and \ref{tab:public_lora}.

\section{Quality Hyperparameter Sensitivity}
\label{app:quality_hp}

The quality results in Tables~\ref{tab:public_fullft} and~\ref{tab:public_lora} use ODB configurations selected per dataset by the protocol in \S\ref{sec:eval}.
Since ODB fixes $\text{bs}=1$, the primary quality-sensitive knob is $L_{\max}$; \texttt{prefetch\_factor} controls outstanding depth and is treated as a throughput/overlap knob.
For low-CV datasets (e.g., LLaVA, CV=0.29), $L_{\max}$ significantly affects quality because most samples are short ($<$2048 tokens): a large $L_{\max}$ causes ODB to pack many short samples into a single step, creating an effective batch size that deviates from standard training.

Table~\ref{tab:hp_search} shows an auxiliary generated-answer MMMU sensitivity check for the LLaVA LoRA (8B) $L_{\max}$ sweep (one validation seed).
This auxiliary generated-answer setting is not mixed with the main MMMU-MC tables.
In this auxiliary generated-answer setting, $L_{\max}{=}4096$ matches standard training quality (28.2\% vs.\ 28.4\%), while the throughput-optimal configuration ($L_{\max}{=}16384$) is 1.3\,pp lower.

\begin{table}[ht]
\centering
\caption{Auxiliary LLaVA ODB $L_{\max}$ sensitivity check (Qwen3-VL-8B, LoRA, generated-answer MMMU validation, one seed). ODB bs=1 (fixed), \texttt{pf}=256 (default). Standard baseline: bs=8, generated-answer MMMU=28.4\%. Main multimodal benchmark tables use parser-free MMMU-MC.}
\label{tab:hp_search}
\footnotesize
\setlength{\tabcolsep}{4pt}
\begin{tabular}{@{}ccc@{}}
\toprule
$L_{\max}$ & Gen-answer MMMU & $\Delta$ vs.\ Std (pp) \\
\midrule
16384 & 27.1\% & $-$1.3\,pp \\
8192 & 27.6\% & $-$0.8\,pp \\
\textbf{4096} & \textbf{28.2\%} & \textbf{$-$0.2\,pp} \\
\bottomrule
\end{tabular}
\vspace{1mm}

{\scriptsize $L_{\max}$ controls the dynamic batch size via $B(l) = \lfloor L_{\max}/l \rfloor$; larger values pack more short samples per step, shifting the effective batch composition further from standard training. The 8192 result is the mean of two independent runs (27.7\%, 27.4\%). All main multimodal benchmark scores use MMMU-MC.}
\end{table}

The pattern is consistent: on low-CV data where samples are homogeneous, ODB should use a conservative $L_{\max}$ to keep the dynamic batch composition close to standard training.
Accordingly, the high-CV ShareGPT4o rows use the selected throughput configurations without a quality-specific $L_{\max}$ override, while remaining within the paper's Standard-comparable quality framing.

\section{Experimental Setup Details}
\label{app:setup_details}

Per-sample tokenized length statistics (full-pass under the Qwen3-VL tokenizer with vision tokens expanded at load time) and the resulting \texttt{cutoff\_len}:

\begin{table}[ht]
\centering
\caption{Per-sample tokenized length statistics. ShareGPT4o has the longest tail (Max 12K) and the highest CV.}
\label{tab:length_stats}
\footnotesize
\setlength{\tabcolsep}{4pt}
\begin{tabular}{@{}lccccccc@{}}
\toprule
Dataset & Samples & Mean & Median & P95 & P99 & Max & \texttt{cutoff\_len} \\
\midrule
UltraChat  & 207{,}865 & 1{,}196 & 1{,}104 & 2{,}239 & 3{,}065 & 4{,}471  & 8192 \\
LLaVA      & 157{,}712 &   508 &   463 &   814 &   914 & 1{,}260  & 2048 \\
ShareGPT4o &  57{,}284 & 1{,}511 &   977 & 4{,}584 & 8{,}937 & 12{,}110 & 16384 \\
\bottomrule
\end{tabular}
\end{table}

The 6 synthetic distributions used in correctness audits (1000 samples each): uniform-narrow $\mathcal{U}$[64,512], uniform-wide $\mathcal{U}$[64,2048], longtail (90\% short / 10\% long), bimodal (50/50), all-long $\mathcal{U}$[1800,2048], and all-short $\mathcal{U}$[32,64].

\paragraph{GMT/BMT distributed batching (\S\ref{sec:e2e}, \S\ref{sec:related}).}
GMT and BMT are rank-replicated oracle samplers: every rank computes the same global batch list (GMT: ascending-length sort plus greedy packing against a max-token budget; BMT: epoch-seeded shuffle, sample-count buckets, within-bucket length sort, greedy packing, then batch shuffle). Following fairseq-style max-token batching, feasibility is computed on padded token area, $\max_{i\in b} l_i \cdot |b| \le \texttt{max\_tokens\_budget}$, with singleton overflows allowed in our oracle implementation to preserve zero truncation and full-epoch coverage. The list is then padded to a multiple of the world size by wrap-around repetition of the leading batches and assigned to ranks by striding, guaranteeing identical step count on every rank. Last-batch handling is by construction: the wrap-around padding adds at most $W{-}1$ repeated batches per epoch, the offline analog of ODB's wrap-around padding (\S\ref{sec:alignment}). All ranks read the same scalar lengths array from the oracle cache during DataLoader construction, so no per-step length broadcast is required at training time.

\paragraph{Oracle length cache for GMT-oracle / BMT-oracle (\S\ref{sec:e2e}, \S\ref{sec:related}).}
To make the GMT / BMT comparison maximally favorable to the offline token-budget family, we precompute a per-(dataset, transform policy, template, and cutoff) scalar cache of $\texttt{len(input\_ids)}$ for every sample by a single forward pass through the LLaMA-Factory preprocessing/augmentation, templating, tokenization, and visual-token-expansion pipeline; no token IDs are stored or reused at training time. Construction-time precompute cost on a single H20 was 31\,s for UltraChat (207{,}865 samples; $\sim$6{,}700 sam/s, pure tokenization), 137\,s for ShareGPT4o (57{,}284 samples; $\sim$418 sam/s, image-blob $+$ tokenization), and 55\,min for LLaVA (157{,}712 samples; $\sim$48 sam/s, image-IO-bound under per-sample JPEG decode). The same runtime path therefore also appears during oracle cache construction, but the reported training throughput excludes this one-time cost; during training, GMT/BMT still perform normal online preprocessing, augmentation, tokenization, and visual-token expansion. The cache is invalidated and must be rebuilt on any preprocessing or augmentation-policy change; ODB requires no separate length precompute because it forms batches from lengths observed online in the DataLoader path.

\paragraph{MM-Mix composition (case study, Section~\ref{sec:case_study}).}
The production multimodal mixture aggregates 7 open-source datasets (272{,}589 unique samples; 2 training epochs $\Rightarrow$ 545{,}178 sample-views) spanning OCR, VQA, and image captioning. The bimodal length distribution (many short OCR/VQA labels alongside long captioning samples) yields CV${\approx}0.8$ and short-sample fraction $f_s{\approx}0.37$.

\begin{table}[ht]
\centering
\caption{MM-Mix composition. All datasets are publicly available under their original licenses; ``LO''~$=$~LLaVA-OneVision distribution.}
\label{tab:mmmix_composition}
\footnotesize
\setlength{\tabcolsep}{4pt}
\begin{tabular}{@{}llrl@{}}
\toprule
Dataset & Source & Samples & Task / Modality \\
\midrule
IIIT5K              & LO     &   1{,}990 & English scene-text OCR \\
ORAND-CAR-A         & LO     &   1{,}999 & Synthetic digit OCR \\
BCTR-Splice (scene) & BCTR   &  11{,}904 & Chinese scene-text OCR \\
A-OKVQA             & LLaVA  &  17{,}056 & Multiple-choice visual QA \\
VQAv2               & GeneralVQA &  82{,}783 & Open-ended visual QA \\
Image-Textualization (filtered) & LO &  99{,}573 & Image$\rightarrow$text captioning \\
ShareGPT4o (caption subset) & LO &  57{,}284 & Long-form captioning / dialogue \\
\midrule
\textbf{Total (unique)} & & \textbf{272{,}589} & 7 datasets, 2 epochs $=$ 545{,}178 sample-views \\
\bottomrule
\end{tabular}
\end{table}

\paragraph{8B MM-Mix throughput sweep (Section~\ref{sec:case_study}).}
On Qwen3-VL-8B-Instruct (8$\times$H20, 20-min profiling window per configuration), Standard bs$=$1 reaches 7.55\,sam/s, Sorted bs$=$2 reaches 15.80\,sam/s ($2.09\times$), and ODB peaks at $L_{\max}{=}8192$ with 22.07\,sam/s ($2.92\times$). The $L_{\max}$ sweep at default I/O (\texttt{pf}=256, \texttt{nw}=4) is single-peaked: $L_{\max}{=}4096$ gives $2.42\times$, $L_{\max}{=}6144$ gives $2.87\times$, $L_{\max}{=}8192$ gives $2.92\times$, $L_{\max}{=}12288$ gives $2.80\times$. Aggressive prefetch values (\texttt{pf}$\geq$1024, lifting outstanding-depth $D=\texttt{pf}\times\texttt{nw}$ to $\geq{}4096$) regressed sample throughput on this 8B workload---with $L_{\max}{=}4096$, \texttt{pf}=1024/2048/4096 decreasing to $1.58\times/1.29\times/0.98\times$---because larger in-flight buffers stress 8B activation memory through wider length-grouped batches that lengthen step time more than they raise per-step sample count, consistent with the first-order memory model discussed in Limitations. We therefore use the default ODB configuration as the starting point for this workload.

\paragraph{MM-Mix churn-inclusive accounting and benchmark quality.}
For churn-inclusive accounting, we distinguish \emph{churn-exclusive} training throughput, which excludes offline preparation, from a conservative \emph{cache-inclusive cost lower bound} that adds only the measured scalar oracle-cache construction time. The MM-Mix GMT/BMT oracle cache was built on one H20 for 272{,}589 unique samples in 299.9\,s (309\,s end-to-end wall-clock), producing the per-sample post-pipeline \texttt{len(input\_ids)} cache used for oracle batch construction. This charge is still favorable to offline methods: it does not separately account for broader production overheads such as sample-store scans outside the measured prepass, materializing and validating sorted/bucketed orders, staging metadata across workers, or rebuilding artifacts when the mixture, template, transform policy, or cutoff changes. ODB has no separate scalar-length precompute and forms batches from lengths observed online in the DataLoader path.

\begin{table}[ht]
\centering
\caption{\textbf{MM-Mix full-epoch 2B case-study results.} Two nodes $\times$ 8 H20 GPUs, Qwen3-VL-2B Full FT, 3-seed mean$\pm$std. \texttt{sam/s} is train-split emitted samples divided by wall-clock; for ODB we recompute literal throughput from the train split and runtime because the fixed-batch counter is not meaningful for dynamic batches. Score is MMMU-MC choice-likelihood accuracy.}
\label{tab:mmmix_quality}
\scriptsize
\setlength{\tabcolsep}{3pt}
\resizebox{\linewidth}{!}{%
\begin{tabular}{@{}lcccc@{}}
\toprule
Method & sam/s & Speedup & Score & Val Loss \\
\midrule
Standard & $17.85 \pm 0.15$ & $1.00\times$ & $43.33 \pm 2.24$ & $0.9674 \pm 0.0035$ \\
Sorted & $20.62 \pm 2.08$ & $1.15\times$ & $43.65 \pm 2.32$ & $1.4028 \pm 0.0312$ \\
GMT-oracle & $26.68 \pm 2.09$ & $1.49\times$ & $49.14 \pm 0.65$ & $0.9731 \pm 0.0028$ \\
BMT-oracle & $28.66 \pm 0.29$ & $1.61\times$ & $48.27 \pm 0.60$ & $0.9732 \pm 0.0032$ \\
HFG-oracle & $19.46 \pm 2.48$ & $1.09\times$ & $43.49 \pm 1.67$ & $0.9656 \pm 0.0028$ \\
\midrule
ODB & $79.15 \pm 4.16$ & $4.43\times$ & $46.31 \pm 0.44$ & $1.0137 \pm 0.0029$ \\
\bottomrule
\end{tabular}
}
\end{table}

\paragraph{Per-config ODB hyperparameters (Table~\ref{tab:public_fullft}, footnote~\textsuperscript{b}).}
The per-config $(L_{\max}, \texttt{pf}, \texttt{buffer})$ tuples used in Table~\ref{tab:public_fullft}:
\begin{itemize}[nosep,leftmargin=1.2em]
    \item UltraChat 8B: $(12288, 1024, 1024)$
    \item UltraChat 2B: $(16384, 1024, 1024)$
    \item LLaVA 8B: $(12288, 256, 1024)$
    \item LLaVA 2B: $(8192, 1024, 1024)$
    \item ShareGPT4o 8B: $(12288, 256, 1024)$
    \item ShareGPT4o 2B: $(4096, 256, 1024)$
\end{itemize}
These tuples are the selected full-epoch configurations after the speed-first profiling rule in \S\ref{sec:eval}; full-epoch quality is reported separately in Table~\ref{tab:public_fullft}. The resulting updates-per-epoch and batch-shape statistics are reported below rather than constrained to a fixed universal step-count target.

\paragraph{Throughput decomposition (8B Full FT).}
Table~\ref{tab:throughput_decomp_8b} decomposes throughput into the per-step factors that dynamic batching changes. The columns are reported per-cell as 3-seed means: \texttt{sam/s}~$=$~train-split emitted samples / wall-clock; \texttt{tok/s}~$=$~real unpadded tokens / wall-clock; \texttt{upd/ep}~$=$~optimizer updates per epoch; \texttt{sam/upd}~$=$~emitted samples/update and \texttt{tok/upd}~$=$~real unpadded tokens/update; \texttt{pad\%}~$=$~$1{-}\sum L_{\text{real}} / \sum L_{\text{compute}}$ (cumulative padding fraction); \texttt{dl-wait\%} and \texttt{compute\%}~$=$~unhidden DataLoader wait and GPU-compute fractions of elapsed wall time (remainder is pipeline overlap and other; we report the unhidden component because pipeline overlap, when high, makes raw \texttt{nvidia-smi} utilization a misleading proxy). HuggingFace Trainer's native \texttt{train\_samples\_per\_second} reports $\texttt{world\_batch}\times\texttt{updates}/\texttt{runtime}$, which double-counts under dynamic batching where each update consumes a variable real-sample count.

\begin{table}[ht]
\centering
\caption{\textbf{Throughput decomposition on 8B Full FT.} For the Standard, ODB, GMT-oracle, BMT-oracle, and HFG-oracle cells of the 8B side of Table~\ref{tab:public_fullft}, this table reports literal \texttt{sam/s}, updates per epoch, emitted samples/update, real unpadded tokens/update, cumulative padding, and unhidden DataLoader-wait / compute fractions. For dynamic-batch rows, \texttt{sam/s} is recomputed as train-split emitted samples divided by wall-clock time, matching Table~\ref{tab:public_fullft}.}
\label{tab:throughput_decomp_8b}
\footnotesize
\setlength{\tabcolsep}{4pt}
\begin{tabular}{@{}llrrrrrrrr@{}}
\toprule
Dataset & Method & sam/s & tok/s & upd/ep & sam/upd & tok/upd & pad\% & dl-wait\% & compute\% \\
\midrule
UltraChat & Standard   &  5.77 &  6{,}901 & 24{,}684 &   8.00 &   9{,}566 &  0.0 & 0.01 & 99.9 \\
          & ODB        & 10.23 & 12{,}234 &  2{,}692 &  73.36 &  87{,}725 &  1.3 & 0.00 & 99.8 \\
          & GMT-oracle & 10.94 & 13{,}087 &  1{,}890 & 104.48 & 124{,}981 &  0.0 & 0.00 & 99.8 \\
          & BMT-oracle & 10.31 & 12{,}346 &  1{,}901 & 103.90 & 124{,}205 &  0.5 & 0.00 & 99.8 \\
          & HFG-oracle &  7.33 &  9{,}127 & 12{,}342 &  16.00 &  19{,}932 & 12.8 & 0.01 & 99.9 \\
\midrule
LLaVA     & Standard   & 14.38 &  7{,}298 &  2{,}342 &  63.97 &  32{,}469 & 32.5 & 0.00 & 99.7 \\
          & ODB        & 24.87 & 12{,}621 &    844 & 177.59 &  90{,}133 &  2.1 & 1.62 & 97.5 \\
          & GMT-oracle & 26.65 & 13{,}549 &    591 & 253.51 & 128{,}886 &  0.0 & 0.00 & 99.3 \\
          & BMT-oracle & 25.70 & 13{,}068 &    597 & 250.97 & 127{,}597 &  0.6 & 0.00 & 99.3 \\
          & HFG-oracle & 21.84 & 11{,}184 &  1{,}171 & 127.95 &  65{,}523 & 15.8 & 0.00 & 99.5 \\
\midrule
ShareGPT4o & Standard   &  2.37 &  3{,}535 &  6{,}803 &   8.00 &  11{,}949 &  0.0 & 0.01 & 99.9 \\
           & ODB        &  5.83 &  8{,}705 &  1{,}030 &  52.82 &  78{,}891 &  0.9 & 1.78 & 97.3 \\
           & GMT-oracle &  2.57 &  3{,}876 &  4{,}762 &  11.43 &  17{,}236 &  1.8 & 0.00 & 99.8 \\
           & BMT-oracle &  2.50 &  3{,}795 &  6{,}057 &   8.98 &  13{,}619 &  0.0 & 0.01 & 99.8 \\
           & HFG-oracle &  2.82 &  4{,}323 &  6{,}803 &   8.00 &  12{,}284 &  0.0 & 0.01 & 99.8 \\
\bottomrule
\end{tabular}
\end{table}

Three patterns are visible. (i)~\emph{Padding is the binding constraint on LLaVA, sample-count on UltraChat / ShareGPT4o.} Standard hits $32.5\%$ padding on LLaVA, whereas bs=1 Standard avoids padding on UltraChat and ShareGPT4o at the cost of only eight samples/update. ODB attacks the limiting factor in each regime: it reduces LLaVA padding to $2.1\%$ and raises \texttt{sam/upd} to 177.6, while increasing \texttt{sam/upd} by $9.2\times$ on UltraChat and $6.6\times$ on ShareGPT4o. (ii)~\emph{Oracle baselines differ in update geometry.} On LLaVA, GMT/BMT-oracle process roughly 251--254 samples/update and only 591--597 updates per epoch, versus ODB's 177.6 samples/update and 844 updates; HFG's randomized fixed-batch construction instead uses 128.0 samples/update with 15.8\% padding. On ShareGPT4o, HFG falls back to the bs=1 shape, while ODB reaches 52.8 samples/update. These regimes are interpreted jointly with validation loss and MMMU-MC rather than as throughput alone. (iii)~\emph{Pipeline starvation is not the dominant explanation.} Standard and oracle rows are compute-bound, and ODB's multimodal rows show only a small unhidden DataLoader-wait fraction ($\leq1.78\%$); thus the throughput differences primarily reflect \emph{batch shape} and update geometry, not raw I/O efficiency.

\paragraph{Throughput decomposition (2B Full FT).}
Table~\ref{tab:throughput_decomp_2b} reports the same decomposition for the 2B side of Table~\ref{tab:public_fullft}. It uses the same literal-throughput convention: \texttt{sam/s} is train-split emitted samples divided by wall-clock time, while token/update and DataLoader-wait fields are aggregated from the corresponding full-epoch runs.

\begin{table}[ht]
\centering
\caption{\textbf{Throughput decomposition on 2B Full FT.} For the Standard, ODB, GMT-oracle, BMT-oracle, and HFG-oracle cells of the 2B side of Table~\ref{tab:public_fullft}, this table reports literal \texttt{sam/s}, updates per epoch, emitted samples/update, real unpadded tokens/update, cumulative padding, and unhidden DataLoader-wait / compute fractions. Rows are 3-seed means over full-epoch runs.}
\label{tab:throughput_decomp_2b}
\footnotesize
\setlength{\tabcolsep}{4pt}
\begin{tabular}{@{}llrrrrrrrr@{}}
\toprule
Dataset & Method & sam/s & tok/s & upd/ep & sam/upd & tok/upd & pad\% & dl-wait\% & compute\% \\
\midrule
UltraChat & Standard   & 20.98 & 25{,}087 & 24{,}684 &   8.00 &   9{,}566 & 0.0 & 0.04 & 99.9 \\
          & ODB        & 36.91 & 44{,}129 &  2{,}003 &  98.59 & 117{,}886 & 1.7 & 0.01 & 99.6 \\
          & GMT-oracle & 39.44 & 47{,}153 &  3{,}983 &  49.58 &  59{,}272 & 0.0 & 0.01 & 99.8 \\
          & BMT-oracle & 35.84 & 42{,}848 &  1{,}901 & 103.90 & 124{,}224 & 0.4 & 0.00 & 99.8 \\
          & HFG-oracle & 27.28 & 33{,}447 & 24{,}684 &   8.00 &   9{,}808 & 0.0 & 0.05 & 99.8 \\
\midrule
LLaVA     & Standard   & 47.92 & 24{,}319 &  4{,}683 &  31.99 &  16{,}238 & 24.2 & 0.02 & 99.6 \\
          & ODB        & 82.42 & 41{,}830 &  1{,}259 & 119.01 &  60{,}399 &  1.6 & 0.01 & 98.5 \\
          & GMT-oracle & 79.44 & 40{,}375 &  2{,}496 &  60.02 &  30{,}505 &  0.0 & 0.02 & 99.3 \\
          & BMT-oracle & 75.64 & 38{,}456 &  2{,}502 &  59.89 &  30{,}448 &  0.1 & 0.01 & 99.4 \\
          & HFG-oracle & 69.52 & 35{,}837 &  2{,}342 &  63.97 &  32{,}977 & 14.5 & 0.01 & 99.4 \\
\midrule
ShareGPT4o & Standard   &  6.51 &  9{,}717 &  6{,}803 &   8.00 &  11{,}949 & 0.0 & 0.01 & 99.9 \\
           & ODB        & 16.09 & 24{,}027 &  2{,}719 &  20.01 &  29{,}892 & 0.4 & 3.08 & 95.5 \\
           & GMT-oracle &  7.03 & 10{,}599 &  4{,}762 &  11.43 &  17{,}236 & 1.8 & 0.01 & 99.8 \\
           & BMT-oracle &  7.03 & 10{,}731 &  4{,}763 &  11.43 &  17{,}447 & 2.1 & 0.01 & 99.9 \\
           & HFG-oracle &  7.71 & 11{,}833 &  6{,}803 &   8.00 &  12{,}284 & 0.0 & 0.01 & 99.8 \\
\bottomrule
\end{tabular}
\end{table}

The 2B decomposition shows the same batch-shape mechanism as the 8B table, with scale-specific details. Standard's bottleneck again flips between sam/upd (UltraChat / ShareGPT4o, bs=1) and padding (LLaVA, 24.2\%). ODB attacks the cell-specific bottleneck: on LLaVA it reduces padding to 1.6\% and raises samples/update from 32.0 to 119.0; on UltraChat and ShareGPT4o it raises the fixed-bs Standard rows from eight samples/update to 98.6 and 20.0 respectively. The oracle rows achieve competitive throughput through offline length-aware grouping, but with different update geometry: on LLaVA, GMT/BMT use about 60 samples/update, HFG keeps the fixed-batch shape at 64.0 samples/update with 14.5\% padding, and ODB uses 119.0 samples/update; on UltraChat, HFG keeps the bs=1 shape while GMT/BMT widen updates to 49.6/103.9 samples. DataLoader wait remains negligible except for ShareGPT4o ODB (3.1\%), where throughput is still dominated by the larger emitted-sample/update count rather than by I/O starvation. Thus the 2B table sharpens the 8B conclusion: ODB's gain is a batch-shape effect, not a Trainer-accounting artifact or raw I/O speed.

\section{HFG-oracle Randomized Fixed-Batch Baseline}
\label{app:hfg_baseline}

HFG-oracle instantiates the HuggingFace \texttt{group\_by\_length} family as a randomized fixed-batch oracle baseline. Each epoch samples a random permutation, partitions it into megabatches, sorts each megabatch by the oracle post-tokenization length, concatenates the megabatches, pads the index list to a multiple of world size, and stride-shards it across ranks. It uses the same scalar length cache as GMT/BMT-oracle but keeps a fixed batch size, separating epoch-level randomization from max-token scheduling while avoiding globally sorted epoch order. For HFG-oracle, bs is selected as the largest full-epoch-safe fixed batch size under the same profiling and full-epoch survival protocol used for Sorted; speedups normalize to the matching Standard rows in Table~\ref{tab:public_fullft}.

The HFG-oracle rows are reported directly in Table~\ref{tab:public_fullft} rather than duplicated in a second appendix table. They show that randomized length grouping controls for fully sorted epoch order without changing the main conclusions: under MMMU-MC, HFG remains in the same quality band as the other non-sorted LLaVA methods, but its fixed batch size cannot exploit the high-CV ShareGPT4o tail where ODB is $2.07\times$ faster at 8B (5.83 vs. 2.82 sam/s) and $2.09\times$ faster at 2B (16.09 vs. 7.71 sam/s). Like GMT/BMT-oracle, HFG inherits the cache-rebuild limitation under preprocessing, template, cutoff, or augmentation changes, whereas ODB observes lengths online after those transformations.

\section{CV/\texorpdfstring{$f_s$}{fs} Two-Feature Decomposition: Phenomenological Reference}
\label{app:cv_fs_decomp}

Section~\ref{sec:guidance} uses CV and the short-sample fraction $f_s$ qualitatively. We report here the explicit two-anchor pinning that motivated their selection, together with the methodological caveat that prevents us from positioning it as a predictive model.

The minimal two-feature linear form,
\begin{equation}
\hat{S}(\text{CV}, f_s) \;=\; 1 \;+\; \alpha\,\text{CV} \;+\; \beta\,f_s,
\label{eq:speedup_estimator}
\end{equation}
pinned on the two 2B Full FT workloads with both features measured (ShareGPT4o: $\text{CV}{=}1.00$, $f_s{\approx}0.01$, $S{=}2.47$; MM-Mix: $\text{CV}{=}0.80$, $f_s{\approx}0.37$, $S{=}4.43$), gives $\alpha{\approx}1.41$, $\beta{\approx}6.23$.

\paragraph{Scope of the two-feature fit.} $(\text{CV}, f_s)$ is a workload/configuration descriptor: $\text{CV}$ summarizes the tokenized length distribution, while $f_s=\Pr[\ell<L_{\max}/4]$ measures short-sample mass under the selected token budget. The headline cells are dominated by four dataset-level workload families rather than 14 independent locations in the $(\text{CV}, f_s)$ plane: 8B Full FT $\times 3$ + 2B Full FT $\times 3$ + 8B LoRA $\times 3$ + 2B LoRA $\times 3$ + MM-Mix at 2B/8B vary model scale or finetuning regime around a small set of length-distribution families. Treating all cells as independent would mostly add replication noise from model scale and finetuning regime. A leave-one-out cross-validation (LOOCV) $R^2$ computed over those cells would reflect cell-noise variance, not the predictive power of the two-feature form.

\paragraph{Scope.} Eq.~\ref{eq:speedup_estimator} should be read as a phenomenological reference within the calibrated range $\text{CV}\in[0.80, 1.00]$, $f_s\in[0.01, 0.37]$; it captures the separation between MM-Mix and ShareGPT4o that CV alone cannot, but it is neither a predictor nor a standalone contribution. Section~\ref{sec:guidance} instead uses CV ranking plus the $f_s$ deviation screen as qualitative guidance for the deployment recipes (ROI screen, outstanding-depth tuning, $L_{\max}$ binding).

\section{Auxiliary Generated-Answer MMMU Format-Degradation Analysis}
\label{app:sorted_format}

We empirically test the answer-format-degradation hypothesis (Section~\ref{sec:quality}) for LLaVA 8B Full FT Sorted, whose generated-answer MMMU score collapses to $5.52{\pm}0.18\%$ despite val\_loss being only $+11.6\%$ above Standard.
This appendix explains a parser-sensitive analysis and motivates the parser-free MMMU-MC protocol used in the main tables; it is not a main benchmark result.
We sample 120 MMMU validation questions across four subjects (Art, Math, Computer Science, History) and generate raw model outputs from representative LLaVA 8B Full FT checkpoints: Standard (bs$=$8), Sorted (bs$=$16; the longest-tail-safe fixed batch after bs$=$32 OOMs), and ODB at a conservative auxiliary setting ($L_{\max}{=}4096$, \texttt{pf}=1024). These generated-answer checkpoints are used only to study parser sensitivity; the main benchmark cells use MMMU-MC likelihood scoring. Decoding is identical greedy with max 256 new tokens; Table~\ref{tab:sorted_format_diag} reports response-length and answer-format statistics.

\begin{table}[ht]
\centering
\caption{Raw-output analysis on 120 MMMU samples under the generated-answer protocol. \emph{1-letter\%} is the fraction of stripped responses that are exactly a single A--H letter; \emph{verbose\%} is responses $>$ 20 chars after \texttt{<think>} stripping; \emph{extracted-acc\%} is the subset accuracy under the diagnostic answer-extraction rule.}
\label{tab:sorted_format_diag}
\footnotesize
\setlength{\tabcolsep}{6pt}
\begin{tabular}{@{}lrrrr@{}}
\toprule
Method & mean chars & 1-letter\% & verbose\% & extracted-acc\% \\
\midrule
Standard (bs=8)               & 102.5  & 75.0\% & 23.3\% & 22.5\% \\
\textbf{Sorted (bs=16)}       & 110.3  & \textbf{9.2\%}  & \textbf{90.8\%} & \textbf{3.3\%} \\
ODB                            & 115.6  & 76.7\% & 21.7\% & 20.0\% \\
\bottomrule
\end{tabular}
\end{table}

The 120-sample subset accuracies (Std $22.5\%$, ODB $20.0\%$, Sorted $3.3\%$) track the generated-answer full-evaluation pattern from the same checkpoints (Std $22.30\%$, ODB $22.00\%$, Sorted $5.52\%$) in rank ordering and gap magnitude. These generated-answer numbers are not used in the main Score columns, which use MMMU-MC likelihood.

\paragraph{Format-degradation pattern.} Standard and ODB answer in MMMU's expected single-letter format $\sim$$76\%$ of the time; \textbf{Sorted answers in single-letter format only $9.2\%$ of the time}, with $90.8\%$ of responses being verbose continuations. Inspecting the verbose Sorted outputs, the failure mode is not ``the model answers correctly with extra explanation''---instead, the model degenerates into echoing chat-template structure or asking a follow-up question rather than producing the multiple-choice answer.

\paragraph{Representative raw outputs (Art subject, ground truth ``C'').}
\begin{itemize}\setlength{\itemsep}{0pt}
\item \textbf{Standard / ODB}: \texttt{"C"} (single letter, MMMU regex extracts ``C'' $\to$ correct).
\item \textbf{Sorted}: \texttt{"user$\backslash$nWhat is the subject of the painting?"} (template echo + follow-up question; no valid A--H answer is extracted, yielding an incorrect prediction).
\end{itemize}
This pattern is consistent across all four subjects (Art, Math, Computer Science, History): verbose Sorted responses on MMMU questions are typically meta-questions or partial assistant utterances rather than usable answer letters.

\paragraph{Conclusion.} These measurements support the format-degradation hypothesis: Sorted's val\_loss is only $+11.6\%$ above Standard because next-token likelihood on natural-language continuations remains plausible, but \textbf{$>90\%$ of Sorted's generated MMMU responses are not in the answer format MMMU's exact-match scoring expects}. The generated-answer MMMU drop from $22.30\%$ (Std) to $5.52\%$ (Sorted) therefore primarily reflects output-format drift under parser-extracted scoring, not catastrophic divergence in language modeling. ODB's length-\emph{grouped} (rather than length-\emph{sorted}) formulation does not exhibit this drift ($76.7\%$ single-letter rate, on par with Standard $75.0\%$) in this auxiliary setting. The main evaluated throughput--quality comparison in Section~\ref{sec:quality} therefore uses parser-free MMMU-MC likelihood rather than parser-extracted generated answers; this appendix explains why the generated-answer analysis is not the benchmark result.

\section{LoRA Results}
\label{app:lora_results}

\begin{table}[H]
\centering
\caption{\textbf{LoRA Results.} Same comparison under LoRA fine-tuning (rank=8, target=all). 8B and 2B LoRA: 3-seed mean$\pm$std on the same 8$\times$H20 setup as Table~\ref{tab:public_fullft}. Bold marks the highest emitted-sample throughput among online/no-cache rows; offline-oracle throughput cells are unbolded comparators. Score and Val Loss are reported as reference quality checks without bolding.}
\label{tab:public_lora}
\scriptsize
\setlength{\tabcolsep}{1.2pt}
\resizebox{\linewidth}{!}{%
\begin{tabular}{@{}llcccc@{\hspace{4pt}}cccc@{}}
\toprule
 & & \multicolumn{4}{c}{\textbf{8B (LoRA)}} & \multicolumn{4}{c}{\textbf{2B (LoRA)}} \\
\cmidrule(lr){3-6} \cmidrule(lr){7-10}
Dataset & Method & sam/s & Speedup & Score & Val Loss\textsuperscript{a} & sam/s & Speedup & Score & Val Loss \\
\midrule
\multirow{7}{*}{\shortstack[l]{UltraChat\\CV=0.48\\MMLU}}
& Standard     & $7.87 \pm 0.00$            & 1.00$\times$         & $76.12 \pm 0.06$\%        & $0.858 \pm 0.001$ & $25.16 \pm 0.07$ & 1.00$\times$ & $59.90 \pm 0.05$\% & $1.021 \pm 0.003$\textsuperscript{b} \\
& Sorted       & $10.15 \pm 0.01$ & 1.29$\times$ & $76.25 \pm 0.01$\%        & $0.877 \pm 0.001$ & $32.27 \pm 0.04$ & 1.28$\times$ & $60.04 \pm 0.03$\% & $1.043 \pm 0.003$ \\
& Packing      & $12.57 \pm 0.01$           & 1.60$\times$         & $76.49 \pm 0.04$\% & $1.122 \pm 0.000$ & $42.14 \pm 0.06$ & 1.68$\times$ & $60.61 \pm 0.05$\% & $1.068 \pm 0.003$ \\
& GMT-oracle\textsuperscript{c} & $13.48 \pm 0.02$ & 1.71$\times$ & $76.27 \pm 0.04$\% & $0.867 \pm 0.001$ & $45.49 \pm 0.06$ & 1.81$\times$ & $59.99 \pm 0.05$\% & $1.034 \pm 0.003$ \\
& BMT-oracle\textsuperscript{c} & $12.27 \pm 0.02$ & 1.56$\times$ & $76.36 \pm 0.03$\% & $0.875 \pm 0.001$ & $40.25 \pm 0.10$ & 1.60$\times$ & $60.25 \pm 0.04$\% & $1.061 \pm 0.003$ \\
& HFG-oracle\textsuperscript{c} & $10.79 \pm 0.03$ & 1.37$\times$ & $76.12 \pm 0.13$\% & $0.858 \pm 0.001$ & $29.81 \pm 0.09$ & 1.18$\times$ & $60.05 \pm 0.08$\% & $1.030 \pm 0.003$ \\
& \textbf{ODB} & $\mathbf{12.45 \pm 0.03}$ & \textbf{1.58$\times$} & $76.34 \pm 0.07$\% & $0.888 \pm 0.001$ & $\mathbf{41.63 \pm 0.11}$ & \textbf{1.65$\times$} & $60.10 \pm 0.10$\% & $1.045 \pm 0.003$ \\
\midrule
\multirow{6}{*}{\shortstack[l]{LLaVA\\CV=0.29\\MMMU-MC}}
& Standard     & $18.95 \pm 0.04$                          & 1.00$\times$         & $55.76 \pm 0.48$\% & $1.098 \pm 0.001$ & $53.98 \pm 0.05$ & 1.00$\times$        & $45.02 \pm 0.65$\%  & $1.299 \pm 0.003$ \\
& Sorted       & $24.91 \pm 0.03$\textsuperscript{d}       & 1.31$\times$         & $56.39 \pm 0.41$\%          & $1.221 \pm 0.002$ & $74.04 \pm 0.30$ & 1.37$\times$ & $44.82 \pm 0.12$\%\textsuperscript{e} & $1.339 \pm 0.003$\textsuperscript{e} \\
& GMT-oracle\textsuperscript{c} & $32.34 \pm 0.08$ & 1.71$\times$ & $56.82 \pm 0.43$\% & $1.135 \pm 0.001$ & $94.77 \pm 0.53$ & 1.76$\times$ & $43.84 \pm 0.38$\% & $1.376 \pm 0.004$ \\
& BMT-oracle\textsuperscript{c} & $29.99 \pm 0.02$ & 1.58$\times$ & $56.63 \pm 0.25$\% & $1.121 \pm 0.001$ & $90.86 \pm 0.34$ & 1.68$\times$ & $44.04 \pm 0.38$\% & $1.374 \pm 0.004$ \\
& HFG-oracle\textsuperscript{c} & $26.67 \pm 0.05$ & 1.41$\times$ & $57.17 \pm 0.31$\% & $1.124 \pm 0.001$ & $79.86 \pm 0.08$ & 1.48$\times$ & $43.84 \pm 0.07$\% & $1.331 \pm 0.003$ \\
& \textbf{ODB} & $\mathbf{30.98 \pm 0.07}$                & \textbf{1.63$\times$} & $56.47 \pm 0.00$\%          & $1.151 \pm 0.001$ & $\mathbf{94.60 \pm 1.01}$ & \textbf{1.75$\times$}        & $44.63 \pm 0.18$\%           & $1.301 \pm 0.003$ \\
\midrule
\multirow{6}{*}{\shortstack[l]{ShareGPT4o\\CV=1.00\\MMMU-MC}}
& Standard     & $2.83 \pm 0.01$            & 1.00$\times$         & $54.24 \pm 0.00$\%       & $1.217 \pm 0.006$ & $7.08 \pm 0.02$ & 1.00$\times$ & $41.77 \pm 0.20$\% & $1.346 \pm 0.005$\textsuperscript{b} \\
& Sorted       & $2.92 \pm 0.00$           & 1.03$\times$         & $54.55 \pm 0.41$\%       & $1.235 \pm 0.005$ & $7.30 \pm 0.03$ & 1.03$\times$ & $42.47 \pm 0.12$\% & $1.357 \pm 0.005$ \\
& GMT-oracle\textsuperscript{c} & $2.98 \pm 0.02$ & 1.05$\times$ & $54.55 \pm 0.27$\% & $1.218 \pm 0.006$ & $7.44 \pm 0.03$ & 1.05$\times$ & $41.92 \pm 0.24$\% & $1.347 \pm 0.005$ \\
& BMT-oracle\textsuperscript{c} & $2.98 \pm 0.02$ & 1.05$\times$ & $54.35 \pm 0.20$\% & $1.218 \pm 0.006$ & $7.46 \pm 0.03$ & 1.05$\times$ & $41.76 \pm 0.20$\% & $1.347 \pm 0.005$ \\
& HFG-oracle\textsuperscript{c} & $3.40 \pm 0.01$ & 1.20$\times$ & $54.19 \pm 0.27$\% & $1.217 \pm 0.006$ & $8.28 \pm 0.03$ & 1.17$\times$ & $41.53 \pm 0.31$\% & $1.346 \pm 0.005$ \\
& \textbf{ODB} & $\mathbf{6.81 \pm 0.04}$ & \textbf{2.41$\times$} & $54.98 \pm 0.18$\% & $1.235 \pm 0.006$ & $\mathbf{17.75 \pm 0.23}$ & \textbf{2.51$\times$} & $44.16 \pm 0.44$\% & $1.382 \pm 0.004$ \\
\bottomrule
\end{tabular}}

\vspace{1mm}
\begin{minipage}{\linewidth}
\raggedright\scriptsize
\textsuperscript{a}8B LoRA throughput, benchmark, and Val Loss use 3-seed LoRA means. Packing uses the packed-sequence denominator as in Table~\ref{tab:public_fullft}, footnote~\textsuperscript{d}.\\
\textsuperscript{b}2B LoRA UltraChat/ShareGPT4o throughput, downstream benchmark, and Val Loss are 3-seed means; Val Loss uses the same checkpoints as the throughput/benchmark aggregation.\\
\textsuperscript{c}GMT-/BMT-/HFG-oracle rows use the scalar oracle length cache for batch construction; speedups normalize to the matching Standard row in this table.\\
\textsuperscript{d}LLaVA 8B LoRA Sorted: full 3-seed mean; I/O sensitivity is discussed in Section~\ref{sec:io_sensitivity}.\\
MMMU-MC uses choice-likelihood scoring on the 850 letter-labeled validation rows. ODB remains in the same Standard-comparable band under LoRA on LLaVA (8B: $56.47\%$ vs $55.76\%$; 2B: $44.63\%$ vs $45.02\%$), is nominally above Standard on ShareGPT4o at both scales, and the 8B UltraChat ODB row reports MMLU $76.34\pm0.07\%$.\\
\textsuperscript{e}2B LoRA LLaVA throughput, Score, and Val Loss are aggregated from the 3-seed LoRA evaluation. The main multimodal comparison uses parser-free MMMU-MC and should be read together with Val Loss.
\end{minipage}
\end{table}

\section{Additional Ablations: Buffer Size and Loss Scaling Mode}
\label{app:more_ablations}

\paragraph{Buffer size.}
The grouping buffer determines how many samples the collate worker accumulates before forming groups; larger buffers enable tighter length-based grouping.
Table~\ref{tab:ablation_buf} sweeps the buffer on ShareGPT4o (CV=1.00, the most grouping-pressured dataset). Throughput improves sharply up to \texttt{buffer}=500 and remains in the high-throughput range around 1024--2000 on 2B (16.77--17.10\,sam/s) and peaks at 1024 on 8B (6.38\,sam/s), while padding is at or below 0.6\% at \texttt{buffer}$\geq$1024. Lower-CV datasets place weaker demands on the buffer, so we do not ablate them separately. This profiling sweep is used for throughput/shape diagnosis; full quality claims remain in the main and LoRA result tables, and the default \texttt{buffer}=1024 remains a low-padding main-sweep setting.

\begin{table}[ht]
\centering
\caption{Ablation: buffer size on ShareGPT4o profiling windows ($L_{\max}{=}4096$ for 2B, $8192$ for 8B; 8$\times$H20, single seed). Speedups normalize to the matching 20-minute Standard profiling baseline.}
\label{tab:ablation_buf}
\footnotesize
\setlength{\tabcolsep}{5pt}
\begin{tabular}{@{}llcccc@{}}
\toprule
Scale & Buffer & padding\% & sam/s & vs Std & Loss \\
\midrule
2B & 10 & 3.0\% & 8.46 & 1.28$\times$ & 1.364 \\
2B & 50 & 1.6\% & 12.20 & 1.84$\times$ & 1.375 \\
2B & 100 & 1.0\% & 13.73 & 2.07$\times$ & 1.350 \\
2B & 500 & 0.5\% & 15.69 & 2.37$\times$ & 1.346 \\
2B & 1024$^{*}$ & 0.4\% & 16.77 & 2.53$\times$ & 1.341 \\
2B & 2000 & 0.3\% & \textbf{17.10} & 2.58$\times$ & 1.366 \\
8B & 10 & 7.8\% & 2.87 & 1.25$\times$ & 1.285 \\
8B & 50 & 4.5\% & 4.21 & 1.83$\times$ & 1.283 \\
8B & 100 & 2.8\% & 5.03 & 2.19$\times$ & 1.283 \\
8B & 500 & 0.9\% & 5.25 & 2.29$\times$ & 1.274 \\
8B & 1024$^{*}$ & 0.6\% & \textbf{6.38} & 2.78$\times$ & 1.269 \\
8B & 2000 & 0.5\% & 5.98 & 2.61$\times$ & 1.268 \\
\bottomrule
\end{tabular}
\vspace{1mm}

{\scriptsize $^{*}$default configuration.}
\end{table}

\paragraph{Loss scaling mode.}
ODB supports three gradient scaling strategies:
(1)~\emph{Sample-level} $\mathcal{L}_{\text{scaled}} = \mathcal{L} \cdot (n_{\text{local}}/n_{\text{total}}) \cdot W$, with $n_{\text{total}}$ piggybacked on the first \texttt{all\_gather};
(2)~\emph{Approximate token-level} same form with $t_{\text{local}}/t_{\text{total}}$, post-alignment tokens estimated as $t_{\text{adj}} \approx n_{\text{adj}} \bar{t}$;
(3)~\emph{Exact token-level} uses the primary counts when alignment is a no-op and otherwise performs a deterministic second \texttt{all\_gather} of one \texttt{max\_groups}-length token-count vector per rank to re-broadcast post-alignment counts.
All main experiments use mode~(3); Table~\ref{tab:loss_scaling} shows that the three modes have similar throughput in short profiling windows, with differences within about 0.2\% on 2B and 1.0\% on 8B relative to sample-level scaling. Exact token-level scaling remains the conservative default for final runs because it satisfies Eq.~\ref{eq:exactness} bit-precisely.

\begin{table}[ht]
\centering
\caption{Loss scaling ablation (ShareGPT4o profiling windows, 8$\times$H20, single seed).}
\label{tab:loss_scaling}
\footnotesize
\setlength{\tabcolsep}{5pt}
\begin{tabular}{@{}llcccc@{}}
\toprule
Scale & Scaling mode & Loss & Unscaled loss & sam/s & Speed vs sample \\
\midrule
2B & Sample-level & 1.342 & 1.317 & 16.70 & --- \\
2B & Approx token & 1.292 & 1.317 & 16.70 & +0.0\% \\
2B & Exact token & 1.341 & 1.317 & 16.73 & +0.2\% \\
8B & Sample-level & 1.269 & 1.283 & 6.39 & --- \\
8B & Approx token & 1.211 & 1.279 & 6.45 & +1.0\% \\
8B & Exact token & 1.269 & 1.283 & 6.37 & -0.3\% \\
\bottomrule
\end{tabular}
\end{table}

\section{DataLoader I/O Sensitivity (Full Sweep)}
\label{app:io_sensitivity}

\begin{table}[H]
\centering
\caption{I/O sensitivity: single-seed 20-minute profiling throughput (sam/s) vs.\ \texttt{num\_workers} for Standard and default-join ODB across datasets and scales (8$\times$H20). This diagnostic sweep isolates input-pipeline depth at fixed configurations; main-table cells use the full-epoch protocol in \S\ref{sec:eval}.}
\label{tab:io}
\footnotesize
\setlength{\tabcolsep}{2pt}
\begin{tabular}{@{}l|ccccc|ccccc@{}}
\toprule
& \multicolumn{5}{c|}{Standard} & \multicolumn{5}{c}{ODB} \\
\texttt{nw} & 0 & 1 & 2 & 4 & 8 & 0 & 1 & 2 & 4 & 8 \\
\midrule
2B UltraChat & 20.6 & 20.9 & 20.9 & 20.9 & 20.9 & 36.8 & 36.5 & 36.7 & 36.8 & 37.0 \\
2B LLaVA & 44.7 & 48.7 & 48.9 & 48.6 & 48.4 & 80.2 & 72.7 & 78.3 & 80.6 & 85.1 \\
2B ShareGPT4o & 4.9 & 6.6 & 6.6 & 6.6 & 6.5 & 16.8 & 16.1 & 16.3 & 16.7 & 17.0 \\
8B UltraChat & 5.7 & 5.8 & 5.8 & 5.8 & 5.8 & 9.9 & 10.2 & 9.9 & 9.9 & 9.9 \\
8B LLaVA & 14.0 & 14.4 & 14.4 & 14.4 & 14.0 & 25.1 & 24.2 & 24.8 & 25.2 & 25.6 \\
8B ShareGPT4o & 2.1 & 2.3 & 2.3 & 2.3 & 2.3 & 6.4 & 6.4 & 6.4 & 6.4 & 6.4 \\
\bottomrule
\end{tabular}
\end{table}

Across the sweep, Standard is mostly flat after one or two workers, and ODB remains faster than Standard at every worker count. The strongest worker sensitivity is the expected \texttt{nw}=0 penalty for fixed-batch Standard on ShareGPT4o and a small ODB dip at \texttt{nw}=1--2 on LLaVA; by \texttt{nw}=4 the ODB rows are within about 5\% of their best values. This supports the operational rule: keep \texttt{nw}$\geq$4 as a portable default, start from \texttt{prefetch\_factor}=256, and tune per configuration rather than assuming that more workers alone explain ODB's throughput.

\section{Outstanding Depth Clamp Validation}
\label{app:clamp}

Section~\ref{sec:ablation} defines the outstanding depth as $D = \max(\texttt{pf} \times \texttt{nw}, \texttt{buffer\_size})$. When $\texttt{pf} \times \texttt{nw} < \texttt{buffer\_size}$, ODB's reset logic injects extra indices into the worker queue so that the collate process can assemble a full group---effectively clamping the in-flight sample count to $\texttt{buffer\_size}$. This appendix empirically confirms the clamp behaviour.

Fixing \texttt{buffer\_size}=1024 and \texttt{nw}=4, we measured \texttt{pipeline\_overlap} for $\texttt{pf} \in \{32, 64, 128\}$ on all three datasets and both scales. Every such \texttt{pf} has nominal depth $\texttt{pf} \times \texttt{nw} \in \{128, 256, 512\}$, all strictly below the buffer, so all three points share the same effective $D{=}1024$. Table~\ref{tab:clamp} reports the observed variation under this clamp.

\begin{table}[ht]
\centering
\caption{Clamp validation: \texttt{pipeline\_overlap} for \texttt{pf}$<256$ (\texttt{nw}=4, \texttt{buffer}=1024). The three clamped \texttt{pf} values share the same effective depth; small differences reflect profiling noise/workload variation rather than a change in effective depth.}
\label{tab:clamp}
\footnotesize
\setlength{\tabcolsep}{5pt}
\begin{tabular}{@{}lcccc@{}}
\toprule
Scale/Dataset & \texttt{pf}=32 & \texttt{pf}=64 & \texttt{pf}=128 & std \\
\midrule
2B UltraChat & 0.9967 & 0.9967 & 0.9960 & 0.0004 \\
2B LLaVA & 0.9426 & 0.9441 & 0.9562 & 0.0074 \\
2B ShareGPT4o & 0.9663 & 0.9671 & 0.9626 & 0.0024 \\
8B UltraChat & 0.9992 & 0.9993 & 0.9993 & 0.0000 \\
8B LLaVA & 0.9856 & 0.9863 & 0.9843 & 0.0010 \\
8B ShareGPT4o & 1.0000 & 1.0000 & 1.0000 & 0.0000 \\
\bottomrule
\end{tabular}
\end{table}

Because low-\texttt{pf} points are equivalent under ODB's clamp, we start the $D$ sweep in Section~\ref{sec:ablation} from $D{=}1024$ (the smallest non-clamped depth at \texttt{nw}=4, \texttt{buffer}=1024).

\section{Join Mode: Strict Per-Iteration Zero-Discard and Throughput Trade-off}
\label{app:join_tradeoff}

This appendix elaborates Theorem~\ref{thm:strict-zero}: ODB's default join-mode
termination gives a strict per-iteration $\eta_{\text{logical}}{=}0$ guarantee
(identity-level zero-discard), while non-join termination gives the
cumulative-count $\eta_{\text{quota}}{=}0$ guarantee of
Theorem~\ref{thm:zero-discard}. We quantify the throughput difference below.

\paragraph{Formal proof of Theorem~\ref{thm:strict-zero}.}
Let $M = W \cdot \lceil N/W \rceil$ be the total sampler view count and
$P = M - N$ the deterministic tail-padding overhead, so each rank starts
with quota $q = M/W = \lceil N/W \rceil$
(\texttt{DistributedSampler}, \texttt{drop\_last=False}; App.~\ref{app:proofs_state}).
Define the per-rank invariant
$\text{emitted}_r + \text{outstanding}_r + \text{remaining}_r = q$,
where $\text{outstanding}_r = |Q_r|+|B_r|$ and
$\text{remaining}_r=|R_r|$; this holds at every state transition by
Lemma~\ref{lem:no_leak} (No-Leak).
In join mode, the \texttt{done\_event} is set inside an
\texttt{all\_gather} barrier whose predicate is
$\forall r: \text{remaining}_r = 0 \wedge \text{outstanding}_r = 0$
(implemented by the drain-then-signal join predicate). Therefore at
termination $\text{emitted}_r = q$ for every rank, and
$\sum_r \text{emitted}_r = W \cdot q = M$ \emph{sampler views}.
By Lemma~\ref{lem:no_leak}, each emitted view appears in exactly one
batch, so $\eta_{\text{logical}} = 0$ over the sampler-view multiset within
a single logical iteration. The corresponding identity-level statement is
$\bigcup_r \mathrm{ids}(\text{emitted}_r) = \mathcal{I}$ (every one of the
$N$ dataset identities is emitted at least once); the $P$ surplus emits
are deterministic padding views (re-uses of shuffled-prefix identities),
not distinct dataset content. \qed

Bounded termination (Theorem~\ref{thm:deadlock-free}) is preserved
because the join-mode \texttt{all\_gather} predicate is computed from the
same broadcast tensor on all ranks (Lemma~\ref{lem:uniform_gather}); the
only difference vs.\ non-join is that the rank that first finishes its
quota waits on the same barrier instead of returning $-1$.

\paragraph{Empirical throughput cost.}
We compare default join and opt-in non-join on representative full-epoch ODB configurations from the main training stack. For each configuration, both modes use the same model, dataset, seed, $L_{\max}$, $D$, launch mode, and training hyperparameters, changing only the termination flag. Table~\ref{tab:join_tradeoff} reports literal emitted-sample throughput; the main result tables report benchmark and validation-loss metrics under default join-mode ODB.

\begin{table}[H]
\centering
\caption{Default join vs.\ opt-in non-join on representative full-epoch ODB training configurations (3 seeds, identical hyperparameters except termination flag). Throughput cells are mean$\pm$std literal emitted-sample sam/s; Join/Non and $\Delta$ are seed-paired ratios averaged before rounding the throughput columns.}
\label{tab:join_tradeoff}
\scriptsize
\setlength{\tabcolsep}{3pt}
\begin{tabular}{@{}lllrrrr@{}}
\toprule
Setting & Scale & Dataset & Default join & Opt-in non-join & Join/Non & $\Delta$ \\
\midrule
1n FFT & 2B & LLaVA      & 82.42$\pm$0.53 & 82.73$\pm$0.36 & 0.9963 & -0.37\% \\
1n FFT & 2B & ShareGPT4o & 16.09$\pm$0.21 & 16.10$\pm$0.04 & 0.9995 & -0.05\% \\
1n FFT & 2B & UltraChat  & 36.91$\pm$0.19 & 36.90$\pm$0.15 & 1.0002 & +0.02\% \\
1n FFT & 8B & LLaVA      & 24.87$\pm$0.09 & 24.91$\pm$0.06 & 0.9982 & -0.18\% \\
1n FFT & 8B & ShareGPT4o &  5.83$\pm$0.04 &  5.83$\pm$0.04 & 0.9993 & -0.07\% \\
1n FFT & 8B & UltraChat  & 10.23$\pm$0.03 & 10.23$\pm$0.03 & 1.0001 & +0.01\% \\
\midrule
1n LoRA & 2B & LLaVA      & 94.60$\pm$1.01 & 93.84$\pm$0.83 & 1.0081 & +0.81\% \\
1n LoRA & 2B & ShareGPT4o & 17.75$\pm$0.23 & 17.69$\pm$0.26 & 1.0033 & +0.33\% \\
1n LoRA & 2B & UltraChat  & 41.63$\pm$0.11 & 41.28$\pm$0.27 & 1.0084 & +0.84\% \\
1n LoRA & 8B & LLaVA      & 30.98$\pm$0.07 & 31.00$\pm$0.10 & 0.9995 & -0.05\% \\
1n LoRA & 8B & ShareGPT4o &  6.81$\pm$0.04 &  6.81$\pm$0.03 & 1.0000 & -0.00\% \\
1n LoRA & 8B & UltraChat  & 12.45$\pm$0.03 & 12.45$\pm$0.06 & 1.0002 & +0.02\% \\
\midrule
2n FFT & 2B & LLaVA      & 148.80$\pm$0.78 & 150.52$\pm$1.88 & 0.9886 & -1.14\% \\
2n FFT & 2B & ShareGPT4o &  31.35$\pm$0.37 &  31.13$\pm$0.25 & 1.0070 & +0.70\% \\
2n FFT & 2B & UltraChat  &  70.15$\pm$1.86 &  70.49$\pm$1.58 & 0.9955 & -0.45\% \\
2n FFT & 8B & LLaVA      &  45.72$\pm$0.35 &  45.75$\pm$0.35 & 0.9993 & -0.07\% \\
2n FFT & 8B & ShareGPT4o &  10.32$\pm$0.46 &  10.62$\pm$0.06 & 0.9708 & -2.92\% \\
2n FFT & 8B & UltraChat  &  18.80$\pm$0.03 &  18.35$\pm$1.23 & 1.0278 & +2.78\% \\
\midrule
2n MM-Mix & 2B & MM-Mix & 79.15$\pm$4.16 & 82.54$\pm$3.13 & 0.9612 & -3.88\% \\
\bottomrule
\end{tabular}
\end{table}

\paragraph{Interpretation.}
Across the 19 workload-level rows, the average join/non-join ratio is
0.9981 (mean $\Delta=-0.19\%$). Single-node rows range from $-0.37\%$ to
$+0.84\%$; the wider two-node/MM-Mix range is $-3.88\%$ to $+2.78\%$. Thus the
drain-before-finish barrier is not a material throughput bottleneck in these
main training settings. We therefore use join mode as the default for the
reported training rows; non-join remains an opt-in termination choice when
cumulative sample-quota closure is needed but the training stack cannot support
the join-style drain-before-finish protocol.

\paragraph{Deployment guidance.}
The reported ODB rows in the main result tables use join mode so that identity
coverage is part of the main experimental contract. The two-tier guarantee still
maps cleanly onto deployment choices:
\textbf{(default join)} for workloads where per-iteration sample composition is
part of the algorithm contract (e.g.\ curriculum schedules, RLHF rollouts,
bandit-style data selection);
\textbf{(opt-in non-join)} only for constrained training-stack integrations that
cannot support the drain-before-finish join protocol while still requiring
cumulative sample-quota closure by Theorem~\ref{thm:zero-discard} (with
empirical checks in Cor.~\ref{cor:eta_zero}). The throughput
deltas in Table~\ref{tab:join_tradeoff} are small (mean workload-level
delta $-0.19\%$, range $-3.88\%$ to $+2.78\%$ across the 19 main ODB cells), while
new workloads should re-profile the selected $D$ and $L_{\max}$ under the chosen
termination mode.

\end{document}